\documentclass[reqno]{amsart}

\usepackage{fullpage,amsrefs,amssymb,color}
\usepackage[colorlinks=false]{hyperref}

\usepackage{mathrsfs} 

\title{On the asymptotic behavior of static perfect fluids} 

\keywords{Einstein--Euler equations, spherical symmetry, static, asymptotics, dynamical system, quasi-asymptotically flat, asymptotically conical, ADM mass\\
Appears in \emph{Annales Henri Poincar\'e} (2019), DOI \url{https://doi.org/10.1007/s00023-018-00758-z}}


\author[L.~Andersson]{Lars~Andersson}
\address{Albert Einstein Institute
(Max Planck Institute for Gravitational Physics),
Am M\"uhlenberg 1,
14476 Potsdam, Germany}
\email{\href{mailto:laan@aei.mpg.de}{laan@aei.mpg.de}}

\author[A.~Burtscher]{Annegret Y.\ Burtscher} 
\address{Department of Mathematics, Rutgers, The State University of New Jersey, 110 Frelinghuysen Rd., Piscataway, NJ 08854-8019, United States of America\\ \newline
Current address: Department of Mathematics, Radboud University, PO Box 9010, Postvak 59, 6500 GL Nijmegen, The Netherlands}
\email{\href{mailto:burtscher@math.ru.nl}{burtscher@math.ru.nl}}



\numberwithin{equation}{section}

\theoremstyle{plain}
\newtheorem{theorem}{Theorem}[section]
\newtheorem{corollary}[theorem]{Corollary}
\newtheorem{proposition}[theorem]{Proposition}
\newtheorem{lemma}[theorem]{Lemma}

\theoremstyle{definition}
\newtheorem{definition}[theorem]{Definition}

\theoremstyle{remark}
\newtheorem{remark}[theorem]{Remark}


\def\RR{\mathbb{R}}


\def\gab{g^{(\alpha,\beta)}}

\newcommand{\G}[2]{\ensuremath{\Gamma^{#1}_{#2}}} 


\begin{document}


\begin{abstract}
 Static spherically symmetric solutions to the Einstein--Euler equations with prescribed central densities are known to exist, be unique and smooth for reasonable equations of state. Some criteria are also available to decide whether solutions have finite extent (stars with a vacuum exterior) or infinite extent. In the latter case, the matter extends globally with the density approaching zero at infinity. The asymptotic behavior largely depends on the equation of state of the fluid and is still poorly understood. While a few such unbounded solutions are known to be asymptotically flat with finite ADM mass, the vast majority are not. We provide a full geometric description of the asymptotic behavior of static spherically symmetric perfect fluid solutions with linear equations of state and polytropic-type equations of state with index $n>5$. In order to capture the asymptotic behavior, we introduce a notion of scaled quasi-asymptotic flatness, which encompasses the notion of asymptotic conicality. In particular, these spacetimes are asymptotically simple.
\end{abstract}

\maketitle



\section{Introduction}
\label{intro}

 Perfect fluids in general relativity are described by the Einstein--Euler equations, i.e.,
\begin{align}\label{EE}
 G_{\alpha\beta} = 8 \pi T_{\alpha\beta}, \qquad \nabla_\alpha T^{\alpha\beta} = 0,
\end{align}
 where $G_{\alpha\beta} = R_{\alpha\beta} - \frac{1}{2} \, R \, g_{\alpha\beta}$ is the Einstein tensor and $T_{\alpha\beta}$ is the energy-momentum tensor of the fluid. The latter is given by
\[
 T_{\alpha\beta} = (\rho + p) u_\alpha u_\beta + p \, g_{\alpha\beta},
\]
 where $\rho$ denotes the proper energy density, $p$ the pressure and $u^\alpha$ the velocity vector normalized to $u^\alpha u_\alpha = -1$.  The gravitational constant and the speed of light are normalized, i.e., $G=c=1$. The system~\eqref{EE} is underdetermined unless we prescribe a so-called equation of state, $p=p(\rho)$, relating the pressure and proper energy density.
 
\subsection{Spherical symmetry and staticity}
 In the present context, we are primarily interested in static solutions of \eqref{EE}. Such solutions can be viewed as idealized models of stars when they have compact support. In this case, the interior region is described by a perfect fluid and the exterior region is given by an asymptotically flat vacuum spacetime.
 A fundamental result, previously known as the ``fluid ball conjecture'', states that static asymptotically flat spacetimes with perfect fluid sources are spherically symmetric. This conjecture was verified for solutions with positive density $\rho>0$ satisfying $\frac{dp}{d\rho} \geq 0$ by Masood-ul-Alam \cite{MuA}, building upon work of Lindblom and Masood-ul-Alam~\cites{Lin:static,LMuA:static,MuA:static1,MuA:static2} and Beig and Simon~\cites{BS:static1,BS:static2}. It is therefore natural to restrict our attention to not only static but also spherically symmetric solutions of \eqref{EE}. We do, however, not limit our analysis to the standard asymptotically flat situation because it turns out to be a very rigid assumption when dealing with perfect fluids in general relativity. Instead, we also allow solutions with a slower falloff rate and a conical angle at radial infinity.

 Let us recall the setup of \eqref{EE} in the case of spherical symmetry and staticity. The static and spherically symmetric situation amounts to looking at metrics in polar coordinates $(t,r,\theta,\phi)$ of the form
\[
 g = - e^{2\nu(r)} dt^2 + e^{2\lambda(r)} dr^2 + r^2 (d\theta^2 + \sin^2 \theta d\phi^2),
\]
 with unknown metric functions $\nu,\lambda$. From the system \eqref{EE} one obtains that energy momentum conservation is described by the equation
\begin{align}\label{inteq1}
 \frac{d\nu}{dr} = - \frac{dp}{dr} (p+\rho)^{-1}.
\end{align}
 On the other hand, integrating the field equations \eqref{EE} gives $e^{-2\lambda} = 1 - \frac{2m}{r}$, where $m=m(r)$ denotes the mass $m=m(r)$ up to the radius $r$, i.e., 
\begin{align}\label{inteq2}
 m(r) = 4 \pi \int_0^r s^2 \rho(s) \, ds.
\end{align}
 The Einstein--Euler system~\eqref{EE} in spherical symmetry therefore reduces to two coupled nonlinear ordinary differential equations for the mass function $m = m(r)$ and pressure $p=p(\rho(r))$ of the form 
\begin{subequations} \label{staticEE}
\begin{align}
 \frac{dm}{dr} &= 4 \pi r^2 \rho, \label{staticEE1} \\
 \frac{dp}{dr} &= - \frac{\rho m}{r^2} \left( 1 + \frac{p}{\rho} \right)
                  \left( 1 + \frac{4 \pi r^3 p}{m} \right) \left( 1 - \frac{2m}{r} \right)^{-1}. \label{staticEE2}
\end{align}
\end{subequations}
 The second equation \eqref{staticEE2} has been studied extensively and is referred to as the Tolman--Oppenheimer--Volkoff equation. Note that \eqref{staticEE2} is highly nonlinear and singular at the center $r=0$, which largely complicates the analysis of the static system~\eqref{staticEE}. Hardly any solutions in closed form are  known, even for the simplest equations of state. Known analytic solutions with linear equation of state are the flat dust solution, the singular Klein--Tolman solutions~\cite{T} relevant for neutron stars, the Whittaker solution~\cite{W}, a stiff solution by Buchdahl and Land~\cite{BL}, and de Sitter space and the Einstein static universe as solutions with a cosmological constant (see Ivanov \cite{Iva:expl} for a full overview and a new exact solution). The problem of finding explicit solutions is related to the integrability of Abel differential equations of the second kind~\cite{Iva:integrable}.
 Global existence and uniqueness of smooth solutions as functions of $r$ to \eqref{staticEE} for reasonable equations of state  and given central density $\rho_0>0$, on the other hand, were already established in 1991 by Rendall and Schmidt~\cite{RS:static}. Related results in the relativistic and nonrelativistic case have been obtained in \cites{BR:static,BLF:trapped,Mak:local,RR:static,Sch:static}.

 \subsection{(In)finite extent and the role of the equation of state}\label{ssec:extent}
 The global existence and uniqueness result of Rendall and Schmidt~\cite{RS:static}*{Theorem 2} holds for equation of state $\rho=\rho(p)$ which are nonnegative and continuous for $p \geq 0$, and furthermore smooth and satisfy $\frac{d\rho}{dp}>0$ for $p>0$. If the matter has finite extent, then the fluid ball is joined to a (unique) Schwarzschild exterior; hence, the solution is in particular asymptotically flat. If the matter extends to infinity, then $\rho$ tends to zero at infinity. In some borderline cases, the ADM mass of the solution can still be finite (see Remark~\ref{rem:buchdahl} below), but in general it is not. In \cite{RS:static}*{Section 4} some criteria for (in)finite radii are discussed. For example, the finiteness of the integral
\[
 \int_0^{p_0} \frac{dp}{\rho^2(p)} < \infty, \qquad p_0 = p(r=0),
\]
implies that the stellar model has finite extent, a condition that depends on the low-pressure regime only. A similar criterion has been derived by Makino in \cite{Mak:static}*{Theorem 1}. There, a finite radius is tied to the condition
\[
 \frac{\rho}{p} \frac{dp}{d\rho} = \Gamma + O(\rho^{\Gamma-1}), \qquad \textrm{as } \rho \to 0+,~ \Gamma \in (\tfrac{4}{3},2).
\]
On the other hand, a star with finite radius must satisfy
\begin{align}\label{inf}
 \int_0^{p_0} \frac{dp}{\rho(p)+p} < \infty.
\end{align}
 These criteria, however, do not cover all equations of state, and further analysis are often necessary (see, for example, \cites{Hei:barotropic,Sim:barotropic}). In the following, we discuss some important special cases. Whenever the pressure only depends on the density but does not depend on the entropy, the fluid is called barotropic. In order to be able to directly replace the pressure $p$ in \eqref{staticEE} by the energy density $\rho$, we therefore focus on barotropic equations of state.
 
\subsubsection*{Linear equation of state}
 We are particularly interested in the linear equation of state, with sound speed $\sqrt{K}$ normalized to be in $[0,1]$, so that
\begin{align}\label{linearEOS}
 p = K \rho, \qquad 0 < \rho < \rho_0.
\end{align}
 Since
\[
 \int_0^{p_0} \frac{dp}{\rho(p)+p} = \frac{K}{1+K} \int_0^{K\rho_0} \frac{dp}{p} = \infty,
\]
 the (in)finiteness criterion~\eqref{inf} shows that linear equations of state with $K \in (0,1]$ lead to solutions with infinitely extending fluid. The above criteria cannot be applied to piecewise linear equations of state with a hard and a soft phase, i.e., equations of state of the form
 \[
 p(\rho) = \begin{cases}
             0 & \text{if }\rho \leq \rho_0, \\
             K(\rho-\rho_0) & \text{if } \rho > \rho_0,
            \end{cases}
 \]
 where the qualitative behavior changes at a critical density $\rho_0 > 0$. The dynamics of the two-phase model with sound speed $\sqrt{K}=1$ in spherical symmetry, which describes hard stars with a vacuum exterior, has been studied in the work of Christodoulou \cites{Chr1,Chr2,Chr3} and recently by Fournodavlos and Schlue \cite{FS}. 

\subsubsection*{Polytropic equations of state}
 In Newtonian theory, polytropes are given by a power-law equation of state of the form
\begin{align*}
 p = K \rho_N^\frac{n+1}{n},
\end{align*}
 where $\rho_N$ is the Newtonian mass density. For special values of $n$, these polytropes are also adiabates. In the limit $n\to\infty$ we recover the linear equation of state. In general relativity, however, these power-law equations of state are unphysical because the speed of sound could exceed the speed of light (see \cite{JS:PhD}*{p.\ 31f} for a brief discussion on physical equations of state). The corresponding adiabates in general relativity are represented by an equation of state of the form
\[
 p = K \eta^\Gamma,
\]
 where $\eta$ is the rest-mass density and $1<\Gamma <2$ is the (constant) polytropic exponent. The energy density $\rho$ is then of the form
\begin{align}\label{adiabaticEOS}
 \rho = 
 \eta + \frac{1}{\Gamma-1} p = C p^{\frac{1}{\Gamma}} + \frac{1}{\Gamma - 1} p.
\end{align}
 For $C = K^{-\frac{1}{\Gamma}}$, and $n = \frac{1}{\Gamma -1}$ the corresponding polytropic index, \eqref{adiabaticEOS} is the polytropic equation of state
\begin{align}\label{realpolytropicEOS}
 \rho = K^{\frac{n+1}{n}} p^{\frac{n}{n+1}} + n p.
\end{align}
 For $C=0$ we recover the linear ``gamma-law'' equation of state, i.e.,
 $
   p = K \eta^\Gamma = (\Gamma-1) \rho.
 $
 The polytropic equation of state~\eqref{realpolytropicEOS} was already considered by Tooper~\cite{Top:adia}, who numerically observed instability for $\Gamma \geq \frac{4}{3}$ ($n \geq 3$). From an asymptotic point of view, solutions to \eqref{staticEE} with \eqref{realpolytropicEOS} and power-law polytropic-type equations of state of the form
 \begin{align}\label{polytropicEOS}
 p = K \rho^\frac{n+1}{n}
\end{align}
 essentially behave in the same way because the low-pressure regime dominates (compare, for example, \cites{Top:poly,Top:adia}). Solutions of \eqref{staticEE} with polytropic-type equation of state \eqref{polytropicEOS} with small central densities $\rho_0>0$ and $0<n<5$ also have finite radii and finite masses as observed in \cites{NU:polytropic,RS:static,Sim:barotropic}. However, for $3<n<5$ solutions with infinite extent do occur for larger central densities (compare to the Newtonian case, where finiteness is guaranteed for $n<5$). For $n>5$ the fluid is always unbounded with infinite mass. 

\bigskip
 Despite the frequent use of the linear and polytropic-type equations of state in astrophysics (see, for example, \cites{Cha:linear,FGKL,FT:polytropic,Fla:polytropic,Sh:rot,S:rot}) and evolutionary problems (see, for example, \cites{BLF:trapped,GST,LFS:exterior} for the linear and \cites{BK:lext,Mak:local,O2,R:local} for the polytropic case), the very basic fact that a large class of static solutions and likewise many other solutions are not asymptotically flat has received little attention. In particular, we are not aware of a geometric description that captures the asymptotic behavior of perfect fluids with infinite extent. The main motivation of this paper is to provide such a general geometric framework. We will focus on the linear equation of state \eqref{linearEOS} and the polytropic-type equation of state \eqref{polytropicEOS} with index $n>5$. Although unphysical, the focus on \eqref{polytropicEOS} is natural because it is known to lead to solutions with a similar asymptotic behavior as \eqref{realpolytropicEOS} but is easier to handle analytically. 

\subsection{The asymptotic behavior}\label{ssec:dynamical}
 
 It was already observed by Chandrasekhar~\cite{Chand} in 1972 that spherically symmetric static solutions to \eqref{staticEE} with a linear equation of state \eqref{linearEOS} exhibit an interesting limiting behavior as they approach a singular solution with density function $\rho_\infty(r) = \textrm{constant} \cdot r^{-2}$ as $r \to\infty$. Chandrasekhar computed the asymptotic behavior for $K=\frac{1}{3}$ (and $K=1$) using a reformulation in terms of  Milne variables and observed a spiraling behavior to the singular solution in these coordinates.
 
 In the late 1990s, Makino reformulated \eqref{staticEE} with linear equation of state \eqref{linearEOS} as an autonomous system and used plane dynamical systems theory, more precisely the Poincar\'e--Bendixson Theorem, to obtain that for $K = \frac{1}{3}$ the singular solution is the only element in the $\omega$-limit set and hence all regular solutions converge to it \cite{Mak:static}*{Appendix}. Thus, asymptotically the solutions behave like $\rho(r)\sim \frac{3}{56\pi}r^{-2}$ and $m(r) \sim \frac{3}{14} r$ as $r \to \infty$. He also studied the spiral structure for more general equations of state in \cite{Mak:spiral}, and specifically linear equations of state in \cite{Mak:spiral}*{Section 2}.
 
 Around the same time Heinzle, Nilsson, R\"ohr and Uggla \cites{HU,HRU,NU:linear,NU:polytropic} developed a different dynamical systems approach to study Newtonian as well as relativistic stellar models. Nilsson and Uggla~\cite{NU:polytropic} numerically investigated the asymptotic behavior of solutions with power-law equations of state of the form \eqref{polytropicEOS} and revealed that static solutions with finite extent are the only ones that occur for $n \lesssim 3.339$, but never occur if $n>5$. 
 The more general approach of Heinzle, R\"ohr and Uggla in \cite{HRU} applies to barotropic equations of state that are asymptotically polytropic and linear at the low- and high-pressure regime, respectively. They reformulate the spherically symmetric, static Einstein--Euler system \eqref{staticEE} by introducing certain dimensionless variables to obtain a regular dynamical system on a cube. This reformulation is very well suited for numerical computations and visualization.
 
 While all of the above reformulations as dynamical systems lead to very clear convergence results in the reformulated variables, they cannot be used to derive a convergence \emph{rate\/} in the original formulation. Lower-order terms are crucial to understand the resulting geometric structures and determine their asymptotic behavior. A big drawback is the fact that the radial parameter $r$ is removed in the system \eqref{staticEE} by \emph{implicitly\/} replacing it with a new parameter, for example, in the work of Makino by
 \[
   t(r) = \int_\delta^{r} \frac{ds}{\left(1-\frac{2m(s)}{s}\right)s},
 \]
 whose growth rate with respect to $r$ cannot be controlled well enough a priori. Such implicit reformulations prevent us from interpreting the results obtained in the dynamical systems picture in the original variables, i.e., $m,\rho,p$, and $r$. Nevertheless, the reformulation of \eqref{staticEE} as a dynamical system is also the major analytic tool employed in this paper.

\subsection{Geometric interpretation: Our results}

 In what follows, we provide a geometric description of the asymptotic behavior of solutions to \eqref{staticEE} with linear equation of state~\eqref{linearEOS} and power-law polytropic equation of state~\eqref{polytropicEOS} with index $n>5$. We show that spherically symmetric static perfect fluids with linear equation of state are so-called quasi-asymptotically flat, a concept developed by Nucamendi and Sudarsky~\cite{NS:quasiasymp} which generalizes (and includes) the notion of asymptotic flatness and at the same time admits conformal compactifications. The spatial Riemannian part of the metrics is asymptotically conical.

 \begin{definition}[Quasi-asymptotically flat metrics (AF$\alpha$) \cite{NS:quasiasymp}]\label{quasiflat}
 A spacetime $(M,g)$ with topology $\RR \times (\RR^3 \setminus B_R(0))$, where $B_R(0)$ is a ball of radius $R$ around $0$, is called \emph{quasi-asymptotically flat} (AF$\alpha$) if there exist $\alpha \in (0,1)$ and coordinates $( \tau,\xi,\theta,\phi )$ so that
\begin{align}\label{quasig}
 g = g^\alpha + \tilde g,
\end{align}
 where $g^\alpha$ is the so-called \emph{standard quasi-asymptotically flat metric} (or SAF$\alpha$ metric), given by
 \begin{align}\label{g0}
 g^\alpha = - d\tau^2 + d \xi^2 + (1-\alpha) \xi^2 (d\theta^2 + \sin^2 \theta d\phi^2),
\end{align}
 and $\tilde g$ is of the form
\begin{align*}
 \tilde g_{\mu\nu} dx^\mu dx^\nu &= a_{\tau\tau} d\tau^2 + a_{\xi\xi}d\xi^2+ 2 a_{\xi\tau} d\xi d\tau \\
&\quad + \xi^2 [ a_{\theta\theta} d\theta^2 + a_{\phi\phi} \sin^2 \theta d\phi^2 + 2 a_{\theta\phi} \sin \theta d\theta d\phi ] \\
&\quad + 2\xi [ a_{\tau\theta} d\tau d\theta + a_{\xi\theta} d\xi d\theta ] + 2\xi [ a_{\tau\phi} \sin\theta d\tau d\phi + a_{\xi\phi} \sin\theta d\xi d\phi ],
\end{align*}
 with $a_{\mu\nu} = o(\xi^{-\frac{1}{2}})$ as $\xi \to \infty$.
\end{definition}

 Note that the SAF$\alpha$ metrics $g^\alpha$ play the same role as the Minkowski metric does for asymptotically flat spacetimes. 
 
 Our result is formulated in this framework of quasi-asymptotic flatness.
 
\begin{theorem}[Linear equation of state]\label{mainthm1}
 The unique global smooth solution to the initial value problem of the static Einstein--Euler equations~\eqref{staticEE} in spherical symmetric with linear equation of state $p = K \rho$, for fixed $K \in (0,1)$, and central density $\rho_0>0$, is quasi-asymptotically flat.
 
 More precisely, in coordinates $(\tau = r^{\frac{2K}{1+K}}t,\xi=\frac{r}{\sqrt{1-\alpha}},\theta,\phi)$, the solution converges to the standard quasi-asymptotically flat metric
 \[
  g^\alpha = -d\tau^2 + d\xi^2 + (1-\alpha) \xi^2 (d\theta^2 + \sin^2 \theta d\phi^2),
 \]
 with $\alpha = \frac{4K}{(1+K)^2+4K}$, with rate $o(\xi^{-\frac{1}{2}})$ as $\xi\to\infty$.
\end{theorem}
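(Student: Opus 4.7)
The plan is to reduce \eqref{staticEE} with $p = K\rho$ to an autonomous planar dynamical system, identify the singular scale-invariant solution $\rho_\infty = A/r^2$ as its unique interior fixed point, establish global convergence together with a quantitative rate via linearization, and finally convert these estimates into the quasi-asymptotically flat form of the metric. Introducing the dimensionless variables
\[
 u = 4\pi r^2 \rho, \qquad v = \frac{2m}{r}, \qquad s = \ln r,
\]
the system \eqref{staticEE} takes the form $du/ds = 2u - (1+K)u(v+2Ku)/(2K(1-v))$ and $dv/ds = 2u - v$ on the physical strip $\{u>0,\ 0<v<1\}$. A direct substitution shows that the Tolman-type solution $\rho_\infty = K r^{-2}/(2\pi((1+K)^2+4K))$ corresponds to the unique interior fixed point $(u_*,v_*) = (\alpha/2,\alpha)$, with $\alpha = 4K/((1+K)^2+4K)$ matching the constant in the theorem.

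Linearizing at this fixed point yields a Jacobian with trace $-(3K+1)/(1+K)$ and discriminant $(K^2-42K-7)/(1+K)^2 < 0$ for $K \in (0,1)$. The eigenvalues are therefore complex conjugate with negative real part $-(3K+1)/(2(1+K))$, strictly less than $-1/2$; this recovers Chandrasekhar's spiral and supplies the decisive decay exponent. Global attraction of every regular orbit to $(u_*,v_*)$ follows from a Poincar\'e--Bendixson argument after one rules out periodic orbits (via a Dulac/monotonicity argument, in the spirit of \cite{Mak:static}) and shows that orbits cannot escape through $\{v=1\}$ or $\{u=0\}$. Combined with the linear stable-manifold estimate this yields
\[
 |u(r)-u_*| + |v(r)-v_*| = O\bigl(r^{-(3K+1)/(2(1+K))}\bigr) = o(r^{-1/2}).
\]

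To translate this into the metric, observe that $e^{-2\lambda}=1-v$ gives $e^{2\lambda}dr^2 = (1+o(\xi^{-1/2}))\,d\xi^2$ under $\xi = r/\sqrt{1-\alpha}$, while integrating \eqref{inteq1} with $p = K\rho$ yields $e^{\nu(r)} = C\rho^{-K/(1+K)}$, whence $e^{2\nu(r)} = r^{4K/(1+K)}(1+o(r^{-1/2}))$ after absorbing $C$ into the rescaling of $t$ implicit in $\tau = r^{2K/(1+K)}t$. A direct expansion then gives
\[
 -e^{2\nu}dt^2 = -(1+o(\xi^{-1/2}))\,d\tau^2 + O(\tau/\xi)\,d\tau\,d\xi + O((\tau/\xi)^2)\,d\xi^2,
\]
where the cross term and the extra $d\xi^2$ correction arise from the $r$-dependence of $\tau$; at fixed $\tau$ both are $o(\xi^{-1/2})$. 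Combined with the exact identity $r^2\,d\Omega^2 = (1-\alpha)\xi^2\,d\Omega^2$, this produces the decomposition $g = g^\alpha + \tilde g$ of Definition~\ref{quasiflat} with every coefficient $a_{\mu\nu}$ of $\tilde g$ of order $o(\xi^{-1/2})$, as required.

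The main technical obstacle is the global convergence step: the spectral analysis only certifies local attraction on the stable manifold of $(u_*,v_*)$, so one must carry out a careful global phase-portrait analysis — producing barriers that prevent escape through $\{v=1\}$ or $\{u=0\}$ and excluding periodic orbits — to show that every orbit emanating from a positive finite central density eventually enters this stable manifold. Once global convergence is established, the quantitative decay rate is immediate from the linearization and the coordinate change to $(\tau,\xi,\theta,\phi)$ is a routine, if somewhat tedious, computation.
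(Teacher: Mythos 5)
Your proposal is correct and follows essentially the same route as the paper: an explicit autonomous system in $s=\log r$, identification of the singular solution as a hyperbolic sink with eigenvalues of real part $-\tfrac{1+3K}{2(1+K)}<-\tfrac12$, global convergence via Poincar\'e--Bendixson in the spirit of Makino, and then the coordinate change $\tau=e^{\nu}t$, $\xi=r/\sqrt{1-\alpha}$ to reach the AF$\alpha$ form. The only differences are cosmetic — the paper works with $(a,b)=(-\log(1-2m/r),\,\log(4\pi r^2\rho)+a)$ rather than your $(u,v)=(4\pi r^2\rho,\,2m/r)$, which are smoothly conjugate near the fixed point and give the same spectrum, and the paper records the decay rate with an extra $+\varepsilon$ from the nonlinear correction, which does not affect the $o(r^{-1/2})$ conclusion.
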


\medskip
 The asymptotic behavior is different for solutions to \eqref{staticEE} with power-law polytropic-type equation of state \eqref{polytropicEOS}. If $n<3$ only solutions with finite extent occur, if $n>5$ only solutions with infinite extent occur \cite{NU:polytropic}. Although the latter solutions represent metrics that converge to the flat spacetime, they are neither asymptotically nor quasi-asymptotically flat in the strict sense due to a slower convergence rate than $o(r^{-\frac{1}{2}})$ and an infinite ADM mass. This deviation from the standard (quasi-)asymptotically flat situation is captured in our notion of scaled quasi-asymptotic flatness.

 \begin{definition}[Scaled quasi-asymptotically flat metrics (AF$\alpha\beta$)]\label{def:sqaf}
 A spacetime $(M,g)$ with topology $\RR \times (\RR^3 \setminus B_R(0))$, where $B_R(0)$ is a ball of radius $R$ around $0$, is called \emph{scaled quasi-asymptotically flat\/} (AF$\alpha\beta$), if there exist $\alpha \in [0,1)$, $\beta > 0$ and coordinates $(\tau,\xi,\theta,\phi)$ so that
\begin{align}\label{gabasymp}
 g = \gab + \tilde g,
\end{align}
 where $\gab$ is the so-called \emph{standard scaled quasi-asymptotically flat metric} (or SAF$\alpha\beta$ metric), given by
 \begin{align}\label{gab}
 \gab = -d\tau^2 + \xi^{2\beta} \left( d\xi^2 + (1-\alpha) \xi^2 (d\theta^2 + \sin^2 \theta d\phi^2) \right),
 \end{align} 
 and $\tilde g$ is of the form
\begin{align*}
 \tilde g_{\mu\nu} dx^\mu dx^\nu &= a_{\tau\tau} d\tau^2 + a_{\xi\xi} \xi^{2\beta} d\xi^2+ 2 \xi^\beta a_{\xi\tau} d\xi d\tau \\
&\quad + \xi^{2(1+\beta)} [ a_{\theta\theta} d\theta^2 + a_{\phi\phi} \sin^2 \theta d\phi^2 + 2 a_{\theta\phi} \sin \theta d\theta d\phi ] \\
&\quad + 2\xi^{1+\beta} [ a_{\tau\theta} d\tau d\theta + a_{\tau\phi} \sin\theta d\tau d\phi ] + 2\xi^{1+2\beta} [ a_{\xi\theta} d\xi d\theta + a_{\xi\phi} \sin\theta d\xi d\phi ],
\end{align*}
with $a_{\mu\nu} = o(\xi^{-\frac{1}{2}})$ as $\xi \to \infty$.
\end{definition}

\begin{remark}[Relation to asymptotic flatness]\label{rem:af}
 If $\alpha=\beta=0$, then $g^{(\alpha,\beta)} = g^\alpha = g^0$ is the flat Minkowski metric. In fact, we can write the Minkowski metric as $g^{(\alpha,\beta)}$ for \emph{any} choice of $\beta\geq 0$ and fixed $1-\alpha = \frac{1}{(1+\beta)^2}$ if we choose a new radial coordinate $\xi = \sqrt[1+\beta]{(1+\beta)r}$, since then $dr = \xi^ \beta d\xi$. Thus while AF$\alpha\beta$ metrics with $1-\alpha = \frac{1}{(1+\beta)^2}$ are asymptotic to the flat spacetime, their convergence rate is generally too slow to fall into the standard asymptotically flat regime (the ADM mass is infinite). If, on the other hand, $1-\alpha < \frac{1}{(1+\beta)^2}$ the corresponding AF$\alpha\beta$ metrics converge (slowly) to a spacetime metric $g^\alpha$ with a true conical angle.
 
 Alternatively, we could have defined the SAF$\alpha\beta$ metrics in \eqref{gab} to be of the form
 \[
  \gab = -d\tau^2 + (1+\beta) \zeta^{2\beta} \left( d\zeta^2 + (1-\alpha) \zeta^2 (d\theta^2 + \sin^2 \theta d\phi^2) \right),
 \]
 to emphasize the transformation of the Minkowski metric via $\zeta = \sqrt[1+\beta]{\sqrt{1+\beta}r}$ for any $\beta \geq 0$ and fixed $\alpha = 0$ (independent of $\beta$). For $\alpha > 0$ we would obtain the SAF$\alpha$ metrics in \eqref{g0} with a true conical angle.
\end{remark}
 
 We formulate our result in this scaled quasi-asymptotically flat setting of Definition~\ref{def:sqaf}.
 
\begin{theorem}[Polytropic equation of state]\label{mainthm2}
 The unique global smooth solution to the initial value problem of the static Einstein--Euler equations~\eqref{staticEE} in spherical symmetry with power-law polytropic equation of state $p = K \rho^{\frac{n+1}{n}}$, for $K \in (0,1)$ and $n>5$ fixed, and central density $\rho_0>0$, is scaled quasi-asymptotically flat.
 More precisely, in coordinates $(\tau=e^{\nu(r)}t,\xi=\sqrt[1+\beta]{(1+\beta)r},\theta,\phi)$, the solution is asymptotic to 
 \[
  \gab = -d\tau^2 + \xi^{2\beta} \left( d\xi^2 + (1-\alpha) \xi^2 (d\theta^2 + \sin^2 \theta d\phi^2) \right),
 \] 
 where $\beta > \frac{n-5}{4}$ and $1-\alpha = \frac{1}{(1+\beta)^2}<\frac{16}{(n-1)^2}$,
 with rate $o(\xi^{-\frac{1}{2}})$ as $\xi\to\infty$.
 
 In the original coordinates $( t, r, \theta, \phi)$, the spatial part of these solutions is asymptotic to the Euclidean metric
 \[
  h^0 = dr^2 + r^2 (d\theta^2 + \sin^2 \theta d\phi^2),
 \]
 with convergence rate $O(r^{-\frac{2}{n-1}})$ as $r \to\infty$.
\end{theorem}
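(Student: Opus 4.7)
The strategy parallels the proof of Theorem~\ref{mainthm1}: reformulate~\eqref{staticEE} with the polytropic equation of state as an autonomous planar dynamical system, identify the fixed point governing the large-$r$ behavior, extract the convergence rate by linearization, and translate the result back to the static metric components and then to the SAF$\alpha\beta$ coordinates.

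Substituting $\rho=(p/K)^{n/(n+1)}$ into~\eqref{staticEE} and introducing scale-invariant variables in the spirit of Heinzle--R\"ohr--Uggla and Nilsson--Uggla (for instance $y = d\ln p/d\ln r$, $u = 4\pi r^3 \rho/m$, $\mu = 2m/r$, with $s=\ln r$ as independent variable) produces an autonomous system on a compact invariant region. The Newtonian polytropic heuristic suggests the scaling $\rho \sim C r^{-2n/(n-1)}$, $p\sim C'r^{-2(n+1)/(n-1)}$, $m \sim C'' r^{(n-3)/(n-1)}$, obtained by balancing the leading terms in~\eqref{staticEE2} under the assumptions $p/\rho\to 0$ and $\mu\to 0$. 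On this profile $p/\rho$, $4\pi r^3 p/m$ and $\mu$ all decay like $r^{-2/(n-1)}$, so the relativistic correction factors in~\eqref{staticEE2} are subleading, and the profile corresponds to a fixed point $P_*$ of the reformulated system on the boundary face $\mu=0$. The linearization at $P_*$ exhibits an eigenvalue of magnitude $2/(n-1)$ in the ``compactness'' direction, which accounts for the predicted $O(r^{-2/(n-1)})$ decay.

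The crux is to prove that the unique regular solution from the center, provided by~\cite{RS:static}, lies on the stable manifold of $P_*$. For $n>5$ the solution has infinite extent by~\cite{NU:polytropic}, and a Poincar\'e--Bendixson / invariant-region argument analogous to Makino's~\cite{Mak:static} excludes other possible $\omega$-limit sets; once convergence to $P_*$ is established, the linearization yields $p/\rho$, $\mu$, and $4\pi r^3 p/m$ all of order $O(r^{-2/(n-1)})$. Feeding this into~\eqref{inteq2} gives $e^{2\lambda}=(1-2m/r)^{-1}=1+O(r^{-2/(n-1)})$, which already proves the last claim of the theorem (Euclidean asymptotics of the spatial metric in the $(r,\theta,\phi)$ coordinates at rate $O(r^{-2/(n-1)})$), and integrating~\eqref{inteq1} gives $\nu(r)=\nu_\infty + O(r^{-2/(n-1)})$ with $\nu_\infty$ finite.

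To recognise the SAF$\alpha\beta$ form, apply the coordinate change $\xi=((1+\beta)r)^{1/(1+\beta)}$, so that $dr^2=\xi^{2\beta}d\xi^2$ and $r^2=(1-\alpha)\xi^{2(1+\beta)}$ with $1-\alpha=(1+\beta)^{-2}$, making the Euclidean spatial part precisely that of $\gab$, and rescale $t$ by $e^{\nu_\infty}$ so that $e^{2\nu(r)}dt^2$ becomes $d\tau^2$ up to a remainder $(e^{2(\nu(r)-\nu_\infty)}-1)\,d\tau^2$ of the same size which is absorbed into $a_{\tau\tau}$. All metric perturbations then carry a factor of order $r^{-2/(n-1)}=O(\xi^{-2(1+\beta)/(n-1)})$, which is $o(\xi^{-1/2})$ precisely when $\beta>(n-5)/4$, matching the hypothesis. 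The hardest part is the dynamical-systems step: ruling out other $\omega$-limit sets and the spiralling behavior that occurs for the linear equation of state, and tracking nonlinear remainders carefully enough that the $o(\xi^{-1/2})$ bound holds for every coefficient required by Definition~\ref{def:sqaf}.
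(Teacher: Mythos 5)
Your overall architecture is the paper's: the proof of Theorem~\ref{mainthm2} consists of (i) extracting the leading-order profiles $m(r)\sim C r^{\frac{n-3}{n-1}}$ and $\rho(r)\sim C' r^{-\frac{2n}{n-1}}$ from a dynamical-systems reformulation (Proposition~\ref{cor:asymp-poly}, based on the Nilsson--Uggla variables $(U,V,y)$ and their fixed point $P_4$), (ii) feeding these into \eqref{inteq1} and \eqref{EE2:poly} to get $\nu'(r)=O(r^{-\frac{n+1}{n-1}})$ and $e^{2\lambda(r)}-1=O(r^{-\frac{2}{n-1}})$, and (iii) invoking the coordinate-change Proposition~\ref{prop:sqaf} to land in the AF$\alpha\beta$ class for $\beta>\frac{n-5}{4}$, $1-\alpha=(1+\beta)^{-2}$. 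Your step (iii) differs only cosmetically: you rescale $t$ by the constant $e^{\nu_\infty}$ and put the remainder into $a_{\tau\tau}$, rather than using $\tau=e^{\nu(r)}t$ as in Proposition~\ref{prop:sqaf}; this is legitimate here because $\nu'$ is integrable for $n>5$ so $\nu_\infty$ exists.

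Two points in your step (i) need repair. First, you propose to rule out other $\omega$-limit sets ``by a Poincar\'e--Bendixson / invariant-region argument analogous to Makino's.'' Poincar\'e--Bendixson is a planar theorem, and the polytropic system is genuinely three-dimensional (your own variables $(y,u,\mu)$, or $(U,V,y)$ in \eqref{UV}); only in the linear case, where $p/\rho$ is constant, does the system reduce to the plane. As written this step would fail. The paper does not close it analytically either: it imports from \cite{NU:polytropic} that for $n>5$ the point $P_4$ is a hyperbolic sink (all three eigenvalues, cf.\ \eqref{lambda2}, have negative real part) to which all regular orbits converge. Second, you claim the linearization yields an $O(r^{-\frac{2}{n-1}})$ convergence \emph{rate} for the scale-invariant variables; the paper explicitly declines to assert such a rate because the regularized time variable of \eqref{UV} is only implicitly related to $r$, so eigenvalue decay does not translate into a rate in $r$. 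This overclaim turns out to be harmless: the theorem only needs the leading-order profiles, since $\frac{2m}{r}=O(r^{-\frac{2}{n-1}})$ and the stated decay of $\nu'$ already follow from $m\sim Cr^{\frac{n-3}{n-1}}$ and $\rho\sim C'r^{-\frac{2n}{n-1}}$ alone, which is exactly how the paper's proof proceeds.
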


\begin{remark}
 In view of the relation of $g^{(\alpha,\beta)}$ to the Minkowski metric when $1-\alpha = \frac{1}{(1+\beta)^2}$, discussed in Remark~\ref{rem:af}, Theorem~\ref{mainthm2} shows that solutions to \eqref{staticEE} with power-law polytropic equation of state are in fact asymptotic to the flat metric. However, the convergence rate of $O(r^{-\frac{2}{n-1}})$ as $r \to\infty$ is too slow to interpret this behavior in the standard asymptotically flat setting which requires $o(r^ {-\frac{1}{2}})$. It is conceivable that for related equations of state, e.g., equations of state that are asymptotically linear/polytropic in the low pressure regime, also a nontrival conical angle would occur, expressed by an inequality $1-\alpha < \frac{1}{(1+\beta)^2}$.
\end{remark}

\begin{remark}[The borderline case]\label{rem:buchdahl}
 The asymptotic behavior of solutions to \eqref{staticEE} for equations of state that become polytropic of index $n=5$ at the low pressure regime (recall that the low pressure regime is critical for the behavior at spatial infinity) is already known. It has been shown that the so-called Buchdahl equation of state \cite{Buchdahl}, given by
\[
 p = \frac{1}{6} \frac{\rho^{6/5}}{\rho_0^{1/5} - \rho^{1/5}},  \qquad 0 < \rho < \rho_0,
\]
 and generalizations thereof \cite{BK} yield asymptotically flat solutions with a fluid extending to infinity. This essentially agrees with our observation in Theorem~\ref{mainthm2} if we would consider the limit $n\to 5$, because the necessary falloff rate to obtain an asymptotically flat spacetime requires $o(r^{-\frac{1}{2}})$ \cite{Chr:mass}. As such the index $n=5$ is the borderline case between finite and infinite mass/extent.
\end{remark}

In what follows, we briefly mention some of the properties of our geometric framework and discuss our results in the wider context of relativistic perfect fluid models and of Einstein--matter equations in general. For further definitions, properties and discussions related to the (scaled) quasi-asymptotically flat metrics, we refer the reader to Section~\ref{sec2} of this paper.

 \begin{remark}[Generalized ADM mass]
 Since static perfect fluids with linear equations of state and polytropic equations of state with index $n>5$ are not asymptotically flat, their ADM masses are infinite. In the framework of quasi-asymptotic flatness, however, one can substitute the infinite ADM mass by the use of a so-called ADM$\alpha$ mass introduced by Nucamendi and Sudarsky~\cite{NS:quasiasymp}. This notion of mass  coincides with the monopole mass used in \cite{BV:monopole}. The standard quasi-asymptotically flat metric $g^\alpha$ has vanishing ADM$\alpha$ mass. The ADM$\alpha$ mass of a regular solutions described in Theorem~\ref{mainthm1}, however, remains unknown and we argue in Remark~\ref{rem:les} that it could be unbounded below. Hence also the concept of the ADM$\alpha$ mass is of little use in the analysis of perfect fluids.
 
 Based on our notion of scaled quasi-asymptotic flatness with reference metrics $g^{(\alpha,\beta)}$ as in Theorem~\ref{mainthm2} we consider a na\"{\i}ve definition of an ADM$\alpha\beta$ mass in Remark~\ref{rem:ADMalphabeta}. For reference Riemannian metrics
 \[
  h^{(\alpha,\beta)} = \xi^{2\beta} \left(d\xi^2 + (1-\alpha) \xi^2 (d\theta^2 + \sin^2\theta d\varphi^2)\right)
 \]
 and $h$ a scaled quasi-asymptotically flat metric, we let
 \begin{align}\label{mnew}
m_{\mathrm{ADM}\alpha\beta} (h) = \frac{1}{16 \pi (1-\alpha)} \lim_{\xi \to \infty} \xi^{-\beta} \int_{S_\xi(0)}
(h^{(\alpha,\beta) ik} h^{(\alpha,\beta) jl} - h^{(\alpha,\beta) ij} h^{(\alpha,\beta) kl}) \nabla^{(\alpha,\beta)}_j (h_{kl}) \, dS_i,
 \end{align}
 where $dS_i$ is the $i$-th surface element and $\nabla^{(\alpha,\beta)}$ is the covariant derivative with respect to $h^{(\alpha,\beta)}$.
 We will see that, if $\alpha=\beta=0$, then \eqref{mnew} is just the ADM mass, i.e.,
 \[ m_{\mathrm{ADM}00}(h) = m_{\mathrm{ADM}}(h) .\]
 The advantage of \eqref{mnew}, however, is that it makes sense also for metrics that are asymptotically flat in a nonstandard sense, namely for asymptotically conical metrics and those with a slow converge rate as described in Definition~ \ref{def:sqaf}. 
 The slower convergence rate is accounted for by multiplication by $\xi^{-\beta}$ and the deficit angle is accounted for by dividing by $1-\alpha$.
 
  Since the solutions studied in Theorems~\ref{mainthm1} and \ref{mainthm2} are of AF$\alpha$ and AF$\alpha\beta$ form, we obtain on a hypersurface $\Sigma_\tau$ ($\tau$ is a rescaled time variable)
\[
  m_{\text{ADM}\alpha\beta}(h^\text{lin}|_{\Sigma_\tau}) = O(1) + \tau^2 O(1) < \infty, 
\]
 for $\beta  > \frac{1-K}{1+3K}$ and $1-\alpha < \frac{(1+3K)^2}{4((1+K)^2+4K)}$, and
\[
 m_{\text{ADM}\alpha\beta}(h^\text{poly}|_{\Sigma_\tau}) = \sqrt[n-1]{\frac{K^{n}}{2\pi} \frac{(n-3)(n+1)^n}{(n-1)^{2}}} \in (0,\infty), 
\]
 for $\beta=\frac{n-3}{2}$ and $1-\alpha = \frac{4}{(n-1)^2}$.
 For further details, see Remarks~\ref{rem:ADMablinear} and \ref{rem:ADMabpoly}.
 It remains to be checked whether any such notion of ADM$\alpha\beta$ mass can be derived in a coherent and coordinate-invariant fashion, and if and in what sense such a mass  could be preserved in time. A rigorous approach could be based on related work on other masses \cites{Chr:mass,M:mass,NS:quasiasymp}.
\end{remark}

\begin{remark}[Dynamics]
 Knowledge about the asymptotic behavior of static perfect fluids is also of use in the full dynamic picture when constructing local solutions out of initial data sets. For example, recent results of LeFloch and the second author~\cite{BLF:trapped} on the formation of trapped surfaces make use of initial data that are constructed as large focused perturbations of static spherically symmetric perfect fluids with linear equation of state.
 Local existence results for solutions to the Einstein--Euler equations \eqref{EE} with, in particular, power-law polytropic equations of state \eqref{polytropicEOS} but compact support have been studied in the smooth case by Rendall~\cite{R:local}.
 Using initial data with compact support or satisfying certain falloff conditions, Brauer and Karp~\cites{BK:lclass,BK:well,BK:lext} constructed solutions in a class of weighted Sobolev spaces with fractional order depending on the polytropic exponent $\Gamma$. However, already Makino~\cite{Mak:local,Mak:NM} remarked that general static solutions of \eqref{staticEE} are actually excluded from the class of density distributions allowed in the setting of Brauer and Karp, and proves existence of smooth solutions near an equilibrium. Oliynyk~\cites{O1,O2} recently also obtained local existence results in the realistic case of compact barotropic fluid bodies with a free matter-vacuum boundary.
 For initial value formulations that do admit smooth static solutions with infinite extent studied in Theorems~\ref{mainthm1} and \ref{mainthm2}, we expect that our geometric interpretation applies to other solutions studied in those frameworks as well.
 In fact, problems with the common geometric paradigm of asymptotic flatness already occur when one wants to consider rotating stars with a vacuum exterior. These stars are modeled by stationary, axisymmetric perfect fluid spacetimes and one would expect that they are---in analogy to the static case---glued to a Kerr vacuum exterior. This is surprisingly \emph{not\/} the case \cites{CMMR:rotating,MM:rot}, but if a rotating star collapses to a black hole, it is expected that the exterior region is approximately Kerr \cites{BHMLRSFS,FS:rot,N:rot,PS:rot,Sh:rot,S:rot}.
\end{remark}

\begin{remark}[Other matter fields]
 The prototype for quasi-asymptotically flat metrics are the global monopole spacetimes studied by Barriola and Vilenkin \cite{BV:monopole}, Nucamendi and Sudarsky \cite{NS:blackholes} and others. Conical singularities (albeit in the center) also occur in electromagnetic fields, more precisely in asymptotically flat spherically symmetric static solutions of the Einstein--Maxwell equations as shown by Tahvildar-Zadeh~\cite{TZ:electrostatic}. 
 Spherically symmetric static solutions to the Einstein--Vlasov equations can also have infinite nonasymptotically flat extent, and criteria for collisionless gas related to those in the perfect fluid case which guarantee solutions with compact support have been discussed by Andr\'{e}asson, Fajman, Ramming, Rein and Thaller \cites{AFT:vlasov,AR:static,RR:static}. Moreover, in a dynamical collapse scenario Rendall and Vel\'{a}zquez \cite{RV:veiled} obtained solutions to the Einstein--Vlasov equations with naked-type singularities that are selfsimilar and not asymptotically flat. Overall it is apparent that spacetimes with matter extending to infinity that are not asymptotically flat are not merely an artifact of these theories but in fact a common feature in general relativity worth exploring. After all, asymptotic flatness is an idealization that may simply not be suitable for many mathematical and physical scenarios. 
 The very basic vacuum solutions with positive and negative cosmological constant, de Sitter and anti-de Sitter spacetimes, respectively, are prominent examples of asymptotically simple manifolds (in the sense of Penrose~\cites{P:asymp,P:asymp0,P:asymp1,PR2}) that are not asymptotically flat. Spacetimes with an asymptotically  hyperbolic (anti-de Sitter) behavior, in particular, became increasingly important in the last few years \cites{A:AdS,AM:adS,CD:hyp,DGS:hyp,W:hyp}.
 We believe that, along these lines, the geometric notion of scaled quasi-asymptotic flatness can be verified and adopted in several other scenarios in general relativity as well.
\end{remark}

{\bf Outline.}
 This paper is structured as follows. Section~\ref{sec2} builds the geometric core of this paper. The concept of (scaled) quasi-asymptotic flatness is described in detail and related to the concept of asymptotic simplicity, i.e., conformal compactifications at null infinity. Furthermore, we recall the notions of ADM mass and ADM$\alpha$ mass and extend it to include spacetimes that are scaled quasi-asymptotically flat. Some simplifications for the spherically symmetric setting are also derived, which will be of use later. In Section~\ref{sec3} we see that solutions to \eqref{staticEE} with linear equation of state have infinite ADM mass but converge to a standard quasi-asymptotically flat singular solution with vanishing ADM$\alpha$ mass. This proves Theorem~\ref{mainthm1}. The analytical tool used here is the reformulation of \eqref{staticEE} as a dynamical system and a stability analysis via linearization. A similar but slightly more involved procedure is applied in Section~\ref{sec4} to analyze solutions with polytropic equations of state. This analysis and a geometric reformulation yield Theorem~\ref{mainthm2}.

\medskip
{\bf Notations and conventions.} Throughout the manuscript, we use Greek indices $\mu,\nu$ etc.\ to denote the components $0,1,2,3$ of a spacetime metric $g$, and Latin indices $i,j,k$ etc.\ to denote the components $1,2,3$ of the spatial metric (often $h$). The signature of $g$ is $(-,+,+,+)$. We use the Einstein summation convention.


\section{Beyond asymptotic flatness}
\label{sec2}

\subsection{Asymptotically and (scaled) quasi-asymptotically flat metrics} \label{ssec:metric}
 We are primarily interested in spherically symmetric metrics. For polar coordinates $(t,r,\theta,\phi)$ we can write the metric tensor in the form
\begin{align}\label{metric}
 g = - e^{2\nu(r)} dt^2 + e^{2\lambda(r)} dr^2 + r^2 (d\theta^2 + \sin^2 \theta d\phi^2),
\end{align}
 where $\nu$ and $\lambda$ are the unknown metric variables. 
 Asymptotic flatness is tied to the limiting behavior (with specific decay rates)
\begin{align}\label{flat}
 \lim_{r\to\infty} \nu(r) = \lim_{r\to\infty} \lambda(r) = 0,
\end{align}
 which only holds for a very limited number of equations of state. In general, we will not observe that $\lambda(r)$ tends to $0$ at infinity but to some positive value $\Lambda$ such that
\begin{align}\label{almostflat}
 \lim_{r\to\infty} e^{2\lambda(r)} = e^{2\Lambda} > 1.
\end{align}
 For example, in the specific situation of global monopole spacetimes the asymptotic behavior
\[
 e^{2\nu(r)} = e^{-2\lambda(r)}=1-\alpha-\frac{2M}{r} + O(r^{-2})
\]
 has been studied by Barriola and Vilenkin~\cite{BV:monopole} and later led Nucamendi and Sudarsky~\cite{NS:quasiasymp} to introduce the concept of quasi-asymptotic flatness introduced in Definition~\ref{quasiflat}.

Metrics that are quasi-asymptotically flat are asymptotic to metrics of the form
\[
  g^\alpha = -d\tau^2 + d\xi^2 + (1-\alpha) \xi^2 (d\theta^2 + \sin^2 \theta d\phi^2),
\]
for some $\alpha \in (0,1)$, as $\xi\to \infty$.

These metrics $g^\alpha$ play the same role as the Minkowski metric does for asymptotically flat spacetimes. Note that we allow the slightly weaker falloff condition $o(\xi^{-\frac{1}{2}})$ rather than $O(\xi^{-1})$ which was used by Nucamendi and Sudarsky in \cite{NS:quasiasymp}. This is in accordance with the asymptotically flat situation (see, for example, \cite{Chr:mass}) and the definition of a mass in Section~\ref{ssec:adma}.

Due to the occurrence of even slower convergence rates $o(r^{-\frac{1}{2(1+\beta)}})$, for some $\beta > 0$, in our analysis of solutions to the Einstein--Euler equations \eqref{EE}, we further introduced the concept of scaled quasi-asymptotic flatness in Definition~\ref{def:sqaf}. The basic idea is to study metrics that are asymptotic to those of the form
\[
 \gab = -dt^2 + (1+\beta) r^{2\beta} \left( dr^2 + (1-\alpha) r^2 (d\theta^2 + \sin^2 \theta d\phi^2) \right),
\]
for $\alpha \geq 0$ and $\beta \geq 0$ with convergence rate $o(r^{-\frac{1}{2}})$ as $r \to \infty$.

In the Sections to come, we will review general properties of (scaled) quasi-asymptotically flat metric, such as the existence of conformal compactifications, the spherically symmetric situation and notions of masses. Since $g^\alpha = g^{(\alpha,0)}$ we will only consider the general case of scaled quasi-asymptotically flat metrics, and remark on specific results if $\beta =0$ separately.

\subsection{Asymptotic simplicity}

 In \cite[Section 3]{NS:quasiasymp} it was shown that the SAF$\alpha$ spacetime $(M,g^\alpha)$ can be conformally compactified in the sense of Penrose \cites{P:asymp,P:asymp0,P:asymp1} and is therefore asymptotically simple. More precisely, the concepts of future null infinity $\mathscr{I}^+$ and past null infinity $\mathscr{I}^-$ exist, but not spatial infinity $\iota^0$. We recall the precise definition of (weakly) asymptotically simple spacetimes (see, for example, \cite{PR2}*{Section 9.6}) and provide a corresponding proof for the SAF$\alpha\beta$ spacetime $(\RR^{1+3},\gab)$.

\begin{definition}[(Weakly) asymptotically simple spacetimes]\label{def:simple}
  A spacetime $(M,g)$ is called \emph{asymptotically simple\/} if there exists a smooth spacetime $(\widehat M,\hat g)$ with boundary such that
\begin{enumerate}
 \item $M$ is the interior of $\widehat M$, and hence $\widehat M= M \cup \mathscr{I}$ with $\partial \widehat M = \mathscr{I}$,
 \item the unphysical metric $\hat g$ is conformal to the physical metric $g$, i.e., there exists a smooth conformal factor  $\Omega$ on $\widehat M$ such that
\begin{itemize}
  \item $\hat g_{\mu\nu} = \Omega^2 g_{\mu\nu}$ in $M$
  \item $\Omega > 0$ on $M$, and $\Omega = 0$, $\nabla_\mu \Omega \neq 0$ along $\mathscr{I}$,
\end{itemize}
 \item every inextendible null geodesic in $M$ has a future and past end point on $\mathscr{I}$.
\end{enumerate}
 A spacetime $(M,g)$ is called \emph{weakly asymptotically simple\/} if there exists an asymptotically simple $N$ such that for a neighborhood $U$ of $\mathscr{I}$ in $\widehat N$, $U \cap N$ is isometric to a subset of $M$.
\end{definition}

 Note that (iii) requires that the spacetime is null geodesically complete and hence rules out singularities, black holes, etc. Weakly asymptotically simple spacetimes, however, may possess further ``infinities''.

 In the sense of Penrose, a spacetime is asymptotically flat if it is weakly asymptotically simple and \emph{asymptotically empty\/}, i.e., the Ricci tensor vanishes in a neighborhood of $\mathscr{I}$.

\begin{proposition}\label{thm:as}
 The SAF$\alpha\beta$ spacetime $(M,\gab)$ is asymptotically simple for any $\alpha \in [0,1)$ and $\beta \geq 0$. It is (asymptotically) empty if and only if $\alpha = 0$.
\end{proposition}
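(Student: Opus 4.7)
The plan is to build an explicit Penrose-type conformal compactification of $(M,g^{(\alpha,\beta)})$ by reducing the family of SAF$\alpha\beta$ metrics to a one-parameter family of cone geometries, and then to settle the emptiness statement by a direct Ricci computation in that normal form.

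First I would normalise the radial variable by setting $\rho = \xi^{1+\beta}/(1+\beta)$, so that $d\rho = \xi^\beta d\xi$ and the metric becomes
\[
 g^{(\alpha,\beta)} = -d\tau^2 + d\rho^2 + c\rho^2\,(d\theta^2 + \sin^2\theta\,d\phi^2), \qquad c := (1-\alpha)(1+\beta)^2 > 0.
\]
Up to isometry the geometry is then characterised by the single sphere-scale parameter $c$, with $c = 1$ recovering Minkowski space in accordance with Remark~\ref{rem:af}. Next I would mimic the classical Penrose procedure: introducing double null coordinates $u = \tau - \rho$, $v = \tau + \rho$ and then $U = \arctan u$, $V = \arctan v$, with conformal factor $\Omega = 2\cos U\cos V$, a direct computation yields
\[
 \hat g := \Omega^2 g^{(\alpha,\beta)} = -4\,dU\,dV + c\sin^2(V - U)\,(d\theta^2 + \sin^2\theta\,d\phi^2),
\]
or equivalently, after setting $T = V + U$, $R = V - U$,
\[
 \hat g = -dT^2 + dR^2 + c\sin^2 R\,(d\theta^2 + \sin^2\theta\,d\phi^2),
\]
which extends smoothly across the boundary of the open diamond $\{|T| + R < \pi,\ R > 0\}$.

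Taking $\widehat M$ to be the closure of this diamond and $\mathscr{I}^+ = \{V = \pi/2\}$, $\mathscr{I}^- = \{U = -\pi/2\}$ as its two null boundary components would give condition (i) of Definition~\ref{def:simple}; for (ii), $\Omega$ vanishes on $\mathscr{I}^\pm$ with nonzero differentials $d\Omega = -2\cos U\,dV$ on $\mathscr{I}^+$ and $d\Omega = -2\cos V\,dU$ on $\mathscr{I}^-$. Condition (iii) I would verify working with $\hat g$, since null geodesics are preserved as point sets under conformal rescaling: $\hat g$ is a static Lorentzian product over a compact spatial factor, so the first integrals coming from energy and angular-momentum conservation pin $R$ into a bounded interval while $T$ progresses across $(-\pi,\pi)$ at uniformly positive rate, so every inextendible null geodesic reaches $\mathscr{I}^\pm$ at finite affine parameter. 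For the emptiness statement, the cone normal form admits a direct warped-product Ricci computation, giving $R_{\tau\tau} = R_{\rho\rho} = 0$ and $R_{\theta\theta} = 1 - c$, $R_{\phi\phi} = (1-c)\sin^2\theta$, which are independent of $\rho$. Hence the Ricci tensor vanishes in a neighbourhood of $\mathscr{I}$ exactly when $c = 1$; in the $\beta = 0$ case of Definition~\ref{quasiflat} this specialises to $\alpha = 0$, matching the proposition.

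The delicate point in this plan is verifying condition (iii) when $c \ne 1$: the spatial factor $dR^2 + c\sin^2 R\,d\Omega^2$ then carries genuine conical singularities at the poles $R = 0, \pi$, and a small but nontrivial analysis is required to rule out null geodesics trapped near those axes and to confirm that even radial null geodesics of $\hat g$ reach $\mathscr{I}^\pm$ in finite affine parameter. A phase-plane analysis of the effective $R$-equation with the conserved angular momentum as parameter should handle this cleanly, and reduces to the standard Minkowski computation in the borderline case $c = 1$.
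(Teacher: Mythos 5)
Your proposal follows essentially the same route as the paper's proof: reduce $g^{(\alpha,\beta)}$ to the cone normal form via $\rho=\xi^{1+\beta}/(1+\beta)$, apply the standard double-null/arctangent Penrose compactification with the same conformal factor $\Omega = 2\cos\bigl(\tfrac{T-R}{2}\bigr)\cos\bigl(\tfrac{T+R}{2}\bigr)$, and decide emptiness from the Ricci tensor of the resulting metric, whose only nonzero components are the angular ones proportional to $1-(1-\alpha)(1+\beta)^2$. Two minor points where you are in fact more careful than the paper: you explicitly discuss condition (iii) of Definition~\ref{def:simple} (which the paper's proof does not verify at all), and your emptiness criterion $(1-\alpha)(1+\beta)^2=1$ is exactly what the Ricci computation yields, whereas the proposition's statement (``if and only if $\alpha=0$'') and the proof's concluding sentence (``if and only if $\alpha=\beta=0$'') are only consistent with that computation in the special case $\beta=0$.
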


\begin{proof}
 We use standard conformal compactification of Minkowski space, to show that $(M,\gab)$ is asymptotically simple. The transformation
\[
 u=t+\frac{r^{1+\beta}}{1+\beta}, \qquad v = t-\frac{r^{1+\beta}}{1+\beta},
\]
immediately implies that $-dt^2+r^{2\beta}dr^2 = - du dv$ and $\frac{r^{2(1+\beta)}}{(1+\beta)^2} =  \frac{(u-v)^2}{4}$. Next we compactify $u$ and $v$ by choosing
\[
 T = \arctan v + \arctan u, \qquad R = \arctan v - \arctan u.
\]
 Therefore, 
\[
 u = \tan \left( \frac{T-R}{2} \right), \qquad v = \tan \left( \frac{T+R}{2} \right),
\]
 The range of $T$ and $R$ is $T+R,T-R \in (-\pi,\pi)$, $R \in (0,\pi)$ and can be extended to include future and past null infinity, i.e., $T+R,T-R = \pm \pi$. The fact that $-du dv = \frac{(1+u^2)(1+v^2)}{4} ( - dT^2 + dR^2 )$ suggests that we should use the conformal factor
\[
 \Omega^2 := \frac{4}{(1+u^2)(1+v^2)} = 4 \cos^2 \left(\frac{T-R}{2}\right) \cos^2 \left(\frac{T+R}{2}\right).
\]
 We verify that $\Omega$ satisfies all conditions of Definition~\ref{def:simple}. It is clear that $\Omega^2 > 0$ on $M$ given by  $T+R,T-R \in (-\pi,\pi)$, $R \in (0,\pi)$ and $\Omega = 0$ for $T+R,T-R = \pm \pi$. Furthermore,
\[
 \nabla_T \Omega = -2 (\cos T + \cos R) \sin T, \qquad \nabla_R \Omega = -2 (\cos T + \cos R) \sin R,
\]
 which do not vanish at the boundary because $R \neq 0, \pi$.
 The transformed SAF$\alpha\beta$ metric $\gab$ \eqref{gab} reads
\begin{align*}
 \hat g^{(\alpha,\beta)} := \Omega^2 \gab &= -dT^2 + dR^2 + \frac{(u-v)^2}{(1+u^2)(1+v^2)} (1+\beta)^2 (1-\alpha) (d\theta^2 + \sin^2 \theta d\phi^2) \\
                          &= -dT^2 + dR^2 + \sin^2R \, (1+\beta)^2 (1-\alpha) (d\theta^2 + \sin^2 \theta d\phi^2).
\end{align*}

 Since the only nonvanishing terms of the Ricci curvature tensor are
 \[
  R_{22} = (1-(1-\alpha)(1+\beta)^2), \qquad R_{33} = (1-(1-\alpha)(1+\beta)^2) \sin^2 \theta,
 \]
 the metric $\gab$ is in general \emph{not} asymptotically empty. It is asymptotically empty if and only if $\alpha = \beta = 0$, i.e., if $\gab$ is the Minkowski metric.
\end{proof}

\begin{remark}
 In general, we do not expect that (scaled) quasi-asymptotically flat spacetimes as described in Definition~\ref{def:sqaf} admit a \emph{smooth\/} conformal compactification. We can, however, show that the prescribed decay rate yields a continuous conformal compactification and expect that further restrictions on the decay rate of the derivatives yield more regular conformal compactifications in accordance with the asymptotically flat situation (for a discussion in the latter framework see, for example, \cite{F:lrr}*{Section 2.3} and \cite{F:nullinf}*{Section 3}). To this end one has to utilize the same embedding and unphysical spacetime
 \[ \widehat M = \{ (T,R) \, | \, R\in (0,\pi), T\pm R \in (-\pi,\pi) \} \cup \{ T+R = \pm \pi \} \cup \{T-R = \pm \pi \}, \]
 and the same conformal factor,
 \[ \Omega = 2 \cos \left( \frac{T-R}{2} \right) \cos \left( \frac{T+R}{2} \right), \]
 as in the proof of Proposition~\ref{thm:as}.
\end{remark}

\subsection{(Scaled) quasi-asymptotic flatness for static spherically symmetric spacetimes}

 We show how the asymptotic behavior of spherically symmetric metrics \eqref{metric} that are not asymptotically flat can be analyzed in the setting of scaled quasi-asymptotic flatness, depending on the limiting behavior of $\lambda$ and $\nu$ as $r\to\infty$.

 In Sections~\ref{sec3} and \ref{sec4} we verify that perfect fluids with linear and polytropic equation of state (for $n>5$) satisfy these conditions.

\begin{proposition}\label{prop:sqaf}
 Suppose $g$ is a static spherically symmetric Lorentzian metric of the form~\eqref{metric}, i.e., in local coordinates $(t,r,\theta,\phi)$ we can write
\[
 g = - e^{2\nu(r)} dt^2 + e^{2\lambda(r)} dr^2 + r^2 (d\theta^2 + \sin^2 \theta d\phi^2).
\]
 If for some $\Lambda \geq 0$ and $\beta \geq 0$ the functions $\lambda$ and $\nu$ satisfy\footnote{Thus, in particular, $\lim_{r\to\infty}\lambda(r) = \Lambda$ but $\nu(r)$ may diverge.}
\begin{equation}\label{ss3}
\begin{split}
 \nu'(r) &= o(r^{-\frac{1}{2(1+\beta)}}), \\ 
 e^{2\lambda(r)-2\Lambda} -1 &= o(r^{-\frac{1}{2(1+\beta)}}) ,
\end{split}
\end{equation}
as $r\to\infty$, then there exists $\alpha \in [0,1)$ and a coordinate system $(\tau,\xi,\theta,\phi)$ such that $g$ is of the form
\[
 g = g^{(\alpha,\beta)} + \tilde g,
\]
with decay rates $a_{\mu\nu} = o(\xi^{-\frac{1}{2}})$ as $\xi\to\infty$ for the scaled components of $\tilde g$. In particular, $g$ is scaled quasi-asymptotically flat (AF$\alpha\beta$) in the sense of Definition~\ref{def:sqaf} with scaling exponent $\beta$ and deficit angle \[(1-\alpha)\pi = (1+\beta)^{-2} e^{-2\Lambda} \pi.\]
\end{proposition}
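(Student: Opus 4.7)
My plan is to exhibit an explicit coordinate change $(t,r) \mapsto (\tau,\xi)$ bringing $g$ into the form $g^{(\alpha,\beta)} + \tilde g$. The new radial coordinate is forced by the requirement that $\xi^{2\beta}d\xi^2$ match the limit $e^{2\Lambda}\,dr^2$ of $e^{2\lambda(r)}dr^2$ identically; differentiating $\xi^{1+\beta}=(1+\beta)e^{\Lambda}r$ achieves this, so I take $\xi=\bigl((1+\beta)e^{\Lambda}r\bigr)^{1/(1+\beta)}$, whence $dr=e^{-\Lambda}\xi^\beta\,d\xi$. With this substitution, $r^2 = (1+\beta)^{-2}e^{-2\Lambda}\xi^{2(1+\beta)}$, which pins down the deficit angle $1-\alpha=(1+\beta)^{-2}e^{-2\Lambda}$; since $\Lambda,\beta\geq 0$, this quantity lies in $(0,1]$, giving $\alpha\in[0,1)$ as required.

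For the time coordinate, a constant rescaling of $t$ is insufficient because $\nu(r)$ may diverge, so I would absorb this divergence by setting $\tau=e^{\nu(r)}t$. Inverting gives $dt=e^{-\nu}d\tau-\nu'(r)e^{-\nu}\tau\,dr$, leading to
\[
 -e^{2\nu}dt^2 = -d\tau^2 + 2\nu'(r)\tau\,d\tau\,dr - \nu'(r)^2\tau^2\,dr^2.
\]
Combining this with the spatial part of $g$ and re-expressing $dr$ through $d\xi$, one arrives at $g = g^{(\alpha,\beta)} + \tilde g$, where the only nonvanishing components of $\tilde g$ in the notation of Definition~\ref{def:sqaf} are
\[
 a_{\xi\xi} = e^{2\lambda(r)-2\Lambda} - 1 - \nu'(r)^2\tau^2 e^{-2\Lambda}, \qquad a_{\xi\tau} = \nu'(r)\tau\, e^{-\Lambda}.
\]

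It then remains to verify that both $a_{\xi\xi}$ and $a_{\xi\tau}$ are $o(\xi^{-1/2})$. Since $\xi \asymp r^{1/(1+\beta)}$, the hypotheses \eqref{ss3} translate directly into $\nu'(r)=o(\xi^{-1/2})$ and $e^{2\lambda(r)-2\Lambda}-1 = o(\xi^{-1/2})$; so the cross term is $o(\xi^{-1/2})$ for each fixed $\tau$, and the two contributions to $a_{\xi\xi}$ are of orders $o(\xi^{-1/2})$ and $o(\xi^{-1})$ respectively. The one point requiring a moment of care --- the only step that is not purely mechanical --- is to observe that the asymptotic condition in Definition~\ref{def:sqaf} is to be interpreted along slices of constant $\tau$, not of constant $t$, so the factor $\tau$ produced by the temporal coordinate change remains bounded in the limit $\xi\to\infty$ and hence does not spoil the decay.
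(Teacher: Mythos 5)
Your proposal is correct and follows essentially the same route as the paper's proof: the identical coordinate change $\tau=e^{\nu(r)}t$, $\xi=\bigl((1+\beta)e^{\Lambda}r\bigr)^{1/(1+\beta)}$, the same identification $1-\alpha=(1+\beta)^{-2}e^{-2\Lambda}$, and the same verification that $a_{\xi\tau}=o(\xi^{-1/2})$ and $a_{\xi\xi}=o(\xi^{-1/2})-o(\xi^{-1})=o(\xi^{-1/2})$ after translating the hypotheses via $\xi\asymp r^{1/(1+\beta)}$. Your closing observation that the decay is to be read on slices of constant $\tau$ is a point the paper leaves implicit, but otherwise the two arguments coincide.
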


\begin{proof}
 Set $\tau := e^{\nu(r)}t$ and $\xi := \sqrt[1+\beta]{(1+\beta)e^\Lambda r}$. Since $\beta \geq 0$,
\begin{align*}
 d\tau &= \nu'(r) \tau dr + e^{\nu(r)} dt , \\
 d\xi &= \frac{1}{1+\beta} \left( (1+\beta) e^\Lambda r \right)^{\frac{1}{1+\beta}-1} (1+\beta) e^\Lambda dr = e^{\Lambda} \xi^{-\beta} dr.
\end{align*}
Hence,
\begin{align*}
 e^{2\nu(r)} dt^2 &= (d\tau - \nu'(r) \tau dr)^2 \\
 &= (d\tau - \nu'((1+\beta)^{-1} e^{-\Lambda} \xi^{1+\beta}) e^{-\Lambda} \tau \xi^{\beta} d\xi)^2 \\
 &= d\tau^2 - 2 e^{-\Lambda} \nu'((1+\beta)^{-1} e^{-\Lambda} \xi^{1+\beta}) \tau \xi^{\beta} d\tau d\xi \\
 & \quad + e^{-2\Lambda} \nu'((1+\beta)^{-1} e^{-\Lambda} \xi^{1+\beta})^2 \tau^2 \xi^{2\beta} d\xi^2,
\end{align*}
and
\begin{align*}
 e^{2\lambda(r)} dr^2 &= e^{2\lambda(r)-2\Lambda} e^{2\Lambda} dr^2
 = e^{2\lambda((1+\beta)^{-1} e^{-\Lambda} \xi^{1+\beta}) - 2\Lambda} \xi^{2\beta} d\xi^2.
\end{align*}
 The metric of the unit sphere, i.e., $d\Omega^2 = d\theta^2 + \sin^2 \theta d\phi^2$, remains unchanged and we thus have that
\begin{align*}
 g = & - e^{2\nu(r)} dt^2 + e^{2\lambda(r)} dr^2 + r^2 d\Omega^2 \\
   = & - d\tau^2 + \xi^{2\beta} d\xi^2 + (1+\beta)^{-2} e^{-2\Lambda} \xi^{2(1+\beta)} d\Omega^2 \\
 & + 2 e^{-\Lambda} \nu'((1+\beta)^{-1} e^{-\Lambda} \xi^{1+\beta}) \tau \xi^{\beta} d\tau d\xi \\
 & +\left[ - 1 + e^{2\lambda((1+\beta)^{-1} e^{-\Lambda} \xi^{1+\beta}) - 2\Lambda} - e^{-2\Lambda} \nu'((1+\beta)^{-1} e^{-\Lambda} \xi^{1+\beta})^2 \tau^2 \right] \xi^{2\beta} d\xi^2.
\end{align*}
 For $\alpha \in (0,1)$ defined by $1-\alpha = (1+\beta)^{-2} e^{-2\Lambda}$, $g$ is therefore of the form
 \[
  g = g^{(\alpha,\beta)} + \tilde g,
 \]
 and it remains to verify the decay rates for $\tilde g$.
The nonzero components of $\tilde g$, that is $a_{\tau\xi}$ and $a_{\xi\xi}$ as in Definition~\ref{def:sqaf}, satisfy
\begin{align*}
a_{\tau\xi} &= 2 e^{-\Lambda} \nu'((1+\beta)^{-1} e^{-\Lambda} \xi^{1+\beta}) \tau = o(\xi^{-\frac{1}{2}}),
\end{align*}
and
\begin{align*}
 a_{\xi\xi} &= - 1 + e^{2\lambda((1+\beta)^{-1} e^{-\Lambda} \xi^{1+\beta}) - 2\Lambda} - e^{-2\Lambda} \nu'((1+\beta)^{-1} e^{-\Lambda} \xi^{1+\beta})^2 \tau^2 \\
                     &=  o(\xi^{-\frac{1}{2}}) - o (\xi^{-1}) = o(\xi^{-\frac{1}{2}}),
\end{align*}
due to the assumptions~\eqref{ss3}. This verifies the conditions of Definition~\ref{def:sqaf}.
\end{proof}

\begin{remark}[Higher decay rates]\label{higher}
 Suppose $g$ is a spherically symmetric metric with better decay rates, i.e., for a $\beta > -1$,
 \begin{equation}\label{ss4}
\begin{split}
 \nu'(r) &= O(r^{-\frac{1}{1+\beta}}), \\
 e^{2\lambda(r)-2\Lambda} -1 &= O(r^{-\frac{1}{1+\beta}}) ,
\end{split}
\end{equation}
 as $r \to \infty$, then $g$ of course also satisfies \eqref{ss3} since $-\frac{1}{2(1+\beta)} > -\frac{1}{1+\beta}$. Hence by Proposition~\eqref{prop:sqaf}, $g$ is AF$\alpha\beta$. However, the components $a_{\mu\nu}$ then satisfy a better decay rate $o(\xi^{-1})$ then if $\beta$ would have been chosen optimally. This will be useful later in the context of an ADM$\alpha\beta$ mass in Remark~\ref{rem:ADMalphabeta}.
\end{remark}
 
 In the case $\beta =0$, Proposition~\ref{prop:sqaf} implies that the metric is quasi-asymptotically flat in the sense of Definition~\ref{quasiflat}.
 
 \begin{corollary}\label{cor:decay}
 Suppose $g$ is a static spherically symmetric Lorentzian metric of the form \eqref{metric}, i.e., in local coordinates $(t,r,\theta,\phi)$ we can write
\[ 
 g = - e^{2\nu(r)} dt^2 + e^{2\lambda(r)} dr^2 + r^2 (d\theta^2 + \sin^2 \theta d\phi^2).
\] 
 If for some $\Lambda \geq 0$ the functions $\lambda$ and $\nu$ satisfy
\begin{equation}\label{ss1}
\begin{split}
 \nu'(r) &= o(r^{-\frac{1}{2}}), \\
 e^{2\lambda(r) - 2\Lambda} - 1 &= o(r^{-\frac{1}{2}}),
\end{split}
\end{equation}
as $r \to \infty$, then there exists a coordinate system $(\tau,\xi,\theta,\varphi)$ such that $\tilde g_{\mu\nu} = o (\xi^{-\frac{1}{2}})$ as $\xi \to \infty$. In particular, $g$ is quasi-asymptotically flat in the sense of Definition~\ref{quasiflat} with deficit angle \[(1-\alpha)\pi = e^{-2\Lambda}\pi.\]
\end{corollary}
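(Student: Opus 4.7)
The plan is to deduce Corollary \ref{cor:decay} directly as the $\beta = 0$ specialization of Proposition \ref{prop:sqaf}. Setting $\beta = 0$, the hypothesis \eqref{ss3} reads
\[
\nu'(r) = o(r^{-1/2}), \qquad e^{2\lambda(r)-2\Lambda}-1 = o(r^{-1/2}),
\]
which is precisely the hypothesis \eqref{ss1} of the corollary. Similarly, the conclusion of Proposition \ref{prop:sqaf} in the case $\beta = 0$ asserts the existence of coordinates $\{\tau,\xi,\theta,\phi\}$ in which $g = g^{(\alpha,0)} + \tilde g$ with $a_{\mu\nu} = o(\xi^{-1/2})$, and the deficit-angle identity becomes $1-\alpha = e^{-2\Lambda}$, matching the formula stated in the corollary.

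First I would observe that the reference metric $g^{(\alpha,0)}$ from Definition \ref{def:sqaf} reduces, upon setting $\beta = 0$, to
\[
g^{(\alpha,0)} = -d\tau^2 + d\xi^2 + (1-\alpha)\xi^2(d\theta^2 + \sin^2\theta\, d\phi^2) = g^\alpha,
\]
i.e., the standard quasi-asymptotically flat metric of Definition \ref{quasiflat}. Likewise, the decomposition of $\tilde g$ in Definition \ref{def:sqaf} with $\beta = 0$ is structurally identical to the decomposition in Definition \ref{quasiflat}: the weight factors $\xi^\beta,\xi^{2\beta},\xi^{1+\beta},\xi^{1+2\beta}$ all collapse to $\xi^0$ or $\xi^1$, and the required decay $a_{\mu\nu} = o(\xi^{-1/2})$ is the same.

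Given these identifications, the proof reduces to invoking Proposition \ref{prop:sqaf}: define $\tau = e^{\nu(r)} t$ and $\xi = e^\Lambda r$ (this is the $\beta = 0$ specialization of $\xi = \sqrt[1+\beta]{(1+\beta) e^\Lambda r}$), and the asymptotic expansion of $g$ derived in the proof of Proposition \ref{prop:sqaf} gives directly the desired $g = g^\alpha + \tilde g$ decomposition with the claimed decay for the nonvanishing components $a_{\tau\xi}$ and $a_{\xi\xi}$.

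Because this is purely a specialization argument, there is no genuine obstacle; the only thing to check is that the translation between Definitions \ref{quasiflat} and \ref{def:sqaf} is faithful when $\beta=0$, which is immediate from inspection. Accordingly, the corollary can be recorded as a one-line consequence of Proposition \ref{prop:sqaf}, and no separate calculation is needed.
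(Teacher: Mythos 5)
Your proposal is correct and is exactly the route the paper takes: the corollary is stated immediately after the remark that the case $\beta=0$ of Proposition~\ref{prop:sqaf} yields quasi-asymptotic flatness, with the coordinate change specializing to $\tau = e^{\nu(r)}t$, $\xi = e^{\Lambda}r$ and the deficit angle to $(1-\alpha)\pi = e^{-2\Lambda}\pi$. No separate argument is given in the paper either, so your one-line specialization matches it precisely.
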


\subsection{Beyond the ADM mass} \label{ssec:adma}

 Let us recall the definition of the ADM mass. For asymptotically flat metrics, the spatial part should satisfy
\[
 h_{ij} = \delta_{ij} + o(r^{-\frac{1}{2}}), \qquad \partial_k h_{ij} = O(r^{-\frac{3}{2}}), \qquad \text{as } r \to \infty.
\]
 The associated ADM mass can then be defined by the asymptotic behavior at spatial infinity,
\begin{align}\label{mADM}
 m_{\mathrm{ADM}}(h) 
&= \frac{1}{16 \pi} \lim_{R \to \infty} \int_{S_R(0)} (\partial^l h^i{}_{l} - \partial^i h^l{}_{l}) \, dS_i,
\end{align}
 where $S_R(0)$ is a $2$-sphere with radius $R$ and $dS_i$ are the Euclidean coordinate surface elements, i.e., $dS_i = \frac{x_i}{r} dx^1 \wedge \ldots \wedge \widehat{dx^i} \wedge \ldots \wedge dx^n$.
 The ADM mass exists and is finite if the scalar curvature $R(h)$ is integrable \cites{Bar:mass,Chr:mass}. Moreover, it is a geometric invariant (i.e., coordinate invariant) that is always nonnegative and zero only for the Minkowski metric \cites{SY1:pmt,SY2:pmt,W:pmt}.

 For general spherically symmetric Riemannian metrics of the form
\[ h= a(r) dr^2 + b(r) r^2 (d\theta^2 + \sin \theta^2 d\phi^2), \]
with $a,b$ differentiable and satisfying the above decay
$a-1 = o(r^{-\frac{1}{2}})$, $b-1 = o(r^{-\frac{1}{2}})$, $a_r = O(r^{-\frac{3}{2}})$ and $b_r = O(r^{-\frac{3}{2}})$, one
obtains (see, for example, \cite{Chr:energynotes}*{p.\ 12})
\begin{align}\label{ADMss}
 m_\mathrm{ADM}(h) = \frac{1}{2} \lim_{r\to\infty} \left( (a-b) r - b_r r^2 \right).
\end{align}
In particular, the ADM mass of the Schwarzschild metric coincides with the mass $m$ of the black hole.

\medskip
 Let us consider the problem of convergence for the integral in \eqref{mADM} in the case of quasi-asymptotically flat spacetimes. The asymptotic behavior of a spherically symmetric, quasi-asymptotically flat metric $g$ is dominated by the corresponding SAF$\alpha$ metric $g^\alpha$ \eqref{g0}. The spatial part $h^\alpha$ of $g^\alpha$, i.e.,
\[
 h^\alpha = d\xi^2 + (1-\alpha) \xi^2 (d\theta^2 + \sin^2 \theta d\phi^2),
\]
 has scalar curvature
\[
 R(h^\alpha) = \frac{2 \alpha}{1-\alpha} \xi^{-2} > 0, \qquad \xi \in (0,\infty). 
\]
 Since $R(h^\alpha) \not \in L^1 (\RR^3 \setminus B_R(0))$ for any $R>0$, the ADM mass of $h^\alpha$ and thus $h$ is infinite \cite{Bar:mass}.
 
 One advantage of proving quasi-asymptotic flatness for a given spacetime is the availability of another concept of mass, the so-called ADM$\alpha$ mass for the spatial part 
 of the spacetime metric $g$. This natural generalization of the ADM mass for AF$\alpha$ slices can be defined using the background metric $h^\alpha$, again following the work of Nucamendi and Sudarsky \cite{NS:quasiasymp} with a slightly weaker falloff rate. They introduced the ADM$\alpha$ mass in the framework of Einstein--scalar theory.

 \begin{definition}[ADM$\alpha$ mass \cite{NS:quasiasymp}]\label{def:adma}
 Suppose $h^\alpha$ is the spatial SAF$\alpha$ metric defined for the hypersurface $\Sigma_\tau$ (i.e., such
 that $\tau = \mathrm{constant}$) of \eqref{g0} and
\[
 h = h^\alpha + \tilde h
\]
 is a spatial AF$\alpha$ metric with $\tilde h_{ij} = o(\xi^{-\frac{1}{2}})$ and $\partial_k \tilde h_{ij} = O(\xi^{-\frac{3}{2}})$ as $\xi \to\infty$.
 The ADM$\alpha$ mass of $h$ is defined by
\begin{align}\label{mADMa}
 m_{\mathrm{ADM}\alpha}(h) = \frac{1}{16 \pi (1-\alpha)} \lim_{\xi\to\infty} \int_{S_\xi(0)}
(h^{\alpha ik} h^{\alpha jl} - h^{\alpha ij} h^{\alpha kl}) \nabla^\alpha_j (h_{kl}) \, dS_i,
\end{align}
 where $\nabla^\alpha$ denotes the covariant derivative associated to $h^\alpha$ and $dS^i$ the $i$-th coordinate surface element with respect to $h^\alpha$.
\end{definition}

\begin{remark}[Relation to ADM mass]
 In the following sense, the ADM$\alpha$ mass is an extension of the ADM mass. If $\alpha = 0$, then $h^{0ij}=\delta^{ij}$ is just the reference Euclidean metric and
\begin{align*}
 (h^{0ik} h^{0jl} - h^{0ij} h^{0kl}) \nabla^0_j (h_{kl})
&= \delta^{ik}\delta^{jl} \partial_j h_{kl} - \delta^{ij}\delta^{kl} \partial_j h_{kl} = \partial^l h^i{}_{l} - \partial^i h^l{}_{l},
\end{align*}
thus \eqref{mADMa} yields exactly \eqref{mADM}.
\end{remark}

\begin{remark}
The ADM$\alpha$ mass coincides with the parameter $M$ of global monopole spacetimes
studied by Barriola and Vilenkin \cite{BV:monopole}.
\end{remark}

\begin{remark}[The ADM$\alpha$ mass is a geometric invariant of $(\Sigma,h)$.]\label{rem:ADMalpha}
 Nucamendi and Sudarsky proved in \cite{NS:quasiasymp}*{Section 4} that the ADM$\alpha$ is coordinate invariant given their slightly stricter setting with $\tilde h_{ij} = O(\xi^{-1})$ and $\partial_k \tilde h_{ij} = O(\xi^{-2})$. To see that the proof extends to our Definition~\ref{def:adma} it is crucial to note that $h$ in ``Cartesian'' coordinates $x^i$, $y^i$ (which are assumed to preserve the asymptotic behavior) reads
\begin{align*}
 h^{(x)}_{ij} &= (1-\alpha)\delta_{ij} + \alpha \frac{x_ix_j}{\xi^2} + A_{ij}, \\
 h^{(y)}_{ij} &= (1-\alpha)\delta_{ij} + \alpha \frac{y_iy_j}{\xi^2} + B_{ij},
\end{align*}
 now with
\begin{align}\label{AB}
 A_{ij} &= o(\xi^{-\frac{1}{2}}), & B_{ij} &= o(\xi^{-\frac{1}{2}}), \\
 \partial_k A_{ij} &= O(\xi^{-\frac{3}{2}}), & \partial_k B_{ij} &= O(\xi^{-\frac{3}{2}}),
\end{align}
as compared to $O(\xi^{-1})$ and $O(\xi^{-2})$ in \cite{NS:quasiasymp}*{Eq.\ (39)}. This follows directly from \cite{Chr:mass}*{Lemma 1} and the fact that the SAF$\alpha$ metric satisfies
\[
 h^\alpha_{ij} = (1-\alpha)\delta_{ij} + \alpha \frac{x_ix_j}{\xi^2}.
\]
 Lemma 1 in \cite{NS:quasiasymp}*{p.\ 1315} follows also from the decay assumptions~\eqref{AB}. In fact, $|A_{ab}| \leq \frac{C}{\xi^\gamma}$ for some $\gamma>0$ and sufficiently large $\xi$ yields the same result.
 Lemma 2 in \cite{NS:quasiasymp}*{p.\ 1316} follows too, in fact it can be improved to only require
$
 \eta^a = o(\xi^{\frac{1}{2}})$ and $\frac{\partial \eta^a(y)}{\partial y^b} = o(\xi^{-\frac{1}{2}})$.
 The final result for our weaker decay rates used in Definition~\ref{def:adma} follows from the Theorem in \cite{NS:quasiasymp}*{p.\ 1319 ff}, and it remains to be checked whether all the same terms can still be eliminated.
\end{remark}

\begin{remark}[Non-positivity of the ADM$\alpha$ mass.]
 Unlike the ADM mass for asymptotically flat spacetimes, the ADM$\alpha$ mass is not nonnegative. Indeed, it can be negative, depending on the choice of the reference metric (which corresponds to adding a constant). The more crucial point of whether the ADM$\alpha$ is generally bounded from below is still open. We refer to \cite{NS:quasiasymp,NS:blackholes} for a discussion of these issues.
\end{remark}

\medskip
 In what follows, we derive a simpler notion of the ADM$\alpha$ mass if the metric is spherically symmetric, in analogy to the expression \eqref{ADMss} in the asymptotically flat situation.

\begin{lemma}\label{lemma:ADMss}
 Suppose $h$ is a spherically symmetric, quasi-asymptotically flat, Riemannian $3$-metric of the form
\[
 h = a(\xi) \, d\xi^2 + b(\xi) \, (1-\alpha) \xi^2 (d\theta^2+\sin^2\theta d\phi^2), \qquad \alpha \in [0,1).
\]
 Then the ADM$\alpha$ mass of $h$ is
\begin{align}\label{mADMss}
 m_{\mathrm{ADM}\alpha}(h) = \lim_{\xi \to\infty} \frac{1}{2} \left( \xi (a-b) - \xi^2 \partial_\xi b \right).
\end{align}
 Therefore, if $a - b = O(\xi^{-1})$ and $\partial_\xi b = O(\xi^{-2})$, then $m_{\mathrm{ADM}\alpha}(h)$ is finite.
\end{lemma}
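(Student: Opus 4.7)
The plan is to compute the flux integral in Definition~\ref{def:adma} explicitly, exploiting spherical symmetry to collapse it to a scalar in $\xi$. Since $h$ and the reference metric $h^\alpha$ are both diagonal in the coordinates $\{\xi,\theta,\phi\}$, the only nontrivial component of the coordinate surface element on $S_\xi(0)$ is $dS_\xi = (1-\alpha)\xi^2 \sin\theta \, d\theta \, d\phi$, so the sum over $i$ in \eqref{mADMa} reduces to its $i=\xi$ term. Because $h^\alpha$ is diagonal, $h^{\alpha\,\xi k}$ and $h^{\alpha\,\xi j}$ force $k=\xi$ and $j=\xi$ respectively, and the integrand becomes
\[
h^{\alpha\,\xi\xi}\bigl( h^{\alpha\,jl}\nabla^\alpha_j h_{\xi l} - h^{\alpha\,kl}\nabla^\alpha_\xi h_{kl}\bigr),
\]
with $h^{\alpha\,\xi\xi}=1$.

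Next I would compute the Christoffel symbols of $h^\alpha$. Since the $(1-\alpha)$ factor in the angular part cancels in the mixed radial-angular symbols, these coincide with the Euclidean ones in spherical coordinates except for
\[
\Gamma^{\alpha,\xi}_{\theta\theta}=-(1-\alpha)\xi, \qquad \Gamma^{\alpha,\xi}_{\phi\phi}=-(1-\alpha)\xi\sin^2\theta,
\]
reflecting the deficit angle. Using these and the diagonal form of $h$, a short calculation gives
\[
\nabla^\alpha_\theta h_{\xi\theta} = (1-\alpha)\xi(a-b), \qquad \nabla^\alpha_\phi h_{\xi\phi} = (1-\alpha)\xi(a-b)\sin^2\theta,
\]
\[
\nabla^\alpha_\xi h_{\theta\theta} = (1-\alpha)\xi^2 b', \qquad \nabla^\alpha_\xi h_{\phi\phi} = (1-\alpha)\xi^2 b'\sin^2\theta,
\]
so tracing with $h^\alpha$ yields
\[
h^{\alpha\,jl}\nabla^\alpha_j h_{\xi l} = a' + \tfrac{2(a-b)}{\xi}, \qquad h^{\alpha\,kl}\nabla^\alpha_\xi h_{kl} = a' + 2b'.
\]
The $a'$ terms cancel, leaving the integrand $2\bigl(\tfrac{a-b}{\xi} - b'\bigr)$, which is constant on each $S_\xi$.

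The sphere integration then produces a factor $4\pi(1-\alpha)\xi^2$. Combined with the normalization $\tfrac{1}{16\pi(1-\alpha)}$ in \eqref{mADMa}, the factors of $(1-\alpha)$ and the numerical factors collapse to give precisely
\[
m_{\mathrm{ADM}\alpha}(h) = \lim_{\xi\to\infty} \tfrac{1}{2}\bigl( \xi(a-b) - \xi^2 \partial_\xi b\bigr),
\]
which is \eqref{mADMss}. The asserted finiteness under $a-b=O(\xi^{-1})$ and $\partial_\xi b = O(\xi^{-2})$ is immediate from this formula. There is no real conceptual obstacle here; the only thing to watch carefully is bookkeeping of the $(1-\alpha)$ factors and the Christoffel symbols with the rescaled angular metric, since a dropped factor there would spoil the cancellation between the trace of $h^{\alpha\,jl}\nabla^\alpha_j h_{\xi l}$ and $h^{\alpha\,kl}\nabla^\alpha_\xi h_{kl}$ that kills the $a'$ terms.
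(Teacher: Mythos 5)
Your computation is correct and arrives at exactly \eqref{mADMss}, but by a genuinely different route from the paper. The paper carries out the entire calculation in ``Cartesian'' coordinates $x_i$ adapted to $h^\alpha$: it decomposes $h = b\,h^\alpha + (a-b)\,d\xi^2$, computes $\nabla^\alpha h$ via the Cartesian Christoffel symbols ${}^{\alpha}\G{m}{jk} = \frac{\alpha}{\xi^2}\bigl(\delta_{jk}x^m - \frac{x^m x_j x_k}{\xi^2}\bigr)$, and contracts term by term (its $T^i_{(1)}$, $T^i_{(2)}$, $T^i_{(3)}$), with the $a'$ contribution dying in $T^i_{(2)}$ rather than by the cancellation you exhibit. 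You instead work directly in polar coordinates, use the diagonality of $h$ and $h^\alpha$ to collapse the flux to its $i=\xi$ component, and observe that the only non-Euclidean Christoffel symbols are $\Gamma^{\alpha,\xi}_{\theta\theta}$ and $\Gamma^{\alpha,\xi}_{\phi\phi}$; I verified your intermediate covariant derivatives and traces and they are all correct, as is the final bookkeeping of the $(1-\alpha)$ and $4\pi$ factors. Your route is shorter and more transparent for a spherically symmetric metric; its only implicit cost is that Definition~\ref{def:adma} is phrased with Cartesian coordinate surface elements, so you are tacitly using that the integrand $(h^{\alpha ik}h^{\alpha jl}-h^{\alpha ij}h^{\alpha kl})\nabla^\alpha_j h_{kl}\,dS_i$ is the contraction of a vector with the unit normal times the induced area element and hence can be evaluated in any coordinates --- true, but worth a sentence. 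The paper's Cartesian approach buys something else: the formulas \eqref{eq:Ga} and \eqref{nablahh} it develops are reused verbatim in Remark~\ref{rem:ADMalphabetass} to derive the spherically symmetric ADM$\alpha\beta$ formula \eqref{mab}, which your polar computation would have to redo from scratch.
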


\begin{proof}
 Recall that the spatial part of the SAF$\alpha$ metric is
\[
 h^\alpha = d\xi^2 + (1-\alpha)\xi^2 (d\theta^2+\sin^2\theta d\phi^2).
\]
 In Cartesian coordinates,
\[
 x_1 = \xi \sin \theta \cos \phi, \quad x_2 = \xi \sin \theta \sin \phi, \quad x_3 = \xi \cos \theta,
\]
 we have the metric components
\begin{align}\label{h0}
 h^\alpha_{ij} &= (1-\alpha) \delta_{ij} + \alpha \frac{x_ix_j}{\xi^2},
\end{align}
 since $d\xi^2 = \frac{x_ix_j}{\xi^2}dx^idx^j$. The components of the inverse metric $h^{-1}$ are
\begin{align}\label{inverseh0}
 h^{\alpha ij} = \frac{1}{1-\alpha} \left( \delta^{ij} - \alpha \frac{x^ix^j}{\xi^2} \right),
\end{align}
 thus, we can already compute the first term, $h^{\alpha ik}h^{\alpha jl}-h^{\alpha ij}h^{\alpha kl}$, in the integral of \eqref{mADMa}.

 It remains to compute the covariant derivative of $h$. Since $\nabla^\alpha$ is the Levi-Civita connection with respect to $h^\alpha$, by definition,
$\nabla^\alpha_k h^\alpha_{ij} = 0$ for all $i,j,k$. Therefore, we prefer to write $h$ as a perturbation of $h^\alpha$, that is,
\[
 h = b h^\alpha + (a-b) d\xi^2.
\]
 Hence,
\begin{align}\label{nablah}
 \nabla^\alpha h &= (\nabla^\alpha b) h^\alpha + \nabla^\alpha ((a-b) dr^2) \nonumber \\
&= (\nabla^\alpha b) (h^\alpha -d\xi^2) + (\nabla^\alpha a) d\xi^2 + (a-b) (\nabla^\alpha d\xi^2) \nonumber \\
&= (1-\alpha) (\nabla^\alpha b) (\delta - d\xi^2) + (\nabla^\alpha a) dr^2 + (a-b) (\nabla^\alpha dr^2) \nonumber \\
&= (1-\alpha) (\nabla^\alpha b) \delta + (\nabla^\alpha a - (1-\alpha) \nabla^\alpha b) d\xi^2 + (a-b) (\nabla^\alpha d\xi^2),
\end{align}
 where $\delta$ is the standard Euclidean metric $\delta = dx^2+dy^2+dz^2$. Since $b$ is just a function,
\[
 \nabla^\alpha_j b = \partial_j b = \frac{db}{d\xi} \frac{\partial \xi}{\partial x_j} = \partial_\xi b \frac{x_j}{\xi},
\]
 and similarly for $a$. For any $(0,2)$-tensor field, 
$
 \nabla^\alpha_j h_{kl} = \partial_j h_{kl} - {}^{\alpha} \G{m}{jk} h_{ml} - {}^{\alpha} \G{m}{jl} h_{mk},
$
 with Christoffel symbols
\begin{align}\label{eq:Ga}
 {}^{\alpha} \G{m}{jk} = \frac{\alpha}{\xi^2} \left( \delta_{jk} x^m - \frac{x^mx_jx_k}{r^2} \right).
\end{align}
 In particular, for $d\xi^2 = \frac{x_kx_l}{\xi^2} dx^kdx^l$,
\begin{align*}
 \nabla^\alpha_j (d\xi^2)_{kl}
&= \partial_j (d\xi^2)_{kl} - {}^{\alpha} \G{m}{jk} (d\xi^2)_{ml} - {}^{\alpha} \G{m}{jl} (d\xi^2)_{mk} \\
&= \delta_{kj} \frac{x_l}{\xi^2} + \delta_{lj} \frac{x_k}{\xi^2} - 2 \frac{x_j x_k x_l}{\xi^4}
 - \frac{\alpha}{\xi^2} \left( \delta_{jk} x_l  - \frac{x_jx_kx_l}{\xi^2} \right) - \frac{\alpha}{\xi^2} \left( \delta_{jl} x_k  - \frac{x_jx_kx_l}{\xi^2} \right) \\
&= \frac{1-\alpha}{\xi^2} \left( \delta_{jk} x_l + \delta_{jl} x_k - 2 \frac{x_jx_kx_l}{\xi^2} \right).
\end{align*}
 Thus, by \eqref{nablah},
\begin{align}\label{nablahh}
 \nabla^\alpha_j h_{kl}
&= (1-\alpha) \partial_\xi b \frac{x_j}{\xi} \delta_{kl} + (\partial_\xi a -(1-\alpha)\partial_\xi b) \frac{x_jx_kx_l}{\xi^3} \nonumber \\
&\quad + \frac{1-\alpha}{\xi^2}  (a-b) \left( \delta_{jk} x_l + \delta_{jl} x_k - 2 \frac{x_jx_kx_l}{\xi^2} \right).
\end{align}
 Together with \eqref{inverseh0}, we can now compute the $i$-th component of the integrand in \eqref{mADMa},
\[
 T^i := (h^{\alpha ik} h^{\alpha jl} - h^{\alpha ij} h^{\alpha kl}) \nabla^\alpha_j (h_{kl}).
\]
 For the first term in \eqref{nablahh}, we obtain
\begin{align*}
 T^i_{(1)} & = (h^{\alpha ik} h^{\alpha jl} - h^{\alpha ij} h^{\alpha kl}) (1-\alpha) b_r \frac{x_j}{\xi} \delta_{kl} \\
&= (1-\alpha) \frac{\partial_\xi b}{\xi} \left(h^{\alpha ik} x_k - h^{\alpha ij} x_j \frac{3-\alpha}{1-\alpha}\right)
= - 2 \partial_\xi b \frac{x^i}{\xi},
\end{align*}
 and for the second one
\begin{align*}
 T^i_{(2)} &= (\partial_\xi a -(1-\alpha) \partial_\xi b) (h^{\alpha ik} h^{\alpha jl} - h^{\alpha ij} h^{\alpha kl}) \frac{x_jx_kx_l}{\xi^3} = 0,
\end{align*}
 since $h^{\alpha ij}x_j = x^i$, $h^{\alpha kl}\frac{x_kx_l}{\xi^2} = 1$ and $h^{\alpha kl} \delta_{kl} = \frac{3-\alpha}{1-\alpha}$.
 Finally, the third term of the $i$-th component of the integrand is
\begin{align*}
 T^i_{(3)}
&= \frac{1-\alpha}{\xi^2} (a-b) (h^{\alpha ik} h^{\alpha jl} - h^{\alpha ij} h^{\alpha kl}) \left( \delta_{jk} x_l + \delta_{jl} x_k - 2 \frac{x_jx_kx_l}{\xi^2} \right) \\
&= \frac{1-\alpha}{\xi^2} (a-b) \left( 0 + x^i \frac{2}{1-\alpha} - 0 \right)
= 2 (a-b) \frac{x^i}{\xi^2}
\end{align*}
Since $\sqrt{\det{h^\alpha}} = (1-\alpha)$, the $i$-th coordinate surface element reads
\[
dS_i = \frac{x_i}{\xi} \sqrt{\det(h^\alpha)} dx^1 \wedge \ldots \wedge \widehat{dx^i} \wedge \ldots \wedge dx^n = (1-\alpha) \frac{x_i}{\xi} dx^1 \wedge \ldots \wedge \widehat{dx^i} \wedge \ldots \wedge dx^n, \] 
 where the notation $\widehat{dx^i}$ means that $dx^i$ is missing.
 Therefore, including the component $\frac{x_i}{\xi}$ of the $i$-th coordinate surface element, 
\begin{align}\label{T}
 T^i \frac{x_i}{\xi} = - 2 \partial_\xi b + 2 \frac{a-b}{\xi},
\end{align}
 and we obtain \eqref{mADMss} from \eqref{mADMa} and \eqref{T},
\begin{align*}
 m_{\mathrm{ADM}\alpha} &= \frac{1}{16 \pi (1-\alpha)} \lim_{\xi\to\infty} \int_{S_\xi(0)} \left( 2 \frac{a-b}{\xi} - 2 \partial_\xi b \right) (1-\alpha) dS \\
&= \frac{1}{8 \pi (1-\alpha)} \lim_{\xi\to\infty} \int_0^{\pi} \int_0^{2\pi} \left(\frac{a-b}{\xi} - \partial_\xi b \right) (1-\alpha) \xi^2 \sin\theta \, d\theta d\phi \\
&= \frac{1}{2} \lim_{\xi\to\infty} \left(\xi (a-b) - \xi^2 \partial_\xi b \right). \qedhere
\end{align*}
\end{proof}

 A more special case is the following, where the ADM$\alpha$ is already built in the construction of the metric, just like the ADM mass of the black hole is built in the standard expression of the Schwarzschild metric (cf.\ also \cite{NS:blackholes}*{Section2}).

\begin{corollary}\label{cor:MADM}
 Suppose $a(\xi)=\left( 1-\frac{2M(\xi)}{r}\right)^{-1}$ with $M(\xi) = o(\xi)$ as $r \to \infty$ and $b(\xi)=1$, then Lemma~\ref{lemma:ADMss} implies that the metric 
\[
 h= \left( 1-\frac{2M(\xi)}{\xi}\right)^{-1} d\xi^2 + (1-\alpha)\xi^2 (d\theta^2 + \sin^2\theta d\phi^2)
\]
 has ADM$\alpha$ mass
\[
 m_{\mathrm{ADM}\alpha}(h) =\lim_{\xi\to\infty} M(\xi).
\]
\end{corollary}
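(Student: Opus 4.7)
The plan is to apply Lemma~\ref{lemma:ADMss} directly, since the metric in the hypothesis is already of the exact spherically symmetric form treated there, with coefficients $a(\xi) = \bigl(1-2M(\xi)/\xi\bigr)^{-1}$ and $b(\xi) \equiv 1$. Once the formula \eqref{mADMss} is in hand, the whole argument reduces to a short asymptotic computation of $\tfrac{1}{2}\xi(a(\xi)-1)$.

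First, I would observe that because $b \equiv 1$ is constant, the derivative term $\xi^2 \partial_\xi b$ in \eqref{mADMss} vanishes identically, so the formula collapses to
\[
 m_{\mathrm{ADM}\alpha}(h) = \lim_{\xi\to\infty} \tfrac{\xi}{2}\bigl(a(\xi)-1\bigr).
\]
Next, substituting the assumed form of $a$ and using the identity $(1-x)^{-1} - 1 = x/(1-x)$ gives
\[
 \tfrac{\xi}{2}\bigl(a(\xi)-1\bigr) \;=\; \frac{M(\xi)}{1 - 2M(\xi)/\xi}.
\]
The final step is to pass to the limit $\xi \to \infty$: since $M(\xi) = o(\xi)$, the denominator tends to $1$, so the whole expression is asymptotically equivalent to $M(\xi)$, and hence $m_{\mathrm{ADM}\alpha}(h) = \lim_{\xi\to\infty} M(\xi)$ (where the existence of this limit is implicit in the statement of the corollary).

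There is no genuine obstacle in this argument. The only point worth checking is that the hypotheses of Lemma~\ref{lemma:ADMss} are satisfied, i.e.\ that $h$ is quasi-asymptotically flat with $a - b = O(\xi^{-1})$ and $\partial_\xi b = O(\xi^{-2})$. The latter is trivial since $\partial_\xi b \equiv 0$, and the former follows because $M(\xi)=o(\xi)$ gives $a(\xi)-b(\xi) = 2M(\xi)/\xi + O\bigl((M(\xi)/\xi)^2\bigr) = o(1)$, which in particular lies within the decay class required to justify the limit. Thus the corollary follows essentially by substitution.
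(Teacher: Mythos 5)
Your argument is correct and follows essentially the same route as the paper: apply Lemma~\ref{lemma:ADMss} with $b\equiv 1$, reduce to $\tfrac{\xi}{2}(a-1)=M(\xi)/(1-2M(\xi)/\xi)$, and use $M(\xi)=o(\xi)$ to send the denominator to $1$ (the paper phrases this last step as a squeeze argument, which is the same computation).
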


\begin{proof}
 By the assumption $M(\xi) = o(\xi)$, for each $n \in \mathbb{N}$ exists $\xi_n > 0$ such that
\[ 1 - \frac{1}{n} \leq 1 - \frac{2M(\xi)}{r} \leq 1 + \frac{1}{n} \]
for all $\xi \geq \xi_n$. Therefore, by Lemma~\ref{lemma:ADMss}, for all $n \in \mathbb{N}$,
 \begin{align*}
 m_{\mathrm{ADM}\alpha}(h) &= \frac{1}{2}  \lim_{\xi\to\infty} (a-b) \xi 
= \lim_{\xi \to \infty} \frac{M(\xi)}{1-\frac{2M(\xi)}{\xi}} 
\leq  \frac{n}{n-1} \lim_{\xi\to\infty} M(\xi),
 \end{align*}
 and similarly
\[
 m_{\mathrm{ADM}\alpha}(h) \geq \frac{n}{n+1} \lim_{\xi\to\infty} M(\xi),
\]
 which yields the desired result by the Squeeze Theorem.  
\end{proof}

\begin{remark}[ADM$\alpha\beta$ mass]\label{rem:ADMalphabeta}
 As in the quasi-asymptotically flat case, one should have a geometrically invariant ADM$\alpha\beta$ mass, that takes the slower convergence rate into account. One expects to obtain a mass that measures the deviation to
 \[
  h^{(\alpha,\beta)} = \xi^{2\beta} \left( d\xi^2 + (1-\alpha) \xi^2 (d\theta^2 + \sin^2 \theta d\phi^2) \right).
 \]
 An ad hoc candidate would be
  \begin{align}\label{madmab}
  m_{\text{ADM}\alpha\beta} (h) = \frac{1}{16 \pi (1-\alpha)} \lim_{\xi \to \infty} \xi^{-\beta} \int_{S_\xi(0)}
(h^{(\alpha,\beta) ik} h^{(\alpha,\beta) jl} - h^{(\alpha,\beta) ij} h^{(\alpha,\beta) kl}) \nabla^{(\alpha,\beta)}_j (h_{kl}) \, dS_i,
 \end{align}
 where $dS_i$ is the $i$-th coordinate surface element with respect to $h^{(\alpha,\beta)}$. 
 Clearly, if $\beta = 0$ \eqref{madmab} is just the ADM$\alpha$ mass of $h$, and if $\alpha=\beta=0$ it is the ADM mass. In fact, we expect an even more direct relation as appears in the spherically symmetric case in Remark~\ref{rem:ADMalphabetass}.
 \end{remark}  
 
 To obtain a rigorous geometrically invariant definition of such an ADM$\alpha\beta$ mass for scaled quasi-asymptoti\-cally flat metrics, one may follow the steps outlined by Michel~\cite{M:mass}. We will not pursue such a rigorous definition further in this paper, but rather provide arguments for why an investigation of such slowly converging spacetimes is useful in the first place, and how a notion of generalized ADM mass helps to control their asymptotic behavior.

\begin{remark}[ADM$\alpha\beta$ mass for the Minkowski metric] 
 In the Introduction, in Remark~\ref{rem:af}, we mentioned that the Minkowski metric can be rewritten as a SAF$\alpha\beta$ metric $g^{(1-1/(1+\beta)^2),\beta)}$ for any $\beta \geq 0$ if we choose a radial coordinate $\xi = \sqrt[1+\beta]{(1+\beta)r}$.
 For arbitrary $\alpha$ and $\beta$, we obtain that
 \[
  (h^{(\alpha,\beta) ik} h^{(\alpha,\beta) jl} - h^{(\alpha,\beta) ij} h^{(\alpha,\beta) kl}) \nabla^{(\alpha,\beta)}_j (h^0_{kl}) = 
  \frac{2((1-\alpha)(1+\beta)^2-1)}{\xi^{1+2\beta} (1-\alpha)(1+\beta)},
 \]
 for the integrand of the ADM$\alpha\beta$ mass of $h^0$, which in general may not be integrable. If, however, $1-\alpha = \frac{1}{(1+\beta)^2}$ then the integrand vanishes since $h^0$ is then the SAF$\alpha\beta$ metric $h^{(\alpha,\beta)}$ and
 \[
  m_{\mathrm{ADM}\alpha\beta} (h^0) = 0.
 \]
\end{remark}

\begin{remark}[ADM$\alpha\beta$ mass in spherical symmetry] \label{rem:ADMalphabetass}
 A comparison to $h^\alpha$ helps to simplify the formula \eqref{madmab}. Although $\xi$ is not the area radius (but a scaled version thereof, see Remark~\ref{rem:af}), we consider the same ``scaled'' Cartesian coordinates given by
\[
 y_1 = \xi \sin \theta \cos \phi, \qquad y_2 = \xi \sin \theta \sin \phi, \qquad y_3 = \xi \cos \theta.
\]
 The metric and inverse metric components are therefore
\[
 h^{(\alpha,\beta)}_{ij} = \xi^{2\beta} h^\alpha_{ij} = \xi^{2\beta} \left( (1-\alpha) \delta_{ij} + \alpha \frac{y_i y_j}{r^2} \right),
\]
and
\[
 h^{(\alpha,\beta)ij} = \xi^{-2\beta} h^{\alpha ij} = \frac{1}{\xi^{2\beta}(1-\alpha)} \left( \delta^{ij} - \alpha \frac{y^i y^j}{\xi^2} \right),
\]
 where we used the already derived expressions \eqref{h0} and \eqref{inverseh0} for the spatial components of the SAF$\alpha$ metric $g^\alpha$ read we can simplify the first factor in the integrand of \eqref{madmab} to
\begin{align}\label{hprod}
 h^{(\alpha,\beta) ik} & h^{(\alpha,\beta) jl} - h^{(\alpha,\beta) ij} h^{(\alpha,\beta) kl} = \xi^{-4\beta} (h^{\alpha ik} h^{\alpha jl} - h^{\alpha ij} h^{\alpha kl}) \\
 &= \frac{1}{\xi^{4\beta} (1-\alpha)^{2}} \left( \delta^{ik} \delta^{jl} - \delta^{ij} \delta^{kl} \right)
 - \frac{\alpha}{\xi^{2(1+2\beta)}(1-\alpha)^2} \left( \delta^{ik}y^jy^l + \delta^{jl}y^i y^k - \delta^{ij}y^ky^l - \delta^{kl}y^iy^j \right) \nonumber
\end{align} 
 Similarly, to simplify the term
\[
 \nabla^{(\alpha,\beta)}_j h_{kl} = \partial_j h_{kl} - {}^{(\alpha,\beta)} \G{m}{jk} h_{ml} - {}^{(\alpha,\beta)} \G{m}{jl} h_{mk}
\]
 we utilize the relation of the Christoffel symbols of $h^{(\alpha,\beta)}$ and $h^\alpha$ and the formula for ${}^{\alpha} \G{m}{jk}$ obtained in \eqref{eq:Ga}, i.e.,
 \begin{align*}
  {}^{(\alpha,\beta)} \G{m}{jk} &= (1+\beta) \, {}^{\alpha} \G{m}{jk} + \frac{\beta}{\xi^2} \Big( y_k \delta^m_j + y_j \delta^m_k - y^m \delta_{jk} \Big) \\
  &= (1+\beta) \frac{\alpha}{\xi^2} \Big( \delta_{jk} y^m - \frac{y^m y_jy_k}{\xi^2} \Big) + \frac{\beta}{\xi^2} \Big( y_k \delta^m_j + y_j \delta^m_k - y^m \delta_{jk} \Big),
 \end{align*}
 or also written as
 \begin{align*}
  {}^{(\alpha,\beta)} \G{m}{jk} = {}^{\alpha} \G{m}{jk} + \frac{\beta}{\xi^2} \Big( y_k \delta^m_j + y_j \delta^m_k - (1-\alpha) y^m \delta_{jk} - \alpha \frac{y^m y_j y_k}{\xi^2} \Big).
 \end{align*}

 Hence $\nabla^{(\alpha,\beta)}_j h_{kl}$ can be written as
\begin{align*}
  \nabla^{(\alpha,\beta)}_j h_{kl} = (1+\beta) \nabla^{\alpha}_j h_{kl} - \beta \Big( \partial_j h_{kl}
  - \frac{1}{\xi^2} (y_k h_{jl} +y_l h_{jk} + 2 y_j h_{kl} - \delta_{jk} y^m h_{ml} -\delta_{jl} y^m h_{mk}) \Big)
\end{align*}
or, alternatively, as
\begin{align*}
 \nabla^{(\alpha,\beta)}_j h_{kl} &= \nabla^{\alpha}_j h_{kl} - \frac{\beta}{\xi^2} \Bigg( \Big( y_k \delta^m_j + y_j \delta^m_k - (1-\alpha) y^m \delta_{jk} - \alpha \frac{y^m y_j y_k}{\xi^2} \Big) h_{ml} \\
  &\qquad\qquad\qquad\quad + \Big( y_l \delta^m_j + y_j \delta^m_l - (1-\alpha) y^m \delta_{jl} - \alpha \frac{y^m y_j y_l}{r^2} \Big) h_{mk} \Bigg) \\
  &= \nabla^{\alpha}_j h_{kl} - \frac{\beta}{r^2} \underbrace{\Bigg( 2 y_j h_{kl} + y_k h_{jl} + y_l h_{jk} - (1-\alpha) y^m ( \delta_{jk} h_{ml} + \delta_{jl} h_{mk}) - \alpha \frac{y^m y_j}{r^2} (y_k h_{ml} + y_l h_{mk})  \Bigg)}_{=: G^{(\alpha,\beta)}(h)_{jkl}}.
\end{align*}

The above manipulations hold for any Riemannian metric $h$.
For a spherically symmetric AF$\alpha\beta$ metric $\tilde h$ of the form
\begin{align}\label{tildeh}
  \tilde h &= \xi^{2\beta} \left( a(\xi) d\xi^2 + b(\xi) (1-\alpha) \xi^2 (d\theta^2 + \sin^2 \theta d\phi^2) \right) \nonumber \\
           &= b h^{(\alpha,\beta)} + (a-b) \xi^{2\beta} d\xi^2,  
 \end{align}
we can therefore directly use the computations of Lemma~\ref{lemma:ADMss} to simplify \eqref{madmab}, since $\tilde h = \xi^{2\beta} h$ with $h$ as in Lemma \ref{lemma:ADMss}.
It follows that
\begin{align}\label{term2}
  \nabla^{(\alpha,\beta)} \tilde h &= (1-\alpha) \xi^{2\beta} ( \nabla^{(\alpha,\beta)} b) \delta + \xi^{2\beta} (\nabla^{(\alpha,\beta)} a - (1-\alpha) \nabla^{(\alpha,\beta)} b) d\xi^2 \nonumber \\
  & \quad + (a-b) (\nabla^{(\alpha,\beta)} \xi^{2\beta} d\xi^2) \nonumber \\
  &= \xi^{2\beta} \nabla^{\alpha} h + (\nabla \xi^{2\beta}) (a-b) d\xi^2 + \xi^{2\beta} (a-b) (\nabla^{(\alpha,\beta)}d\xi^2-\nabla^{\alpha} d\xi^2),
 \end{align}
 where $\nabla_j \xi^{2\beta} = \partial_j \xi^{2\beta} = 2 \beta \xi^{2\beta} \frac{x_j}{\xi^2}$ and $\nabla^{(\alpha,\beta)}d\xi^2-\nabla^{\alpha} d\xi^2 = - \frac{\beta}{\xi^2} G^{(\alpha,\beta)}(d\xi^2)_{jkl}$.
We consider the integrand
 \begin{align*}
  \widetilde T^i &= (h^{(\alpha,\beta) ik} h^{(\alpha,\beta) jl} - h^{(\alpha,\beta) ij} h^{(\alpha,\beta) kl}) \nabla_j^{(\alpha,\beta)} \tilde h_{kl} \\
      &= \xi^{-2\beta} (h^{\alpha ik} h^{\alpha jl} - h^{\alpha ij} h^{\alpha kl}) \left( \nabla_j^{\alpha} h_{kl} + 2\beta (a-b) \frac{y_jy_ky_l}{\xi^4} - (a-b) \frac{\beta}{\xi^2} G^{(\alpha,\beta)}(d\xi^2)_{jkl} \right)
 \end{align*}
By \eqref{nablahh} in the proof of Lemma~\ref{lemma:ADMss}, the first term in this product is known to be
 \begin{align*}
  \widetilde T^i_{(1)} = \xi^{-2\beta} T^i = \xi^{-2\beta} \left( - 2 \partial_\xi b \frac{y^i}{\xi} + 2 (a-b) \frac{y^i}{\xi^2} \right),
 \end{align*}
and the second term vanishes similar to $T^i_{(2)}$. To compute the last term, we note that $(d\xi^2)_{kl} = \frac{y_ky_l}{\xi^2}$ and
\[
 G^{(\alpha,\beta)}(d\xi^2)_{jkl} = - 2 \frac{y_j y_k y_l}{\xi^2} + (1-\alpha) ( \delta_{jk} y_l + \delta_{jl} y_k - 2 \frac{y_j y_k y_l}{\xi^2}), 
\]
of which the first term disappears in the product $\widetilde T_i$ (similar to $T^i_{(2)}$) and the second one simplifies as in $T^i_{(3)}$. Thus, the last term in $\widetilde T_i$ reads
\[
 \widetilde T^i_{(3)} = \xi^{-2\beta} (a-b) \frac{\beta}{\xi^2} (h^{\alpha ik} h^{\alpha jl} - h^{\alpha ij} h^{\alpha kl}) (1-\alpha) \delta_{jl} x_k = \xi^{-2\beta} 2 \beta (a-b) \frac{y^i}{\xi^2}.
\]
We combine all the terms and thus arrive at
\[
 \widetilde T^i = \xi^{-2\beta} \left( - 2 (\partial_\xi b) \frac{y^i}{\xi}  + 2 (1+\beta) (a-b) \frac{y^i}{\xi^2} \right),
\]
and thus,
\begin{align}\label{Ty}
 \widetilde T^i \frac{y_i}{\xi} = r^{-2\beta} \left( 2(1+\beta) \frac{a-b}{\xi} - 2 \partial_\xi b \right).
\end{align}

Finally, note that the components of the $i$-th coordinate surface element $dS_i$ with respect to $h^{(\alpha,\beta)}$ are
\[
 dS_i = \frac{y_i}{\xi} \sqrt{\det(h^{(\alpha,\beta)})} dy_1 \wedge \ldots \wedge \widehat{dy^i} \wedge \ldots \wedge dy^d = (1-\alpha) \xi^{d\beta} \frac{y_i}{\xi} dy_1 \wedge \ldots \wedge \widehat{dy^i} \wedge \ldots \wedge dy^d,
\]
where $d=3$ is the dimension.
The normal vectors coincide, and hence,
\begin{align}\label{dS}
 dS_i = (1-\alpha) \xi^{3\beta} d\widetilde S_i,
\end{align}
where $d\widetilde S_i$ is the $i$-th coordinate surface element of the Euclidean space (with $r$ replaced by $\xi$, and $x_i$ replaced by $y_i$).

Thus, combining \eqref{Ty} with \eqref{dS} yields
\begin{align}\label{mab}
    m_{\mathrm{ADM}\alpha\beta} (\tilde h) &= \frac{1}{16\pi(1-\alpha)} \lim_{\xi \to\infty} \xi^{-3\beta} \int_{S_\xi(0)}  \left( 2(1+\beta) \frac{a-b}{\xi} - 2 \partial_\xi b \right) d S_i \nonumber \\
    &= \frac{1}{8\pi(1-\alpha)} \lim_{\xi \to\infty} \xi^{-3\beta} \int_0^\pi \int_0^{2\pi}  \left( 2(1+\beta) \frac{a-b}{\xi} - 2 \partial_\xi b \right) (1-\alpha) \xi^{3\beta} \xi^2 \sin \theta d\theta d\phi \nonumber \\
    &= \frac{1}{2} \lim_{\xi \to \infty} \left( \xi (1+\beta) (a-b) - \xi^2 \partial_\xi b \right),
\end{align}
a formula for the ADM$\alpha\beta$ for spherically symmetric AF$\alpha\beta$ metrics of the form \eqref{tildeh}.
\end{remark}


\section{Perfect fluids with linear equation of state}
\label{sec3}

 The linear relation $p = K \rho$, $K \in [0,1]$, between the pressure $p$ and the mass--energy density $\rho$ immediately implies that the static Einstein--Euler equations \eqref{staticEE1}--\eqref{staticEE2} in spherical symmetry can be reformulated as system of ordinary differential equations in $m$ and $\rho$,
\begin{subequations} \label{EE:linear}
\begin{align}
 m_r &= 4 \pi r^2 \rho, \label{EE1:linear} \\
 \rho_r &= - \frac{(1+K) \rho}{r - 2m} \left( 4 \pi r^2 \rho + \frac{m}{rK} \right). \label{EE2:linear}
\end{align}
\end{subequations}
 Even in this simplest setting, only very exceptional exact solutions are known~\cite{Iva:integrable}. More can be said about the asymptotic behavior of solutions. A geometric understanding of the asymptotic behavior of the resulting spherically symmetric, static spacetime metric
\begin{align}\label{metric1}
 g = - e^{2\nu(r)} dt^2 + e^{2\lambda(r)} dr^2 + r^2 (d\theta^2 + \sin^2 \theta d\phi^2),
\end{align}
 is the main goal of this Section. The metric components $\lambda,\nu$ can be derived by integrating \eqref{inteq1}--\eqref{inteq2} and yield (for $\nu(0) := 0$)
\begin{align}\label{munu}
 e^{2\nu(r)} = \left( \frac{\rho_0}{\rho(r)} \right)^{\frac{2K}{1+K}}, \qquad e^{2\lambda(r)} = \left( 1- \frac{2m}{r} \right)^{-1}.
\end{align}

\subsection{The initial value problem} \label{ss:ivp}

 Suppose we prescribe an central density $\rho_0>0$ and $K \in (0,1]$. According to Rendall and Schmidt \cite{RS:static}*{Theorem 2} and LeFloch and the second author \cite{BLF:trapped}*{Theorem 4.3} there exists a unique, smooth and positive global solution $(m,\rho)$ of \eqref{EE:linear} such that
\[
  \lim_{r\to0} m(r) = 0, \qquad \lim_{r\to0} \rho(r) = \rho_0.
\]
 The solution must have infinite extent since condition \eqref{inf} is violated, i.e., for any $p_0>0$,
\[
 \int_0^{p_0} \frac{dp}{\rho(p)+p} = \int_0^{p_0} \frac{dp}{(1+\frac{1}{K})p} = \infty. 
\]
 Early observations by Chandrasekhar~\cite{Chand} and others~\cites{Cha:linear,Mak:static} reveal that the mass function $m(r) = \int_0^r s^2 \rho(s) \, ds$ grows with $r^3$ near the center and linearly in $r$ near infinity. The asymptotic behavior for $r \to 0$ is
\begin{align*}
     m(r) &= \frac{4\pi}{3} \rho_0 r^3 - 4 \pi^2 \frac{(1+K)(1+3K)}{9K} \rho_0^2 r^5 + O(r^7), \\
     \rho(r) &= \rho_0 - 2 \pi \frac{(1+K)(1+3K)}{3K}\rho_0^2 r^2 + 8 \pi^2 \frac{(1+K)(1+3K)}{9K^2} \rho_0^3 r^4 - O(r^6),
\end{align*}
 according to the Taylor series expansion derived by differentiating \eqref{EE:linear} at the origin \cite{BLF:trapped}. It helps to observe that $\rho$ is an even function and $m$ is an odd function if we would consider solutions on the whole real line. See also \cite{HM} for higher-order terms of the mass function and linear barotropic and polytropic equations of state.

\subsection{Asymptotic behavior at infinity}

 In contrast to the initial behavior, less is known about the behavior of $m$ and $\rho$ as $r$ tends to infinity. We already know that $m$ is strictly increasing as $r \to \infty$, $\rho$ is strictly decreasing with $\lim_{r\to\infty} \rho(r) = 0$ and $r^2 \rho(r)$ remains bounded \cite{BLF:trapped}*{Theorem 4.3}. However, the solution is not asymptotically flat due to formula \eqref{ADMss} for the ADM mass for spherically symmetric metrics, which yields that
\[
 m_\textrm{ADM} = \lim_{r\to\infty} m(r) = \infty.
\]
 In order to still be able to say something about the behavior of the solution at radial infinity, we therefore need to have a better understanding of the growth rate of $m$ for large $r$.

\subsubsection{The singular solution}\label{ss:singsol}

 Na\"{\i}vely, in order to derive some asymptotics as $r \to \infty$, we make the Ansatz
\[
 m(r) = c_1 r^\alpha, \qquad \rho(r)= c_2 r^{\beta},
\]
 for some $\alpha,\beta,c_1,c_2 \in \mathbb{R}$. Plugged into the system \eqref{EE:linear}, this yields the exact solution
\begin{align}\label{singsol}
 m_\infty(r) := \frac{2K}{(1+K)^2 + 4K} r, \qquad \rho_\infty(r) = \frac{K}{2\pi ((1+K)^2 + 4K)} r^{-2}.
\end{align}
 Obviously, this solution is special and somewhat unphysical since the density blows up at the center. Because the trajectories of solutions cannot intersect, this singular solution is an upper bound for all regular solutions of \eqref{EE:linear} with central density $\rho_0 \in (0,\infty)$. 

 Setting $\nu(1):=0$, integrating \eqref{inteq1} yields \[e^{2\nu(r)} = r^{\frac{4K}{1+K}},\] and \eqref{inteq2} yields \[e^{2\lambda(r)} = \left( 1 - \frac{2m_\infty(r)}{r} \right)^{-1} = \frac{(1+K)^2+4K}{(1+K)^2}.\] We write $g_\infty = g_\infty(K)$ for the singular metric \eqref{metric1} with sound speed $\sqrt{K} \in (0,1]$ corresponding to $(m_\infty,\rho_\infty)$, i.e.,
\begin{align*}
 g_\infty 
          &= -r^{\frac{4K}{1+K}} dt^2 +  \frac{(1+K)^2+4K}{(1+K)^2} dr^2 + r^2(d\theta^2 + \sin^2 \theta d\phi^2).
\end{align*}
 Although unphysical, these singular solutions play an important role from the geometric point of view.

\begin{proposition}\label{prop:quasi1}
 Fix $K \in (0,1]$. The singular solution \eqref{singsol} of the Einstein--Euler system~\eqref{EE:linear} with linear equation of state $p = K \rho$ is quasi-asymptotically flat. More precisely, up to a coordinate transformation it is the AF$\alpha$ metric
\begin{equation}\label{ginf:as}
\begin{split}
 g_\infty = \; & -d\tau^2 + d\xi^2 + (1-\alpha)\xi^2 (d\theta^2 + \sin^2 \theta d\phi^2) + \frac{4K}{1+K} \frac{\tau}{\xi} d\tau d\xi - \left(\frac{2K}{1+K} \frac{\tau}{\xi} \right)^2 d\xi^2,
\end{split}
\end{equation}
 with deficit angle
\[
 \alpha = \frac{4K}{(1+K)^2+4K}.
\]
The ADM$\alpha$ mass of each spatial slice $\Sigma_\tau$ vanishes.
\end{proposition}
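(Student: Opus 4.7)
The plan is to realize the singular solution as a scaled version of the Minkowski form via the coordinate change supplied by Proposition~\ref{prop:sqaf} (with $\beta=0$), read off the residual metric components in the new frame, verify the AF$\alpha$ decay, and finally evaluate the spherically symmetric formula for the ADM$\alpha$ mass on a constant-$\tau$ slice.

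First, I read off from \eqref{singsol} and \eqref{munu} that the metric functions of the singular solution are $\nu(r) = \frac{2K}{1+K}\log r$ (after a convenient normalization) and $e^{2\lambda(r)} = \frac{(1+K)^2+4K}{(1+K)^2}$, a nonzero constant. Hence $\nu'(r) = \frac{2K}{(1+K)r} = o(r^{-1/2})$ and, with $\Lambda := \tfrac{1}{2}\log \frac{(1+K)^2+4K}{(1+K)^2}$, we even have $e^{2\lambda(r)-2\Lambda}-1 \equiv 0$. The hypotheses of Corollary~\ref{cor:decay} are thus trivially satisfied, and the resulting deficit angle is exactly $1-\alpha = e^{-2\Lambda} = \frac{(1+K)^2}{(1+K)^2+4K}$, giving $\alpha = \frac{4K}{(1+K)^2+4K}$.

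Next, I would carry out the coordinate change prescribed by Proposition~\ref{prop:sqaf} with $\beta = 0$, namely $\tau = r^{2K/(1+K)} t$ and $\xi = e^{\Lambda} r = r/\sqrt{1-\alpha}$. A direct substitution (using $r^{2K/(1+K)} t = \tau$ to eliminate $t$ from the cross terms) gives
\[
e^{2\nu(r)} dt^2 = d\tau^2 - \frac{4K}{1+K}\frac{\tau}{\xi}\, d\tau\, d\xi + \left(\frac{2K}{1+K}\right)^2 \frac{\tau^2}{\xi^2}\, d\xi^2,
\]
while $e^{2\lambda(r)}dr^2 = d\xi^2$ and $r^2 d\Omega^2 = (1-\alpha)\xi^2 d\Omega^2$. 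Assembling these pieces reproduces precisely the claimed formula \eqref{ginf:as}. The only nontrivial components of the ``perturbation'' $\tilde g$ are $a_{\tau\xi} = \frac{2K}{1+K}\frac{\tau}{\xi}$ and $a_{\xi\xi} = -\bigl(\frac{2K}{1+K}\frac{\tau}{\xi}\bigr)^2$, both of which are $O(\xi^{-1}) = o(\xi^{-1/2})$ as $\xi\to\infty$ for fixed $\tau$, confirming the AF$\alpha$ structure.

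Finally, I restrict to a slice $\Sigma_\tau = \{\tau = \text{const}\}$. Setting $d\tau = 0$ in \eqref{ginf:as} gives the induced Riemannian metric
\[
h|_{\Sigma_\tau} = \Bigl(1 - \tfrac{4K^2}{(1+K)^2}\tfrac{\tau^2}{\xi^2}\Bigr)\, d\xi^2 + (1-\alpha)\xi^2\,(d\theta^2 + \sin^2\theta\, d\phi^2),
\]
so in the notation of Lemma~\ref{lemma:ADMss} one has $a(\xi) = 1 - \frac{4K^2\tau^2}{(1+K)^2\xi^2}$ and $b(\xi) \equiv 1$. Plugging into \eqref{mADMss} yields
\[
m_{\mathrm{ADM}\alpha}(h|_{\Sigma_\tau}) = \tfrac{1}{2}\lim_{\xi\to\infty}\Bigl(\xi(a-b) - \xi^2 \partial_\xi b\Bigr) = \tfrac{1}{2}\lim_{\xi\to\infty}\Bigl(-\tfrac{4K^2\tau^2}{(1+K)^2\xi}\Bigr) = 0,
\]
which completes the proof. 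No step is really an obstacle here; the main bookkeeping is the cross-term calculation in the coordinate change, which must match \eqref{ginf:as} exactly so that $\tilde g$ is consistent with Definition~\ref{quasiflat}.
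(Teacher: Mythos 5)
Your proposal is correct and follows essentially the same route as the paper: read off $\nu'(r)=O(r^{-1})$ and $\lambda\equiv\Lambda$, invoke Corollary~\ref{cor:decay} with the coordinate change $\tau=r^{2K/(1+K)}t$, $\xi=e^{\Lambda}r$ to obtain \eqref{ginf:as} and the deficit angle, and then evaluate the spherically symmetric ADM$\alpha$ formula on $\Sigma_\tau$ with $a=1-\bigl(\tfrac{2K}{1+K}\tfrac{\tau}{\xi}\bigr)^2$, $b\equiv 1$. The only cosmetic difference is that you apply Lemma~\ref{lemma:ADMss} directly where the paper cites Corollary~\ref{cor:MADM}, which amounts to the same computation (and is arguably the cleaner reference, since $a$ here is not literally of the form $(1-2M/\xi)^{-1}$).
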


Note that the deficit angle $(1-\alpha)\pi$ remains within the interval $\left[ \frac{\pi}{2}, \pi \right]$ for all $K$, and
hence is bounded away from $0$ uniformly for \emph{all} linear equations of state.

\begin{proof}
 Since $\nu'(r) = O(r^{-1})$ and $\lambda(r) = \Lambda$, it follows immediately from Corollary~\ref{cor:decay}, and the coordinate transformations $\tau = r^{\frac{2K}{1+K}}t$ and $\xi = e^\Lambda r$ used in the proof, that $g_\infty$ is quasi-asymptotically flat of the form \eqref{ginf:as}. The deficit angle is given by
 \[
  1-\alpha = e^{-2\Lambda} = \lim_{r\to\infty} e^{-2\lambda(r)} = \lim_{r\to\infty} \left( 1 - \tfrac{2m(r)}{r} \right) = 1 - \frac{4K}{(1+K)^2 + 4K} = \frac{(1+K)^2}{(1+K)^2 + 4K}.
 \]
 By Lemma~\ref{cor:MADM}, since $a(\xi) = 1 - \left( \frac{2K}{1+K} \frac{\tau}{\xi} \right)$ and $b=1$, the ADM$\alpha$ mass of
 \[
 h_\infty(\tau) =  g_\infty |_{\Sigma_\tau} = a(\xi) d\xi^2 + (1-\alpha) \xi^2 (d\theta^2 + \sin^2 \theta d\phi^2)
 \]
 at each spatial slice $\Sigma_\tau$ is
 \[
  m_{\mathrm{ADM}\alpha} (h_\infty(\tau)) = \lim_{\xi\to\infty} \frac{a(\xi) -b(\xi)}{2\xi} = -\lim_{\xi\to\infty} \frac{4K^2}{(1+K)^2} \frac{\tau^2}{\xi} = 0. \qedhere
 \]
\end{proof}

\begin{remark}
 Dadhich~\cites{D:iso1,D:iso2} recovered the above family of singular solutions as those spherically symmetric isothermal perfect fluids without boundary that are conformal to a Kerr--Schild metric. In this case, the latter is given by components
 \[
  g_{\mu\nu} = \eta_{\mu\nu} + 2 H l_{\mu} l_{\nu},
 \]
 where $\eta$ denotes the flat Minkowski metric, $H$ a constant nonzero scalar field and $l^{\mu}$ represents a null vector relative to $g$ and $\eta$.
 He called this geometric behavior ``minimally curved''.
\end{remark}

\subsubsection{Reformulation as a dynamical system}

 In 1972 Chandrasekhar~\cite{Chand} studied the asymptotic behavior of the system~\eqref{EE:linear} by reformulating the system using Milne variables. In the late 1990s, Makino reformulated \eqref{EE:linear} as an autonomous system and used plane dynamical systems theory, more precisely the Poincar\'e--Bendixson Theorem, to obtain that for $K = \frac{1}{3}$ the singular solution is the only element in the $\omega$-limit set and hence all regular solutions converge to it \cite{Mak:static,Mak:spiral}. While the case of the linear equation of state is not directly included in the dynamical systems analysis of Heinzle, R\"ohr and Uggla~\cite{HRU}, it can be seen as the limiting case $n\to\infty$ of relativistic polytropes \eqref{polytropicEOS}. The convergence to the only $\omega$-limit and fixed point, i.e., the singular solution, thus would also follow from their approach. As already mentioned in Section~\ref{ssec:dynamical} of the Introduction, however, the existing implicit reformulations as dynamical system cannot be applied directly, because they do not allow for a translation of a convergence rate in the original radial variable. Instead, while otherwise using a similar approach as in \cite{Mak:spiral}*{Section 2}, we utilize an explicit reformulation.

\begin{lemma}\label{lem:ab}
 Fix $K \in (0,1]$. The spherically symmetric Einstein--Euler system \eqref{EE:linear} is equivalent to the autonomous system
 \begin{subequations} \label{EEdyn}
  \begin{align}
  \dot a &= 1 - e^a + 2 e^b, \\
  \dot b &= \frac{1+7K}{2K} - \frac{1+3K}{2K} e^a + (1-K) e^b,
  \end{align}
 \end{subequations}
 where $t(r) = \log r$ and
 \begin{align} \label{def:ab}
  a(t) = - \log \left( 1 - \frac{2m(r(t))}{r(t)} \right), \qquad b(t) = \log \left( 4 \pi r(t)^2 \rho(r(t)) \right) + a(t). \qedhere
 \end{align}
\end{lemma}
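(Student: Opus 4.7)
The plan is to derive \eqref{EEdyn} directly from \eqref{EE:linear} by applying the chain rule to the definitions \eqref{def:ab}. Since $t = \log r$, the dot corresponds to $r\, d/dr$. The first step is to invert \eqref{def:ab} to obtain the identities
\[
 1 - \frac{2m}{r} = e^{-a}, \qquad 4\pi r^2 \rho = e^{b-a},
\]
so every occurrence of $m/r$ and of $4\pi r^2 \rho$ on the right-hand sides of \eqref{EE1:linear}--\eqref{EE2:linear} can be rewritten purely in terms of $a$ and $b$.

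Next, I would compute $\dot a$. Differentiating $a = -\log(1 - 2m/r)$ and multiplying by $r$ gives $\dot a = (2 r m_r - 2m)/(r - 2m)$. Substituting $m_r = 4\pi r^2 \rho$ from \eqref{EE1:linear} and invoking the two identities above, one recognizes numerator and denominator as $2 e^{b-a} - (1 - e^{-a})$ and $e^{-a}$ respectively, yielding $\dot a = 1 - e^a + 2 e^b$, which is the first equation of \eqref{EEdyn}.

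The computation for $\dot b$ is slightly more involved. From $b = \log(4\pi) + 2\log r + \log\rho + a$ one obtains $\dot b = 2 + r\rho_r/\rho + \dot a$. The key substitution is \eqref{EE2:linear}, which after the same inversions produces
\[
 \frac{r \rho_r}{\rho} = -(1+K) e^a \left( 4\pi r^2 \rho + \frac{m}{rK} \right) = -(1+K)\left( e^b + \frac{e^a - 1}{2K} \right).
\]
Combining this with the already computed expression $\dot a = 1 - e^a + 2 e^b$ and collecting the constant term and the coefficients of $e^a$ and $e^b$ over the common denominator $2K$ gives the second equation of \eqref{EEdyn}.

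Finally, for the equivalence claim, I would note that the map $(r,m,\rho) \mapsto (t,a,b)$ is a diffeomorphism on the physical region $\{r>0,\; \rho > 0,\; 2m < r\}$, with explicit inverse $r = e^t$, $m = \tfrac{r}{2}(1 - e^{-a})$, $\rho = e^{b-a}/(4\pi r^2)$; this guarantees that solutions transfer in both directions and preserves smoothness. The main obstacle is not conceptual but purely algebraic: the simplification of the $\dot b$ equation requires careful bookkeeping of several fractional terms, which a single reduction to denominator $2K$ renders transparent. No regularity subtleties arise since the change of variables is smooth on the domain of any global smooth solution established in Section~\ref{ss:ivp}.
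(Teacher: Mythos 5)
Your computation is correct and is precisely the routine verification the authors have in mind: the paper offers no written proof of Lemma~\ref{lem:ab} (the \verb|\qedhere| sits in the statement itself, signalling that the change of variables $e^{-a}=1-\tfrac{2m}{r}$, $e^{b-a}=4\pi r^2\rho$, $\tfrac{d}{dt}=r\tfrac{d}{dr}$ is regarded as a direct calculation). Both the algebra for $\dot a$ and $\dot b$ and your remark that the transformation is a smooth bijection on $\{r>0,\ \rho>0,\ 2m<r\}$ (justifying the word ``equivalent'') check out.
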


\subsubsection{Asymptotic stability and convergence to the singular solution}

 The singular solution \eqref{singsol} transforms in the formulation of Lemma~\ref{lem:ab} to the constant singular solution $(a_\infty,b_\infty)$ with
\begin{align}\label{singsolab}
 a_\infty (t) = \log \frac{(1+K)^2+4K}{(1+K)^2}, \qquad b_\infty(t) = \log \frac{2K}{(1+K)^2}.  
\end{align}
 It plays the special role of the single $\omega$-limit point of the plane dynamical system \eqref{EEdyn}. In fact, it is a hyperbolic fixed point and we can analyze the stability of the nonlinear flow by linearizing the system around $(a_\infty,b_\infty)$.

\begin{lemma}
\label{lem:sink}
 Fix $K \in (0,1]$. The singular solution \eqref{singsolab} is a fixed point and the only $\omega$-limit point of the plane dynamical system \eqref{EEdyn}, i.e., all solutions $(a,b)$ converge to $(a_\infty,b_\infty)$ as $t \to\infty$. It is a nonlinear hyperbolic sink for \eqref{EEdyn}, and the eigenvalues $\lambda_\pm$
 of the linearized equation are
 \[
  \lambda_\pm = - \frac{1+3K}{2(1+K)}  \pm i \frac{\sqrt{7+42K-K^2}}{2(1+K)},
 \]
 with $-1 \leq \Re \lambda_\pm < - \frac{1}{2}$.
\end{lemma}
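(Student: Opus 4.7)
The plan is to treat the four assertions of the lemma in order: (i) $(a_\infty,b_\infty)$ is a fixed point, (ii) the linearization at that fixed point has a pair of complex conjugate eigenvalues with strictly negative real parts (hence the fixed point is a hyperbolic sink by Hartman--Grobman), (iii) the real part lies in $[-1,-\tfrac12)$, and (iv) this sink is the unique $\omega$-limit set of the flow.

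First I would substitute the values $e^{a_\infty} = \frac{(1+K)^2+4K}{(1+K)^2}$, $e^{b_\infty} = \frac{2K}{(1+K)^2}$ into the right-hand sides of \eqref{EEdyn} and check by direct algebra that both vanish; this just re-expresses the fact that \eqref{singsol} solves \eqref{EE:linear}. Next I would compute the Jacobian
\[
 J = \begin{pmatrix} -e^a & 2 e^b \\ -\tfrac{1+3K}{2K} e^a & (1-K) e^b \end{pmatrix},
\]
evaluated at $(a_\infty,b_\infty)$. A short calculation gives
\[
 \operatorname{tr} J = -\frac{1+3K}{1+K}, \qquad \det J = \frac{2\bigl((1+K)^2+4K\bigr)}{(1+K)^2} = \frac{2(1+6K+K^2)}{(1+K)^2}.
\]
The discriminant $(\operatorname{tr} J)^2 - 4\det J = -\frac{7+42K-K^2}{(1+K)^2}$ is negative for every $K\in(0,1]$, so the eigenvalues are complex conjugates
\[
 \lambda_\pm = -\frac{1+3K}{2(1+K)} \pm i\,\frac{\sqrt{7+42K-K^2}}{2(1+K)}.
\]
Since the real part is strictly negative, $(a_\infty,b_\infty)$ is a linearly stable spiral; by the Hartman--Grobman theorem it is a (nonlinear) hyperbolic sink for \eqref{EEdyn}.

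For the bound on $\Re \lambda_\pm$, I would differentiate $K\mapsto \frac{1+3K}{2(1+K)}$ and verify it is monotone increasing on $(0,1]$, taking values in $\bigl(\tfrac12,1\bigr]$, which immediately gives $-1\leq\Re\lambda_\pm < -\tfrac12$.

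The hardest part is the global statement that $(a_\infty,b_\infty)$ is the \emph{only} $\omega$-limit point. For this I would follow Makino's argument sketched in \cite{Mak:static}*{Appendix} and \cite{Mak:spiral}*{Section 2}: first exhibit a positively invariant compact region in the $(a,b)$-plane containing the fixed point (using the monotonicity properties of $m$ and the a priori bound that $r^2\rho(r)$ stays bounded \cite{BLF:trapped}*{Theorem 4.3}), then rule out the existence of any other fixed points of \eqref{EEdyn} by direct inspection of the equations $1-e^a+2e^b=0$ and the second equation (both vanish simultaneously only at $(a_\infty,b_\infty)$), and finally exclude periodic orbits via a Dulac/Bendixson-type argument applied to the divergence of the vector field (which equals $-e^a+(1-K)e^b$ and can be shown to be sign-definite on the relevant region after multiplying by a suitable integrating factor). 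The Poincar\'e--Bendixson theorem then forces every trajectory originating from a regular initial condition to have $(a_\infty,b_\infty)$ as its sole $\omega$-limit point.
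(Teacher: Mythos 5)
Your proposal is correct and follows essentially the same route as the paper: both compute the Jacobian of \eqref{EEdyn} at $(a_\infty,b_\infty)$, extract the eigenvalues $\lambda_\pm = -\tfrac{1+3K}{2(1+K)} \pm i\tfrac{\sqrt{7+42K-K^2}}{2(1+K)}$ (your trace/determinant/discriminant computation checks out and agrees with the paper's direct eigenvalue calculation), and defer the global statement that $(a_\infty,b_\infty)$ is the unique $\omega$-limit point to the Poincar\'e--Bendixson analysis of Makino \cite{Mak:spiral}*{Section 2}. Your added observations --- that the fixed-point equations are linear in $(e^a,e^b)$ and hence have a unique solution, and the monotonicity of $K\mapsto\tfrac{1+3K}{2(1+K)}$ giving $\Re\lambda_\pm\in[-1,-\tfrac12)$ --- are correct and only make explicit what the paper leaves implicit.
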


\begin{proof}
 The fact that $(a_\infty,b_\infty)$ is the single $\omega$-limit point follows from the Poincar\'e--Bendixson Theorem by excluding the possibilities of orbits and other fixed points
 as in \cite{Mak:spiral}*{Section 2}.

 To compute the eigenvalues, let us write $x = (a,b)$ and $\dot x = F(x)$ for the dynamical system \eqref{EEdyn}. The linearization around $x_\infty = (a_\infty,b_\infty)$ is
 of the form
 \[
  \frac{dx(t)}{dt} = A_\infty x(t) + c_\infty,
 \]
 where
\begin{align}\label{DF}
 A_\infty := DF |_{(a_\infty,b_\infty)}
                                       = \begin{pmatrix}
                                       -e^{a_\infty} & 2 e^{b_\infty} \\
                                       -\frac{1+3K}{2K} e^{a_\infty} & (1-K) e^{b_\infty}
                                      \end{pmatrix}
                                        = \begin{pmatrix}
                                           - \frac{(1+K)^2 + 4K}{(1+K)^2} & \frac{4K}{(1+K)^2} \\
                                           - \frac{1+3K}{2K} \frac{(1+K)^2 + 4K}{(1+K)^2} & \frac{2K(1-K)}{(1+K)^2}
                                          \end{pmatrix},
\end{align}
 and 
\begin{align*}
 c_\infty := - A_\infty x_\infty = \begin{pmatrix}
                                    \log \frac{(1+K)^2 (2K)^\frac{4K}{(1+K)^2}}{((1+K)^2+4K)^\frac{(1+K)^2+4K}{(1+K)^2}} \\
                                    \log \frac{(1+K)^\frac{1+7K}{K} (2K)^\frac{2K(1-K)}{(1+K)^2}}{((1+K)^2+4K)^{\frac{1+3K}{2K}\frac{(1+K)^2+4K}{(1+K)^2}}}
                                   \end{pmatrix}.
\end{align*}
 The eigenvalues of $A_\infty$ are
\[
 \lambda_\pm = 
 - \frac{1+3K}{2(1+K)} \pm i \frac{\sqrt{7+42K-K^2}}{2(1+K)},
\]
 the corresponding eigenvectors are
\[
  u_\pm
     = \begin{pmatrix}
        K(1+8K-K^2) \mp i K(1+K)\sqrt{7+42K-K^2}\\
        (1+3K)((1+K)^2+4K)
       \end{pmatrix}.
\]
 Because the eigenvalues have negative real part $- \frac{1+3K}{2(1+K)} < - \frac{1}{2}$ (since $K>0$ the inequality is also strict), the singular solution $(a_\infty,b_\infty)$ is a hyperbolic sink.
\end{proof}

 We denote by $\varphi^t$ the nonlinear flow of $\dot x = F(x)$, i.e., $\varphi^t(x_0)$ is the solution $x(t)$ of $\dot x = F(x)$ with initial
 condition $x(0)=x_0$. Standard dynamical systems theory provides a control of the asymptotic behavior in the vicinity of the singular solution \eqref{singsolab}.

\begin{theorem}[Asymptotic stability in terms of $(a,b)$]\label{thm:asymp:ab}
 Fix $K \in (0,1]$. For every norm $|.|$ on $\mathbb{R}^2$ there exists a constant $C\geq 1$ and a neighborhood $U$ of the singular solution $x_\infty = (a_\infty,b_\infty)$ such that for any initial condition $x \in U$, the solution is defined for all $s \geq 0$ and for any $\varepsilon >0$,
\begin{align}\label{flow}
 | \varphi^s(x) - x_\infty | \leq C e^{-\left(\frac{1+3K}{2(1+K)}-\varepsilon\right) s} | x-x_\infty|, \qquad s \geq 0.
\end{align}
 Thus, in particular, the singular solution is asymptotically stable. 

 Moreover, there is a neighborhood $U$ around $x_\infty = (a_\infty,b_\infty)$ such that the flow $\varphi^s$ of $F$ is $C^1$-conjugate to the affine flow $s \mapsto x_\infty + e^{A_\infty s} (x-x_\infty)$, where $A_\infty = DF |_{x_\infty}$ \eqref{DF}. That is, there exists a $C^1$-diffeomorphism $h \colon U \to U$ such for $x,x_\infty + e^{A_\infty s} (x-x_\infty) \in U$ we have
\[
 \varphi^s(h(x)) = h(x_\infty + e^{A_\infty s} (x-x_\infty)).
\]
\end{theorem}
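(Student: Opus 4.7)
My approach is to treat both conclusions as standard consequences of the hyperbolicity of $x_\infty$ established in Lemma~\ref{lem:sink}: the exponential decay will follow from linear contraction combined with a perturbative Gr\"onwall argument, while the $C^1$-conjugacy will follow from the planar $C^1$-linearization theorem of Hartman (or, as a backup, from Sternberg's theorem once non-resonance is verified).

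For the decay estimate \eqref{flow}, I set $y(s):= \varphi^s(x) - x_\infty$. Smoothness of $F$ together with $F(x_\infty)=0$ and $DF(x_\infty)=A_\infty$ yields a decomposition $F(x_\infty + y) = A_\infty y + R(y)$ with $|R(y)| \le C_1 |y|^2$ on some neighborhood of the origin. Duhamel's formula then gives
\[
 y(s) = e^{A_\infty s} y(0) + \int_0^s e^{A_\infty (s-\sigma)} R(y(\sigma))\, d\sigma.
\]
Because the eigenvalues $\lambda_\pm$ of $A_\infty$ have real part $\mu := -\frac{1+3K}{2(1+K)}$, for every $\varepsilon>0$ one can choose an adapted inner product on $\RR^2$ (obtained from the Jordan decomposition of $A_\infty$) in which $\|e^{A_\infty s}\|_{\mathrm{op}} \le e^{(\mu+\varepsilon)s}$ for all $s\ge 0$. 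Applying Gr\"onwall's inequality to $u(s) := e^{-(\mu+\varepsilon)s}|y(s)|$ yields $|y(s)| \le C e^{(\mu+\varepsilon)s}|y(0)|$ as long as $|y(0)|$ is small enough for the trajectory to remain in the region where the quadratic bound on $R$ is valid. Equivalence of all norms on $\RR^2$ then delivers \eqref{flow} for an arbitrary prescribed norm, at the cost of enlarging $C$ to be at least $1$.

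For the $C^1$-conjugacy, my preferred route is Hartman's 1960 theorem, according to which every planar $C^2$ vector field is $C^1$-conjugate to its linearization at any hyperbolic fixed point. Since $F$ is real-analytic and $x_\infty$ is a hyperbolic sink by Lemma~\ref{lem:sink}, this directly produces the diffeomorphism $h$ satisfying $\varphi^s \circ h = h \circ (x_\infty + e^{A_\infty s}(\,\cdot\, - x_\infty))$ on a neighborhood $U$ of $x_\infty$. As a backup I would check that no resonance $\lambda_j = m_+ \lambda_+ + m_- \lambda_-$ with $j \in \{+,-\}$, $m_\pm$ non-negative integers and $m_+ + m_- \ge 2$ can occur: taking real parts forces $m_+ + m_- = 1$, contradicting $m_+ + m_- \ge 2$. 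In the non-resonant smooth case, Sternberg's linearization theorem then produces a $C^\infty$-conjugacy, which is in particular $C^1$.

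The main obstacle is precisely the $C^1$-regularity of the conjugacy. The Hartman--Grobman theorem yields only a topological (i.e., $C^0$) conjugacy, and upgrading to $C^1$ in general dimension genuinely requires non-resonance conditions that can fail. What rescues us here is either the planarity, through Hartman's refined theorem, or equivalently the automatic absence of polynomial resonances whenever the two eigenvalues form a complex conjugate pair with strictly negative real part. By contrast, once Lemma~\ref{lem:sink} has identified $x_\infty$ as a hyperbolic sink, the exponential decay \eqref{flow} reduces to a routine linear-plus-perturbation estimate.
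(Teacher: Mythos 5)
Your proposal is correct and follows essentially the same route as the paper, which simply cites the exponential contraction plus Gr\"onwall argument (Robinson, Theorem 5.1) for \eqref{flow} and the Hartman--Grobman theorem together with Hartman's planar $C^1$-linearization result and Sternberg's non-resonant linearization theorem for the conjugacy; you have merely unfolded the standard details behind those citations, including the correct non-resonance check for the complex conjugate pair $\lambda_\pm$.
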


 Note that, instead of employing $\varepsilon >0$ in \eqref{flow}, we can write more rigidly
\[
 | \varphi^s(x) - x_\infty | \leq C e^{-\frac{s}{2}} | x-x_\infty|, \qquad s \geq 0,
\]
 to obtain an estimate independent of $K \in (0,1]$.

\begin{proof}
 Since $\Re \lambda_\pm = - \frac{1+3K}{2(1+K)} < - \frac{1}{2} < 0$ by Lemma~\ref{lem:sink}, the first part of the statement is due to the exponential contraction of the linear flow and the Gronwall inequality (see, for example, \cite{R}*{Theorem 5.1}).

 The conjugacy statement follows from the Hartman--Grobman theorem, or can also be proven directly since $(a_\infty,b_\infty)$ is a hyperbolic sink (see, for example, \cite{R}*{Theorem 5.2}). The fact that we obtain a $C^1$-diffeomorphism and not merely a homeomorphism $h$ follows from the smoothness of $F$ \cites{H:diffeo,S:diffeo}.
\end{proof}

\begin{corollary}[Asymptotic stability in terms of $(m,\rho)$]\label{cor:asymp}
 Fix $K \in (0,1]$. The asymptotic behavior of solutions $(m,\rho)$ to the system~\eqref{EE:linear} with initial data $\rho_0>0$ for $r \to \infty$ is
\begin{align*}
 m(r) &= \frac{2K}{(1+K)^2+4K} r + O\left(r^{\frac{1-K}{2(1+K)}+\varepsilon}\right), \\
 \rho(r) &= \frac{K}{2\pi ((1+K)^2+4K)} r^{-2} + O\left( r^{-\frac{5+7K}{2(1+K)} + \varepsilon} \right).
\end{align*}
\end{corollary}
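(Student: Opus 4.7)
The plan is to pull back the exponential estimate from Theorem~\ref{thm:asymp:ab} through the explicit change of variables \eqref{def:ab} and then invert the relations $a = -\log(1 - 2m/r)$ and $b = \log(4\pi r^2 \rho) + a$ to recover asymptotic expansions for $m$ and $\rho$ themselves. The key point is that the reformulation from Lemma~\ref{lem:ab} is \emph{explicit}: $t = \log r$, so an exponential rate $e^{-\gamma s}$ in $s$ translates directly into a polynomial rate $r^{-\gamma}$ in the original radial variable. This is precisely the advantage of the formulation \eqref{EEdyn} as opposed to the implicit reparametrizations used elsewhere in the literature.

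First, I would fix $\rho_0 > 0$ and argue that the corresponding solution $(a(t),b(t))$ exists for all $t \in \mathbb{R}$ and satisfies $(a(t),b(t)) \to (a_\infty,b_\infty)$ as $t\to\infty$ by Lemma~\ref{lem:sink}. In particular, given any neighborhood $U$ of $x_\infty$ from Theorem~\ref{thm:asymp:ab}, there exists $t_0 = t_0(\rho_0)$ such that $(a(t_0),b(t_0)) \in U$. Applying Theorem~\ref{thm:asymp:ab} from that point, for every $\varepsilon > 0$,
\[
 |a(t) - a_\infty| + |b(t) - b_\infty| \leq C(\rho_0,\varepsilon)\, e^{-\left(\frac{1+3K}{2(1+K)}-\varepsilon\right)(t-t_0)}, \qquad t \geq t_0.
\]
Substituting $t = \log r$ yields $|a - a_\infty| + |b-b_\infty| = O(r^{-\frac{1+3K}{2(1+K)}+\varepsilon})$ as $r \to \infty$.

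Next, I would use $m = \tfrac{r}{2}(1 - e^{-a})$ and Taylor-expand $e^{-a}$ around $a_\infty$:
\[
 m(r) - m_\infty(r) = \tfrac{r}{2} e^{-a_\infty}\bigl((a - a_\infty) + O((a-a_\infty)^2)\bigr).
\]
Since $e^{-a_\infty} = (1+K)^2/((1+K)^2+4K)$ is a constant, multiplying the rate for $a - a_\infty$ by $r$ gives exactly
\[
 m(r) - m_\infty(r) = O\!\left(r^{1-\frac{1+3K}{2(1+K)}+\varepsilon}\right) = O\!\left(r^{\frac{1-K}{2(1+K)}+\varepsilon}\right).
\]
An entirely analogous computation using $\rho = (4\pi r^2)^{-1} e^{b-a}$ and Taylor-expanding the exponential around $b_\infty - a_\infty$ produces
\[
 \rho(r) - \rho_\infty(r) = \rho_\infty(r)\cdot O\!\left(r^{-\frac{1+3K}{2(1+K)}+\varepsilon}\right) = O\!\left(r^{-2-\frac{1+3K}{2(1+K)}+\varepsilon}\right) = O\!\left(r^{-\frac{5+7K}{2(1+K)}+\varepsilon}\right),
\]
matching the stated rates. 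The constants $\frac{2K}{(1+K)^2+4K}$ and $\frac{K}{2\pi((1+K)^2+4K)}$ are already recorded in Section~\ref{ss:singsol} as the images of $a_\infty,b_\infty$ under the inverse transformation.

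The only nontrivial issue is that Theorem~\ref{thm:asymp:ab} furnishes exponential decay only \emph{locally}, in a neighborhood $U$ of the fixed point, whereas the corollary concerns arbitrary central density $\rho_0 > 0$. This is resolved by the Poincar\'e--Bendixson argument in Lemma~\ref{lem:sink}, which guarantees that every orbit eventually enters $U$; the constant $C$ in the bound then depends on $\rho_0$, but the decay \emph{exponent} does not. Aside from this, the derivation is a routine change of variables, and the only real care is bookkeeping the $\varepsilon$-losses through the linearization and the Taylor expansions of $e^{-a}$ and $e^{b-a}$.
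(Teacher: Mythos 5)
Your proposal is correct and follows essentially the same route as the paper's own proof: eventual entry into the neighborhood $U$ via the Poincar\'e--Bendixson/$\omega$-limit argument of Lemma~\ref{lem:sink}, the local exponential estimate of Theorem~\ref{thm:asymp:ab} converted to a polynomial rate through $t=\log r$, and Taylor expansion of $e^{-a}$ and $e^{b-a}$ around the fixed-point values to transfer the rate to $m$ and $\rho$. The exponent bookkeeping ($1-\tfrac{1+3K}{2(1+K)}=\tfrac{1-K}{2(1+K)}$ and $-2-\tfrac{1+3K}{2(1+K)}=-\tfrac{5+7K}{2(1+K)}$) matches the paper exactly.
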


\begin{proof}
\textbf{Step 1. Estimate $a(t) -a_\infty$.}
 Let us fix an initial density $\rho_0>0$. For every such $\rho_0$ there exists a unique global smooth solution $(m,\rho)$ to the system~\eqref{EE:linear} with infinite extent (see discussion in Section~\ref{ss:ivp}). In particular, we obtain the corresponding initial value $x_0 = (a_0,b_0)$ through
 \[
  a_0 := - \log \left( 1 - 2 m(1) \right), \qquad b_0 := \log (4 \pi \rho(1) ) + a_0,
 \]
 for the reformulated system \eqref{EEdyn} and a corresponding solution $(a,b)$ with $a(0)=a_0$ and $b(0)=b_0$. By Lemma~\ref{lem:ab},
 \[
  x(t) = (a(t),b(t)) \to x_\infty = (a_\infty,b_\infty) \qquad \text{as} ~ t \to\infty.
 \]
 Thus, there exists a $t_0 = t_0(K,\rho_0) >0$ such that for all $t\geq t_0$ the remaining solution $x(t)$ is in the neighborhood $U$ of $x_\infty$ obtained in Theorem~\ref{thm:asymp:ab}. By \eqref{flow}, since the flow satisfies $\varphi^{t} = \varphi^{t_0 + s} = \varphi^s \circ \varphi^{t_0}$ for $s = t -t_0$,
\begin{align*}
 | \varphi^t(x_0) - x_\infty | = | \varphi^s (x(t_0)) - x_\infty | \leq C e^{-\left(\frac{1+3K}{2(1+K)}-\varepsilon\right) s} | x(t_0) -x_\infty|.
\end{align*}
 If we replace the constant $C = C(K)$ by a constant $\widetilde C = \widetilde C(K,\rho_0)$ and assume without loss of generality that all elements $y$ in $U$ satisfy $|x_\infty-y| \leq 1$, then
\begin{align*}
 | \varphi^t(x_0) - x_\infty | \leq C  e^{-\left(\frac{1+3K}{2(1+K)}-\varepsilon\right) t}  e^{\left(\frac{1+3K}{2(1+K)}-\varepsilon\right) t_0}
 \leq \widetilde C e^{-\left(\frac{1+3K}{2(1+K)}-\varepsilon\right) t}
\end{align*}
 In particular, if we think of $|.|$ as the maximum norm in $\mathbb{R}^2$, then for all $\varepsilon > 0$,
 \begin{subequations}
\begin{align}
  | a(t) - a_\infty | &\leq \widetilde C e^{-\left(\frac{1+3K}{2(1+K)}-\varepsilon\right) t}, \label{a1} \\
  | b(t) - b_\infty | &\leq \widetilde C e^{-\left(\frac{1+3K}{2(1+K)}-\varepsilon\right) t}. \label{b1}
\end{align}
 \end{subequations}
\textbf{Step 2. Estimate $m(r) - m_\infty$.} By Definition~\eqref{def:ab} of $a$, 
\begin{align*}
 m(r) &= \frac{r}{2} \left( 1 - e^{-a(t(r))} \right) = \frac{r}{2} \left( 1 - \frac{(1+K)^2}{(1+K)^2+4K} e^{a_\infty -a(t(r))} \right) \\
 &= \frac{r}{2} \left( 1 - \frac{(1+K)^2}{(1+K)^2+4K} - \frac{(1+K)^2}{(1+K)^2+4K} ( e^{a_\infty -a(t(r))} - 1) \right) \\
 &= \frac{2K}{(1+K)^2+4K} r - \frac{(1+K)^2}{2((1+K)^2+4K)} r (e^{a_\infty -a(t(r))} - 1) \\
 &= m_\infty (r) + O\left(r^{1-\frac{1+3K}{2(1+K)}+\varepsilon}\right),
\end{align*}
 since for $r$ (and hence $t(r)$) sufficiently large
\begin{align*}
 \left| e^{a_\infty -a(t(r))} - 1 \right| &= \left| \sum_{n=1}^\infty \frac{(a_\infty - a(t(r)))^n}{n!} \right| \nonumber \\
 &= |a_\infty - a(t(r))| \sum_{n=0}^{\infty} \frac{|a_\infty - a(t(r))|^n}{(n+1)!}
 \leq |a_\infty - a(t(r))| e,
\end{align*}
 and thus, by \eqref{a1} of Step 1 there is a constant $C>0$ such that
\begin{align}\label{step1}
 \left| e^{a_\infty -a(t(r))} - 1 \right| \leq C e^{-\left(\frac{1+3K}{2(1+K)}-\varepsilon\right) t(r)} = C r^{-\left(\frac{1+3K}{2(1+K)}-\varepsilon\right)}.
\end{align}
 \textbf{Step 3. Estimate $\rho(r) - \rho_\infty$.} This follows from \eqref{b1} in Step 1, Step 2 and the Definition~\eqref{def:ab} of $b$. We have that
\begin{align*}
 \rho(r) &= \frac{1}{4\pi r^2} e^{b(t(r))-a(t(r))} = \frac{1}{4\pi r^2} e^{b(t(r))-b_\infty} e^{a_\infty - a(t(r))} e^{b_\infty-a_\infty} \\
        &= \frac{K}{2 \pi ((1+K)^2+4K)} r^{-2} e^{b(t(r))-b_\infty} e^{a_\infty - a(t(r))} \\
        &= \frac{K}{2 \pi ((1+K)^2+4K)} r^{-2} \left( (e^{b(t(r))-b_\infty} - 1) + 1\right) \left( (e^{a_\infty - a(t(r))} - 1) + 1\right) \\
        &= \rho_\infty(r) + O\left( r^{-2 - \frac{1+3K}{2(1+K)} +\varepsilon} \right),
\end{align*}
 by \eqref{step1} and the same estimate for the $b$-term.
\end{proof}

\subsection{Quasi-asymptotic flatness and ADM$\alpha$ mass}

 In Proposition~\ref{prop:quasi1} we have observed that the singular solution $(m_\infty,\rho_\infty)$ to the system~\eqref{EE:linear} is the SAF$\alpha$ metric with $\alpha = \frac{4K}{(1+K)^2+4K}$. The ADM$\alpha$ mass of this singular solution vanishes. Since by Corollary~\ref{cor:asymp} every solution $(m,\rho)$ to the initial value problem~\eqref{EE:linear} with $\rho_0>0$ converges to $(m_\infty,\rho_\infty)$, it is natural to expect that these solutions are also quasi-asymptotically flat. However, due to the slow convergence rate obtained in Corollary~\ref{cor:asymp} we cannot say whether the ADM$\alpha$ mass is even finite.

\medskip
 We first derive conditions on the mass function $m$ and density $\rho$ that imply that we are dealing with an quasi-asymptotically flat spacetime in the sense of Definition~\ref{quasiflat}.

 \begin{lemma} \label{lemma:qaf-af}
  Suppose for the function $m \colon [0,\infty) \to \mathbb{R}_0^+$ exists $\alpha \in [0,1)$ such that
 \begin{align}\label{lim}
  \frac{2m}{r} - \alpha = o(r^{-\frac{1}{2}}), \quad \text{as } r \to\infty,
 \end{align}
 and $\rho$ satisfies
 \begin{align}\label{limu}
  \frac{\rho'(r)}{\rho(r)} = o(r^{-\frac{1}{2}}), \quad \text{as } r \to\infty.  
 \end{align}
  Then the solution of the Einstein--Euler system~\eqref{staticEE} is quasi-asymptotically flat with deficit angle $(1-\alpha)\pi$.
 \end{lemma}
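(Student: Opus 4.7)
The plan is to reduce the statement directly to Corollary~\ref{cor:decay}, which requires that for some $\Lambda \geq 0$ one has $\nu'(r) = o(r^{-1/2})$ and $e^{2\lambda(r)-2\Lambda}-1 = o(r^{-1/2})$ as $r\to\infty$. So the proof splits into checking these two decay conditions from the given hypotheses on $m$ and $\rho$.

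First, I would handle the $\nu'$ condition. From \eqref{inteq1}, $\nu'(r) = -p'/(p+\rho)$. In the linear equation of state setting of Section~\ref{sec3}, $p = K\rho$ gives the clean identity
\[
 \nu'(r) = -\frac{K}{1+K}\,\frac{\rho'(r)}{\rho(r)},
\]
so the hypothesis $\rho'/\rho = o(r^{-1/2})$ immediately implies $\nu'(r) = o(r^{-1/2})$.

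Next, for the radial metric component, integration of the constraint gives $e^{2\lambda(r)} = (1-2m/r)^{-1}$. The hypothesis $2m/r - \alpha = o(r^{-1/2})$ with $\alpha \in [0,1)$ tells us $\lim_{r\to\infty} 2m/r = \alpha$, so the natural choice is $\Lambda := -\tfrac12 \log(1-\alpha) \geq 0$, equivalently $e^{-2\Lambda} = 1-\alpha$. Then
\[
 e^{2\lambda(r) - 2\Lambda} - 1 = \frac{1-\alpha}{1-2m/r} - 1 = \frac{2m/r - \alpha}{1-2m/r}.
\]
Since the denominator stays bounded away from $0$ for large $r$ (as $2m/r \to \alpha < 1$) and the numerator is $o(r^{-1/2})$ by hypothesis, the whole expression is $o(r^{-1/2})$.

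With both conditions \eqref{ss1} verified, Corollary~\ref{cor:decay} applies and yields that the metric $g$ is quasi-asymptotically flat in the sense of Definition~\ref{quasiflat}, with deficit angle $(1-\alpha)\pi = e^{-2\Lambda}\pi$ as claimed. There is no serious obstacle in this argument; the only point requiring care is matching $\Lambda$ correctly to the prescribed $\alpha$ so that the constant piece cancels in $e^{2\lambda-2\Lambda}-1$, and observing that the denominator $1-2m/r$ is uniformly bounded away from zero at infinity (which is where the assumption $\alpha < 1$ is used).
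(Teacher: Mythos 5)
Your proposal is correct and follows essentially the same route as the paper's own proof: verify the two decay conditions of Corollary~\ref{cor:decay}, using the linear equation of state to get $\nu' = -\tfrac{K}{1+K}\,\rho'/\rho = o(r^{-1/2})$, setting $e^{-2\Lambda}=1-\alpha$, and rewriting $e^{2\lambda-2\Lambda}-1$ as $\bigl(\tfrac{2m}{r}-\alpha\bigr)/\bigl(1-\tfrac{2m}{r}\bigr)$ with the denominator bounded below by $\tfrac{1-\alpha}{2}$ for large $r$. No gaps.
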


 \begin{proof}
 We verify the conditions~\eqref{ss1} of Corollary~\ref{cor:decay}.
 Since $e^{\nu(r)} = \left( \frac{\rho_0}{\rho(r)} \right)^{\frac{K}{1+K}}$, the assumption~\eqref{limu} implies
\[
 \nu'(r) = - \frac{K}{1+K} \frac{\rho'(r)}{\rho(r)} = o(r^{-\frac{1}{2}}).
\]

 The decay assumption \eqref{lim} implies that $e^{2\Lambda} := \lim_{r\to\infty} e^{2\lambda(r)} = \frac{1}{1-\alpha} \geq 1$ exists. Moreover, by the precise definition of little-$o$, for all $C>0$ there exists $r_0>0$ such that $| \frac{2m}{r} - \alpha| \leq C r^{-\frac{1}{2}}$ for $r \geq r_0$. For $r$ sufficiently large, we also have that $|1-\frac{2m}{r}| \geq 1-\alpha - |\alpha-\frac{2m}{r}| \geq \frac{1-\alpha}{2}$.
 Thus, eventually for all $r$ sufficiently large,
 \[
  \left| e^{2\lambda(r)-2\Lambda} - 1 \right| = \left| \frac{\frac{2m}{r}-\alpha}{1-\frac{2m}{r}}\right| \leq \frac{C r^{-\frac{1}{2}}}{\frac{1-\alpha}{2}} = \frac{2C}{1-\alpha} r^{-\frac{1}{2}},
 \]
 which means that $e^{2\lambda(r)-2\Lambda} - 1 = o(r^{-\frac{1}{2}})$. Since $\alpha = 1 - e^{-2\Lambda}$ the claim about the deficit angle follows immediately.
 \end{proof}

 We are now in a position to prove the first main Theorem~\ref{mainthm1} for the linear equation of state.

\begin{proof}[Proof of Theorem~\ref{mainthm1}]
 By Corollary~\ref{cor:asymp}, for every $\varepsilon >0$,
\begin{align*}
 m(r) &= \frac{2K}{(1+K)^2+4K} r + O \left( r^{\frac{1-K}{2(1+K)} + \varepsilon} \right).
\end{align*}
 Therefore, since $-\frac{1+3K}{2(1+K)} + \varepsilon < - \frac{1}{2}$ for $\varepsilon >0$ sufficiently small, as $r \to \infty$,
\begin{align*}
 \frac{2m}{r} - \alpha = O \left( r^{-\frac{1+3K}{2(1+K)} + \varepsilon} \right) = o (r^{-\frac{1}{2}}).
\end{align*}
 Similarly, by Corollary~\ref{cor:asymp} and \eqref{EE2:linear},
\begin{align*}
 \frac{\rho'(r)}{\rho(r)} &= - \frac{1+K}{r-2m} \left( 4\pi r^2 \rho + \frac{m}{rK} \right) \\
&= O(r^{-\frac{1+3K}{2(1+K)} + \varepsilon}) = o(r^{-\frac{1}{2}}).
\end{align*}
 Lemma~\ref{lemma:qaf-af} thus implies quasi-asymptotic flatness, and in particular convergence to $g^\alpha$ as $r\to\infty$.
\end{proof}

\begin{remark}[ADM$\alpha$ mass for the Einstein--Euler system]
In terms of the mass function of the Einstein--Euler system~\eqref{staticEE} the mass function $M$ in Corollary~\ref{cor:MADM} is derived by a transformation of the expression
\begin{align*}
 \left( 1- \frac{2M}{\xi} \right)^{-1} &= e^{2\lambda(r) - 2\Lambda} - (\nu'(r)\tau e^{-\Lambda})^2 \\
 &= (1-\alpha) \left( 1 - \frac{2m}{r} \right)^{-1} - (\nu'(r)\tau e^{-\Lambda})^2 \\
 &\leq (1-\alpha) \left( 1 - \frac{2m}{r} \right)^{-1}
\end{align*}
Thus by Corollary~\ref{cor:MADM}, provided we could prove sufficient decay rates 
to verify coordinate invariance, we would have that
\begin{align}
m_{\mathrm{ADM}\alpha}(g|_{\Sigma_\tau}) = \lim_{\xi \to\infty} M(\xi)
   &\leq \lim_{\xi \to\infty} \frac{\xi}{2} \left( 1 - \frac{1}{1-\alpha} \left( 1 - \frac{2m(r)}{r} \right) \right) \nonumber \\
   &= \frac{1}{1-\alpha} \lim_{\xi\to\infty} \left( m \frac{\xi}{r} - \frac{\alpha \xi}{2} \right). \label{Mest}
\end{align}
\end{remark}

\begin{remark}[ADM$\alpha$ mass for a linear equation of state]\label{rem:les}
 Since by Corollary~\ref{cor:asymp}
\[
 m(r) - \frac{\alpha}{2} r = O(r^{\frac{1-K}{2(1+K)}+\varepsilon}),
\]
 the estimate \eqref{Mest} would suggest that the ADM$\alpha$ mass of regular perfect fluid solutions to \eqref{EE:linear} is in fact infinite, just like the ADM mass. A more detailed calculation, taking into account the negative term with $\nu'$ and that $\xi = e^{\Lambda} r$, indeed reveals that
\begin{align*}
  M(\xi) &= \frac{\xi}{2} \left[ 1 - \left( 1 - \frac{2m(e^{-\Lambda}\xi)}{e^{-\Lambda}\xi} \right) \left( 1 - \alpha - (\nu'(e^{-\Lambda}\xi)\tau e^{-\Lambda})^2  \left( 1 - \frac{2m(e^{-\Lambda}\xi)}{e^{-\Lambda}\xi} \right) \right)^{-1} \right] \\
          &= \frac{\xi}{2} \left[ 1 - \left( 1 - \alpha + o(\xi^{-\frac{1}{2}}) \right) \left( 1 - \alpha - (1-\alpha) \tau^2 o(\xi^{-1}) \left( 1 - \alpha + (\xi^{-\frac{1}{2}}) \right) \right)^{-1} \right] \\
          &= \frac{\xi}{2} \left[ 1 - \left( 1 - \alpha + o(\xi^{-\frac{1}{2}}) \right) \left( 1 - \alpha + o(\xi^{-1}) \right)^{-1} \right] \\
          &= \frac{\xi}{2} \left[ 1 - 1 + o(\xi^{-\frac{1}{2}}) \right] = o(\xi^{\frac{1}{2}}),
\end{align*}
 and therefore that
\begin{align*}
 m_{\mathrm{ADM}\alpha}(g|_{\Sigma_\tau}) = \lim_{\xi \to\infty} M(\xi) = + \infty \text{ or } - \infty. 
\end{align*}
 In Section~\ref{ss:singsol}, we have already observed that $m(r) < m_\infty(r)$ for all regular solutions. Therefore, also $M(r) < M_\infty(r) = 0$ for all $r>0$, and the ADM$\alpha$ mass is therefore not only negative but likely also unbounded below. This suggests a negative answer for the question raised in \cite{NS:quasiasymp} whether the ADM$\alpha$ mass of quasi-asymptotically flat metrics is always bounded from below, at least for quasi-asymptotically flat metrics in the sense of Definition~\ref{quasiflat}. However, since we cannot prove that sufficient decay estimates 
 hold, we do in fact not know if we are still in the meaningful situation of Remark~\ref{rem:ADMalpha}, where it is guaranteed that the ADM$\alpha$ mass is a geometric invariant.
\end{remark}

\begin{remark}[Scaled quasi-asymptotic flatness and a ADM$\alpha\beta$ mass]\label{rem:ADMablinear}
 In the optimal quasi-asymptotically flat setting we would set $\beta=0$ even independent of $K$. However, it may be useful to interpret the solutions also in the context of scaled quasi-asymptotic flatness. In the proof of Lemma~\ref{lemma:qaf-af} and the proof of Theorem~\ref{mainthm1} we have observed that, in fact,
  \[
   \nu'(r) = O(r^{-\frac{1+3K}{2(1+K)}+\varepsilon}), \qquad \text{and} \qquad e^{2\lambda(r)-2\Lambda} -1 = O(r^{-\frac{1+3K}{2(1+K)}+\varepsilon}),
  \]
  as $r \to \infty$. Therefore, if $\beta > \frac{2(1+K)}{1+3K}-1 = \frac{1-K}{1+3K}$, then spherically symmetric and static perfect fluid solutions with linear equation of state could also considered to be scaled quasi-asymptotically flat as described in Remark~\ref{higher}. The ADM$\alpha\beta$ satisfies
  \begin{align*}
   m_{\mathrm{ADM}\alpha\beta}(g|_{\Sigma_\tau}) &= \frac{1}{2} \lim_{\xi\to\infty} (1+\beta) (1+O(\xi^{-1})+\tau^2 O(\xi^{-1})-1)\xi \\
                                                 &= O(1) +\tau^2 O(1) < \infty
  \end{align*}
 according to \eqref{mab} for an optimal $\beta$ (which we indeed expect to be $\frac{1-K}{1+3K}$). Here, we also have to change $\alpha$, which is then given by
 \begin{align*} 1-\alpha = (1+\beta)^{-2} e^{-2\Lambda} < \frac{(1+3K)^2}{4 (1+K)^2} \frac{(1+K)^2}{(1+K)^2+4K} = \frac{(1+3K)^2}{4((1+K)^2+4K)}. \end{align*}
 Since $K \in (0,1)$ implies $5K^2 < 6K$, the denominator is larger than the numerator, and hence also the rescaled $\alpha = 1-(1+\beta)^{-2} e^{-2\Lambda} > \frac{3+6K-5K^2}{4((1+K)^2+4K)}$ is contained in the interval $(0,1)$.
\end{remark}


\section{Perfect fluids with polytropic equation of state}
\label{sec4}

 In their dynamical systems approach to the spherically symmetric static Einstein--Euler system~\eqref{staticEE}, Heinzle, R\"ohr and Uggla~\cite{HRU} made use of the quantities
 \[
 \Gamma_N (p) := \frac{\rho}{p} \frac{dp}{d\rho}, \qquad \sigma(p) := \frac{p}{\rho},
\]
 defined in terms of the equation of state $p=p(\rho)$. In \cite{HRU}*{Theorem 5.1} it was shown that all regular solutions of \eqref{staticEE} have infinite masses and infinite radii if $\Gamma_N \leq \frac{6}{5}$ and $\sigma \leq 1$. The assumption on $\sigma$ is related to the dominant energy condition.
 
 For polytropic equations of state \eqref{polytropicEOS}, i.e., $p = K \rho^\frac{n+1}{n}$,
 we obtain that 
 \[ \Gamma^\mathrm{poly}_N(p) = \frac{n+1}{n}, \qquad \sigma^\mathrm{poly}(p) = K \rho^\frac{1}{n} = K^{\frac{n}{n+1}} p^{\frac{1}{n+1}}.\]
 Clearly, for $n\geq 5$, the condition $\Gamma^\mathrm{poly}_N \leq \frac{6}{5}$
 is satisfied, while the second condition $\sigma \leq 1$ is not satisfied in the high-pressure regime. Heinzle, R\"ohr and Uggla~\cite{HRU} therefore considered equations of state that are linear for high pressures in \cite{HRU}, which leads to the analysis of so-called barotropic equations of state.
 
 In a more detailed analysis, Nilsson and Uggla~\cite{NU:polytropic}*{Section 2} explain that the spherically symmetric Einstein--Euler system with power-law polytropic equations of state $p = K \rho^{\frac{n+1}{n}}$, i.e., the system
\begin{subequations} \label{EE:poly}
 \begin{align}
  m_r &= 4 \pi r^2 \rho, \label{EE1:poly} \\
  \rho_r &= - \frac{n}{n+1} \frac{(1+K \rho^{\frac{1}{n}}) \rho^{\frac{n-1}{n}}}{r-2m} \left( 4 \pi r^2 \rho^{\frac{n+1}{n}} + \frac{m}{rK}\right), \label{EE2:poly}
 \end{align}
\end{subequations}
 yields finite radius solutions if $0 < n < 5$ and if the central density $\rho_0$ is small. See also \cite{RS:static}*{Theorem 4} where a result also for generalized power-law polytropic equations of state was obtained for $1<n<5$. In the case of $0<n\leq 3$, there exists a global sink $P_2$ where all orbits end (see also \cite{Mak:static}*{Theorem 1}). If $3<n<5$, then the majority of orbits still end at $P_2$, but orbits ending at $P_1$ (which have finite masses but infinite radii) and $P_4$ (which have infinite masses and radii) also occur. When $n \gtrsim 3.339$ it was shown numerically that there is at least one solution ending at $P_1$. At $n\approx 3.357$ and $n \approx 4.414$ Nilsson and Uggla obtained solutions with infinite masses and radii corresponding to $P_4$. The dynamical behavior turns out to be quite complicated and is not yet fully understood from an analytical point of view. For more details, see \cite{NU:polytropic}*{Sections 2.5--2.7}. For $n\geq 5$ all relativistic regular models have infinite radii and masses, and spiral around the Tolman orbit $P_4$, which is associated with a special nonregular Newtonian solution that is not known in exact form.

\subsection{The initial value problem}\label{sec:ivp-polytropic}

 In what follows, we mainly restrict our attention to the power-law polytropic equation of state~\eqref{polytropicEOS} with polytropic index $n>5$. Fix $K \in (0,1]$ and a central density $\rho_0>0$. By \cite{RS:static}*{Theorem 2} exists a unique, smooth and positive global solution $(m,\rho)$ of \eqref{EE:poly} such that
\[
 \lim_{r\to 0} m(r) = 0, \qquad \lim_{r\to0} \rho(r) = \rho_0.
\]
The solution has infinite extent due to our discussion in Section~\ref{ssec:extent}, with $\rho(r) \to 0$ as $r \to \infty$. The asymptotic behavior as $r \to 0$ is given by the Taylor series expansion which reads
\begin{align*}
  m(r) &= \frac{4\pi}{3} \rho_0 r^3  + O(r^ 5),\\
 \rho(r) &= \rho_0 - 2\pi \frac{n}{n+1} \frac{(1+K\rho_0^{\frac{1}{n}})(1+3K\rho_0^{\frac{1}{n}})}{K\rho_0^{\frac{1}{n}}} \rho_0^2 r^2 + O(r^4) ,
\end{align*}
 due to l'H\^{o}pital's rule.

\subsection{The asymptotic behavior of solutions}

 Due to the dynamical systems analysis \cite{NU:polytropic} we know that, asymptotically for $r \to \infty$, some solutions of \eqref{EE:poly} with $3<n<5$ and all solutions with $n\geq 5$ converge to a fixed point $P_4$ (which corresponds to $B_4$ in \cite{HRU}). In terms of $m$ and $\rho$, this provides us with some control on the asymptotic behavior of (some) solutions. We are primarily interested in polytropes with index $n>5$, but some results also hold for the (unstable) infinite solutions with index $3 < n< 5$.

\subsubsection{The dynamical system and its fixed points}\label{ss:polysys}
 The formulation of Nilsson and Uggla~\cite{NU:polytropic} for the Einstein--Euler system  in the case of the power-law polytropic equation of state makes use of the transformation
\begin{align}\label{UV:m}
 U = \frac{4 \pi r^2 \rho}{4 \pi r^2 \rho + \frac{m}{r}}, \quad V = \frac{\frac{m}{r}}{K\rho^{\frac{1}{n}}+\frac{m}{r}}, \quad y = \frac{K \rho^{\frac{1}{n}}}{K \rho^{\frac{1}{n}} + 1}.
\end{align}
which leads to the dynamical system
\begin{subequations} \label{UV}
\begin{align}
 \dot U &= U (1-U) [(1-y)(3-4U)F - \frac{n}{n+1} G], \label{UV1} \\
 \dot V &= V (1-V) [(1-y)(2U-1)F + \frac{1}{n+1} G], \label{UV2} \\
 \dot y &= - \frac{1}{n+1} y (1-y) G.
\end{align}
\end{subequations}
 with
\[
 F = (1-V)(1-y)-2yV, \qquad G = V[(1-U)(1-y)+yU]. 
\]
 The differentiation in \eqref{UV} is with respect to a new independent variable (indirectly related to the geometry) and $r^2$ is given by
 \[
 r^2 = \frac{K^n}{4\pi} \frac{U}{1-U}\frac{V}{1-V} \left( \frac{1-y}{y} \right)^{n-1}.  
 \]
 According to the numerical analysis in \cite{NU:polytropic}*{Section 2}, if $3<n\lesssim 3.339$, all regular orbits end at the fixed point $P_2$ (which is the only hyperbolic sink in this case and leads to solutions with finite masses and radii). For $n \gtrsim 3.339$ isolated orbits also end at the equilibrium points $P_1$ (solutions with finite masses but infinite radii) and $P_4$ (solutions with infinite masses and infinite radii). The latter fixed point $P_4$ is given by
\begin{align}\label{UV:n}
 U_4 = \frac{n-3}{2(n-2)}, \quad V_4 = \frac{2(n+1)}{3n+1}, \quad y_4 = 0, \qquad \text{for }n>3,
\end{align}
 with eigenvalues\footnote{Note that the term $\frac{7n^2}{2}-11n-\frac{1}{2}$ under $\sqrt{}$ is nonnegative only for $n \geq 3.18767$. Otherwise the eigenvalues are, in fact, real.}
\begin{align}
 \lambda_\pm &= - \frac{(n-1)(n-5)}{4(n-2)(1+3n)} \pm i \frac{n-1}{4(n-2)(1+3n)} \sqrt{\frac{7n^2}{2}-11n-\frac{1}{2}}, \label{lambda2} \\
 \lambda_0 &= - \frac{n-1}{(n-2)(1+3n)}. \nonumber
\end{align}
 We see that $- \frac{1}{12} \leq \Re \lambda_\pm = -\frac{(n-1)(n-5)}{4(n-2)(1+3n)} < 0$ if and only if $n>5$, in which case $P_4$ becomes a hyperbolic sink and all orbits end at $P_4$, leading to solutions with infinite radii and masses \cite{NU:polytropic}.

\subsubsection{Asymptotic stability for solutions of \eqref{EE:poly} with polytropic index $n>5$}

 The above properties and the Hartman--Grobman theorem (see, for example, \cite{R}*{Theorem 5.3}) imply that the behavior of the dynamical system~\eqref{UV:n} near the fixed point $P_4$ is qualitatively given by its linearization. The flow of \eqref{UV:n} is $C^1$-conjugate to the affine flow $s \mapsto P_4 + e^{A_4 s}(x-P_4)$, where the linearization around $P_4$ is given by
 \[
  A_4 
  = \begin{pmatrix}
-\frac{(n-3) (n-1)}{
   2 (n-2) (1 + 3 n)} & -\frac{(n-3) (n-1) n (1 + 3 n)}{
   8 (n-2)^3 (1 + n)} & - \frac{3(n-3)(n-1)n^2}{2(n-2)^3(1+3n)} \\ 
   \frac{4 (n-2) (n-1) (n+1)}{(1 + 3 n)^3} & \frac{n-1}{(n-2) (1 + 3 n)} & \frac{12(n-1)(n+1)}{(n-2)(1+3n)^3} \\
 0 & 0 & -\frac{n-1}{(n-2) (1 + 3 n)}
                    \end{pmatrix}.
 \]
 As observed in \cite{NU:polytropic}, on the subset $\{y=0\}$ the relativistic equations~\eqref{UV1}--\eqref{UV2} and the corresponding two Newtonian equations and coincide.  Since, however, we cannot directly relate the new independent variable to $r$, we cannot compute a convergence rate of $m$ and $\rho$ as $r \to \infty$. We merely compute the leading order term based on the results of Nilsson and Uggla~\cite{NU:polytropic}.
 
 \begin{proposition}[Asymptotic stability in terms of $(m,\rho)$]\label{cor:asymp-poly}
 Fix $K \in (0,1]$ and $n > 5$. The asymptotic behavior of solutions $(m,\rho)$ to the system~\eqref{EE:poly} with initial data $\rho_0>0$ is
\begin{align*}
 m(r) &= 
\sqrt[n-1]{\frac{2^{n-2} K^{n}}{\pi} \frac{(n-3)(n+1)^n}{(n-1)^{n+1}}} r^{\frac{n-3}{n-1}} + o\big(r^{\frac{n-3}{n-1}}\big), \\
 \rho(r) &= \sqrt[n-1]{\frac{K^n}{2^n\pi^n} \frac{(n+1)^n(n-3)^n}{(n-1)^{2n}}} r^{-\frac{2n}{n-1}} + o\big(r^{-\frac{2n}{n-1}}\big),
\end{align*}
as $r\to\infty$.
 \end{proposition}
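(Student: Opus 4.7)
The plan is to convert the convergence of the Nilsson--Uggla variables $(U,V,y)$ to the fixed point $P_4 = (U_4, V_4, 0)$ into asymptotic expansions of $m$ and $\rho$ in the physical radial variable $r$. Since $n>5$ makes $P_4$ a hyperbolic sink with $\Re \lambda_\pm < 0$ and $\lambda_0 < 0$ (see \eqref{lambda2} and Section~\ref{ss:polysys}), standard linear stability together with the global statement of \cite{NU:polytropic} ensures that every regular orbit of \eqref{UV} satisfies $U = U_4 + o(1)$, $V = V_4 + o(1)$ and $y = o(1)$ as its independent parameter tends to $\infty$.

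First, I would exploit the explicit identity
\begin{align*}
 r^2 = \frac{K^n}{4\pi} \frac{U}{1-U} \frac{V}{1-V} \left( \frac{1-y}{y} \right)^{n-1}
\end{align*}
recorded in Section~\ref{ss:polysys}. Evaluating the limiting ratios $U/(1-U) \to (n-3)/(n-1)$ and $V/(1-V) \to 2(n+1)/(n-1)$ at $P_4$ produces
\begin{align*}
 r^2 = \frac{K^n (n-3)(n+1)}{2\pi (n-1)^2} \, y^{-(n-1)} (1+o(1)),
\end{align*}
which both confirms that $r \to \infty$ along every such orbit and, upon inversion, gives the leading-order asymptotic
\begin{align*}
 y(r) = \left( \frac{K^n (n-3)(n+1)}{2\pi(n-1)^2} \right)^{\frac{1}{n-1}} r^{-\frac{2}{n-1}} (1+o(1)).
\end{align*}

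Next, solving the defining relation $y = K\rho^{1/n}/(K\rho^{1/n}+1)$ for $\rho$ gives $\rho = y^n / (K(1-y))^n = K^{-n} y^n (1+o(1))$, and substituting the expansion for $y$ yields the stated asymptotic for $\rho$ after collecting powers of $K$ under the $(n-1)$-th root. The asymptotic for $m$ then follows from the identity $m/r = 4\pi r^2 \rho \cdot (1-U)/U$ and the limit $(1-U_4)/U_4 = (n-1)/(n-3)$, which yield
\begin{align*}
 m(r) = \frac{n-1}{n-3} \cdot 4\pi r^3 \rho(r) \, (1+o(1));
\end{align*}
the resulting exponent collapses via $3 - 2n/(n-1) = (n-3)/(n-1)$, and the scalar constant equals $\sqrt[n-1]{2^{n-2} K^n (n-3)(n+1)^n / (\pi (n-1)^{n+1})}$ after combining the prefactor $4\pi(n-1)/(n-3)$ with the radical expression for $\rho$.

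The principal obstacle is that the independent parameter in system~\eqref{UV} is not $r$ but an auxiliary variable only implicitly related to the geometry, so rates of convergence in the dynamical-systems time do not translate directly into rates in $r$. This is precisely the reason the proposition is stated only at leading order with $o(\cdot)$ error; the inversion $r \mapsto y(r)$ used above sidesteps this difficulty entirely, since the identity for $r^2(U,V,y)$ is explicit and, asymptotically, $y$ is a monotone function of $r$ which can be solved for directly without recourse to the auxiliary time.
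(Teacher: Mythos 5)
Your proposal is correct and follows essentially the same route as the paper: both arguments rely on the Nilsson--Uggla convergence $(U,V,y)\to P_4$ and then extract the leading-order behavior of $\rho$ and $m$ purely from the algebraic relations \eqref{UV:m}, arriving at the same constants. The only cosmetic difference is that you first invert the identity for $r^2$ to obtain $y(r)$ and then recover $\rho$ from $y$, whereas the paper reads off $\rho^{\frac{n-1}{n}} = \frac{K}{4\pi r^2}\frac{U}{1-U}\frac{V}{1-V}$ directly; these are equivalent rearrangements of the same defining relations.
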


 \begin{proof}
  This follows directly from the analysis of the dynamical system \eqref{UV} in \cite{NU:polytropic}, since in the case of $n>5$ the variables $(U,V,y)$ converge to $(U_4,V_4,y_4)$ given in \eqref{UV:n}.
  By \eqref{UV:m} we deduce that
\begin{align*}
 \rho^{\frac{n-1}{n}} &= \frac{K}{4\pi r^2} \frac{U}{1-U} \frac{V}{1-V},
\end{align*}
 hence, the leading order term of $\rho$ is
 \begin{align*}
 \rho_4 = \left( \frac{K}{2\pi} \frac{(n+1)(n-3)}{(n-1)^2} \right)^{\frac{n}{n-1}} r^{-\frac{2n}{n-1}}. 
\end{align*}
 Similarly, \eqref{UV:m} implies that
\begin{align*}
 m = 4\pi r^3 \frac{U-1}{U}
\end{align*}
 with leading order term
\begin{align*}
 m_4 &= \frac{2(n+1)K}{n-1}  \left( \frac{K}{2\pi} \frac{(n+1)(n-3)}{(n-1)^2} \right)^{\frac{1}{n-1}} r^{-\frac{2}{n-1}+1} 
       = \left( \frac{2^{n-2} K^n}{\pi} \frac{(n+1)^n(n-3)}{(n-1)^{n+1}} \right)^{\frac{1}{n-1}} r^{\frac{n-3}{n-1}}. 
       \qedhere
\end{align*}
\end{proof}

\begin{remark}
 Be aware that, unlike in the case of the linear equation of state (cf.\ Section~\ref{ss:singsol}), the leading order terms of the variables $(m,\rho)$, i.e.,
\begin{align*}
 `` m_4(r)\textquotedblright &=  \sqrt[n-1]{\frac{2^{n-2} K^{n}}{\pi} \frac{(n-3)(n+1)^n}{(n-1)^{n+1}}} r^{\frac{n-3}{n-1}},\\
 ``\rho_4 (r)\textquotedblright &= \sqrt[n-1]{\frac{K^n}{2^n\pi^n} \frac{(n+1)^n(n-3)^n}{(n-1)^{2n}}} r^{-\frac{2n}{n-1}},
\end{align*}
 do \emph{not} yield a (singular) solution to the original system~\eqref{EE:poly} itself, but merely represents the asymptotic behavior of regular solutions as $r \to \infty$. Due to the third component of $P_4$ only the behavior near $y_4=0$, i.e., at $r = \infty$, is described. One cannot apply Proposition~\ref{cor:asymp-poly} to the limit $n \to \infty$ (corresponding to the linear equation of state) since then $y = \frac{p}{p+\rho} = \frac{K}{1+K} = \textrm{const.} > 0$.
\end{remark}

\subsection{Scaled quasi-asymptotic flatness and ADM$\alpha\beta$ mass}

 In Proposition~\ref{cor:asymp-poly} we have observed that, for static fluids with polytropic index $n>5$, the mass function behaves like $m(r) \sim C r^{\frac{n-3}{n-1}}$ for some constant $C = C(n,K)$ as $r \to \infty$. The expression \eqref{ADMss} would therefore yield an ADM mass
\[
 m_\textrm{ADM}(g) =  \lim_{r\to\infty} m(r) = \infty.
\]
 However, although
\[
 \lim_{r\to\infty} e^{2\lambda(r)} = \lim_{r\to\infty} \left( 1 - \frac{2m(r)}{r} \right)^{-1} = \lim_{r\to\infty} \frac{1}{1-Cr^{-\frac{2}{n-1}}} = 1,
\]
 as in the asymptotically flat situation, we are in a situation where the ADM mass is not coordinate invariant because $a(r)-1 = \left(1 - \frac{2m}{r} \right)^{-1}-1 = O(r^{-\frac{2}{n-1}})$ rather than $o(r^{-\frac{1}{2}})$.

 Is the solution quasi-asymptotically flat in the sense of Definition~\ref{quasiflat}? Because of the above, we would have to set $\alpha=0$, but then again violate condition~\eqref{lim} in Lemma~\ref{lemma:qaf-af} which also requires that $\frac{2m}{r}-\alpha = \frac{2m}{r} = o(r^{-\frac{1}{2}})$. 
 On the other hand, we immediately see from Corollary~\ref{cor:asymp-poly} that
\[
 \frac{2m}{r^{\frac{n-3}{n-1}}} \sim \sqrt[n-1]{\frac{2^{n-2} K^{n}}{\pi} \frac{(n-3)(n+1)^n}{(n-1)^{n+1}}} \qquad \text{as}~ r \to\infty. 
\]
 This nonlinear scaling in the radial direction is different from the quasi-asymptotically flat case for fluids with linear equation of state (although, as $n \to \infty$, $\frac{n-3}{n-1}$ approaches $1$). To accommodate this behavior we introduced the notion of scaled quasi-asymptotic flatness in Section~\ref{sec2}.
 It remains to verify that perfect fluids with polytropic equations of state with index $n>5$ are indeed scaled quasi-asymptotically flat in the sense of Definition~\ref{def:sqaf}. We prove our second main Theorem~\ref{mainthm2} for the polytropic equation of state.

\begin{proof}[Proof of Theorem~\ref{mainthm2}]
 We verify the conditions~\eqref{ss3} of Proposition~\ref{prop:sqaf}. By Proposition~\ref{cor:asymp-poly} the asymptotic behavior of $m$ and $\rho$ is
\begin{align*}
 m(r) &=
\sqrt[n-1]{\frac{2^{n-2} K^{n}}{\pi} \frac{(n-3)(n+1)^n}{(n-1)^{n+1}}} r^{\frac{n-3}{n-1}} + o \big( r^{\frac{n-3}{n-1}} \big), \\ 
 \rho(r) &= \sqrt[n-1]{\frac{K^n}{2^n\pi^n} \frac{(n+1)^n(n-3)^n}{(n-1)^{2n}}} r^{-\frac{2n}{n-1}} + o \big( r^{-\frac{2n}{n-1}} \big). 
\end{align*}
 Hence, the differential equations \eqref{EE2:poly} and \eqref{inteq1} imply that
\begin{align*}
 \nu'(r) &= - \frac{p'(r)}{p(r)+\rho(r)} = - \frac{K \frac{n+1}{n} \rho^{\frac{1}{n}} \rho_r}{\rho (K \rho^{\frac{1}{n}} + 1)} \\
         &= \frac{K}{r (1-\frac{2m}{r})} \left( 4 \pi r^2 \rho^{\frac{n+1}{n}} + \frac{m}{rK} \right) \\
         &= \frac{K}{r(1 - O(r^{-\frac{2}{n-1}}))} \left( O(r^{-\frac{4n}{n-1}}) + O(r^{-\frac{2}{n-1}}) \right) = O(r^{-\frac{n+1}{n-1}})
\end{align*}
 and
\begin{equation*} 
 e^{2\lambda(r)-2\Lambda} - 1 = \left( 1 - \frac{2m}{r} \right)^{-1} - 1 = \frac{1}{1-O(r^{-\frac{2}{n-1}})} - 1 =  O(r^{-\frac{2}{n-1}}).
\end{equation*}
Since $- \frac{n+1}{n-1} \leq - \frac{2}{n-1}$, the solution is AF$\alpha\beta$ with $\beta > \frac{n-5}{4}$ and $\alpha = 1-\frac{1}{(1+\beta)^2} > 1 - \left(\frac{4}{n-1}\right)^2>0$ according to Proposition~\ref{prop:sqaf}.
\end{proof}

\begin{remark}[ADM$\alpha\beta$ mass]\label{rem:ADMabpoly}
 We believe that it is possible to set $\beta = \frac{n-5}{4}$ and $1-\alpha = \left( \frac{4}{n-1} \right)^2$ in Theorem~\ref{mainthm2}. However, in view of Remark~\ref{higher}, we consider $\beta = \frac{n-3}{2}$, which is negative for all values of $n$ that definitely lead to solutions with finite extent and is greater than $1$ for solutions with definitely infinite extent. The corresponding ADM$\alpha\beta$ mass \eqref{mab} from Remark~\ref{rem:ADMalphabetass} is then
  \begin{align*}
   m_{\text{ADM}\alpha\beta} (g|_{\Sigma_\tau}) &= \frac{1}{2} \lim_{\xi \to\infty} (1+\beta) \left( (1+a_{\xi\xi}(\tau,\xi)) - 1 \right) \xi, 
  \end{align*}
 where
 \begin{align*}
  a_{\xi\xi} &= - 1 + e^{2\lambda(r)-2\Lambda} - e^{-2\Lambda} \nu'(r)^2 \tau^2 = \left( 1 - \frac{2m}{r} \right)^{-1} - 1 + O(r^{-\frac{2(n+1)}{n-1}}) \\ &= \frac{2C(n,K)}{\xi - 2C(n,K) + o(\xi^{-\frac{1}{1+\beta}})} + O(\xi^{-2})
 \end{align*}
 with leading order coefficient
 $
  C(n,K) 
 $
 of $m$. Thus
 \begin{align*}
   m_{\text{ADM}\alpha\beta} (g|_{\Sigma_\tau}) &= \frac{1}{2} \lim_{\xi \to\infty} (1+\beta) \left( \frac{2C(n,K)}{\xi - 2C(n,K) + o(\xi^{-\frac{1}{1+\beta}})} +  O(\xi^{-2}) \right) \xi \\
   &= (1+\beta)C(n,K) \\
   &= (1+\beta)\sqrt[n-1]{\frac{2^{n-2} K^{n}}{\pi} \frac{(n-3)(n+1)^n}{(n-1)^{n+1}}} \\
   &= \sqrt[n-1]{\frac{K^{n}}{2\pi} \frac{(n-3)(n+1)^n}{(n-1)^{2}}}.
  \end{align*}
\end{remark}


\section*{Acknowledgments}

 The authors would like to thank Patryk Mach for discussions on the Lane--Emden equation and Naresh Dadhich for pointing out the conformal asymptotically flat geometry. We would also like to thank Claes Uggla for correspondence regarding their dynamical systems approach, Piotr Chru{\'s}ciel for information on the asymptotically flat situation, Anna Sakovich for pointing out references related to notions of mass and Shadi Tahvildar-Zadeh for general discussions on the equations of state.

 A.B.\ acknowledges financial support during an extended research stay at the Riemann Center for Geometry and Physics at the Leibniz University Hannover, during which this project was initiated. The major part of the project was carried out at the University of Bonn and supported by the Sofja Kovalevskaja award of the Humboldt Foundation endowed by the German Federal Ministry of Education and Research (held by Roland Donninger). The authors also gratefully acknowledge financial support in connection with a research stay at the Erwin Schr\"odinger Institute in Vienna during the ''Geometry and Relativity`` program in 2017, where the project was finalized.


\end{document}